\apptocmd{\thebibliography}{\raggedright}{}{}
\theoremstyle{plain}
\newtheorem{theorem}{Theorem}[section]
\newtheorem{lemma}[theorem]{Lemma}
\newtheorem{proposition}[theorem]{Proposition}
\theoremstyle{remark}
\newtheorem{definition}[theorem]{Definition}
\newtheorem{example}[theorem]{Example}
\newtheorem{remark}[theorem]{Remark}
\newtheorem{assumption}[theorem]{Assumption}
\newcommand{\whW}{\widehat{W}}
\newcommand{\bC}{\mathbb{C}}
\newcommand{\bE}{\mathbb{E}}
\newcommand{\bN}{\mathbb{N}}
\newcommand{\bR}{\mathbb{R}}
\newcommand{\cE}{\mathcal{E}}
\newcommand{\cF}{\mathcal{F}}
\newcommand{\cI}{\mathcal{I}}
\newcommand{\cJ}{\mathcal{J}}
\newcommand{\cL}{\mathcal{L}}
\newcommand{\cW}{\mathcal{W}}
\newcommand{\fA}{\mathfrak{A}}
\newcommand{\fB}{\mathfrak{B}}
\newcommand{\fC}{\mathfrak{C}}
\newcommand{\fD}{\mathfrak{D}}
\newcommand{\fE}{\mathfrak{E}}
\newcommand{\fI}{\mathfrak{I}}
\newcommand{\fJ}{\mathfrak{J}}
\newcommand{\fL}{\mathfrak{L}}
\newcommand{\fM}{\mathfrak{M}}
\newcommand{\wI}{I}
\newcommand{\wJ}{J}
\begin{document}
\title{Weak error estimates for rough volatility models}
\date{\today}

\author{Peter K. Friz\thanks{Weierstrass Institute for Analysis and Stochastics and TU Berlin; \href{friz@math.tu-berlin.de}{friz@math.tu-berlin.de}. Supported by the Deutsche Forschungsgemeinschaft (DFG) under Germany's Excellence Strategy -- The Berlin Mathematics Research Center MATH+ (EXC-2046/1, project ID: 390685689) and also by DFG CRC/TRR 388 ``Rough Analysis, Stochastic Dynamics and Related Topics''.} \qquad William Salkeld \thanks{University of Nottingham; \href{william.salkeld@nottingham.ac.uk}{william.salkeld@nottingham.ac.uk}.
Supported by MATH+ project AA4-2, while this project was initiated, and then by the US Office of Naval Research under the Vannevar Bush Faculty Fellowship N0014-21-1-2887.
}
\qquad Thomas Wagenhofer\thanks{TU Berlin; \href{mailto:wagenhof@math.tu-berlin.de}{wagenhof@math.tu-berlin.de}.
Initially supported by MATH+, as PhD student in the Berlin Mathematical School (BMS), and then by IRTG 2544 ”Stochastic Analysis in Interaction” -  DFG project-ID 410208580.
}}

    \makeatletter
    \def\@maketitle{%
        \newpage
        \null
        \vskip 0.1em%
        \begin{center}%
            \let \footnote \thanks
            {\LARGE \@title \par}%
            \vskip 1.5em%
                {\large
            \lineskip .5em%
            \begin{tabular}[t]{c}%
                \@author
            \end{tabular}\par}%
            \vskip 1em%
                {\large \@date}%
        \end{center}%
        \par
        \vskip 1.5em}
    \makeatother

\maketitle

\begin{abstract}
We consider a class of stochastic processes with rough stochastic volatility, examples of which include the rough Bergomi and rough Stein-Stein model, that have gained considerable importance in quantitative finance. 
			
A basic question for such (non-Markovian) models concerns efficient numerical schemes. While strong rates are well understood (order $H$), we tackle here the intricate question of weak rates. Our main result asserts that the weak rate, for a reasonably large class of test function, is essentially of order $\min \{ 3H+\tfrac12, 1 \}$ where $H \in (0,1/2]$ is the Hurst parameter of the fractional Brownian motion that underlies the rough volatility process. 
			
Interestingly, the phase transition at $H=1/6$ is related to the correlation between the two driving factors, and thus gives additional meaning to a quantity already of central importance in stochastic volatility modelling.
Our results are complemented by a lower bound which show that the obtained weak rate is indeed optimal. 
\end{abstract}
\noindent
{\bf Keywords:} Rough volatility, weak error rate. 
\vspace{0.3cm}

\noindent
{\bf 2020 AMS subject classifications:}\\
Primary: 60L90, 60G22 \quad Secondary: 91G20

\section{Introduction}
	Recall some standard results from the numerics of stochastic differential equations:  Euler left-point approximations (a.k.a.\ the Euler-Maruyama scheme) with step size $1/n$, say $X^{(n)}$, converge to the limit $X$, described by an It\^{o} integral equation, with $L^2$-rate $\sqrt{ E ( | X - X^{(n)} |^2 )} = O(n^{-1/2}) $. This is known as {\em strong rate} $1/2$, which is essentially a consequence of $\sqrt{E( |W_t - W_s|^2)} = |t-s|^{1/2}$, where $W$ denotes a standard Brownian motion $W$. On the other hand, typically more relevant in practice, one has {\em weak rate} $1$ - twice the strong rate - meaning that
	\begin{equation*}
		\bE\Bigl[ \Phi \bigl( X_T \bigl) \Bigr] - \bE\Bigl[ \Phi  \bigl( X_T^{(n)} \bigl) \Bigr] = O \bigl( n^{-1} \bigl)
	\end{equation*}
	for $\Phi$ in a suitable class of test functions. This is found in many textbooks, a classical reference is \cite{Talay1990ExpansionOT}. 
	
	\medskip
	For many years now, authors have studied stochastic systems which involve a fractional Brownian motion (fBm). A key property of such a Gaussian process $\whW$ is the fractional scaling ${E( |\whW_t|^2)} \propto |t|^{2H}$. The exponent $H \in (0,1)$, known as Hurst parameter, determines the sample path roughness.
	Differential equations driven by fBm have been studied extensively by means of Young or rough integration theory, the pathwise nature of which is well-suited to a.s.\ strong rates (see e.g.\ \cite{friz2014convergence, liu2019first} where an optimal strong rate of $2H - 1/2$ is found, related to a phase transition at $H=1/4$ concerning the existence of fractional L\'evy area).
	\medskip
	
	The purpose of this article is to study weak rates for stochastic integrals of functions of fractional Brownian motion. These stochastic processes are for example rooted in quantitative finance  \cite{bayer2023rough}, \cite{Gatheral2014VolatilityIR} and of increasing popularity in both industry and academia. Mathematically, we are interested here in a (continuous) martingale $X$, interpreted as continuous asset price process, or stochastic logarithm thereof. The absolutely continuous characteristics of $X$, such as to have a well-defined stochastic variance (resp. volatility), are defined as $d \langle X \rangle / dt $ (resp. the square-root thereof), which is  assumed, on small time-scales, to exhibit fractional scaling with some Hurst parameter $H$. Throughout this paper, we focus on the ``rough'' regime $H\in (0,1/2]$, the fundamental importance of which \cite{Gatheral2014VolatilityIR} has been confirmed and (re)discussed  by a number of authors, see for example \cite{fukasawa2021volatility} and the references therein. Following  \cite{Bayer2015PricingUR,Gatheral2014VolatilityIR} a simple specification of such a process with rough volatility is given by  
	\begin{equation}
		\label{eq:RoughVolatilityModel0}
		d X_t = \sigma_t \,dB_t,
	\end{equation}
	with explicit rough volatility process 
	\begin{equation*}
		\sigma_t = f(\whW_t)\ , \quad \whW_t = \int_0^t (t-s)^{H-1/2} dW_s,
	\end{equation*}
	where $W$ and $B$ are correlated Brownian motions, more specifically for independent Brownian motions $W_t$ and $W_t^\perp$
	\begin{equation}
		\label{eq:BwithRHO}
		B_t= \rho W_t + \sqrt{1-\rho^2} {W}^\perp_t, \qquad \rho \in [-1,1].
	\end{equation}
	Here $f$ is a deterministic function, sometimes called volatility function, and one has, explicitly,
	\begin{equation}
		\label{eq:RoughVolatilityModelNew}
		X_t = X_0 + \int_0^t f\Bigl( \whW_s \Bigr) dB_s.
	\end{equation}
	Despite (or because) of its simplicity, this setting accomodates popular models including the {\em rough Stein-Stein} model (cf. \cite{jaber2020characteristic, Bayer2020weak}) and the {\em rough Bergomi} model introduced in \cite{Bayer2015PricingUR} with, respectively,  
	\begin{equation}
		\label{eq:f_needed}
		f(x) \in \{ c_1 x, c_2 \exp ( c_3 x) \}.
	\end{equation}
 The interest in studying these objects goes beyond mathematical finance.
Indeed, equation \eqref{eq:RoughVolatilityModelNew} can be viewed as model case of a stochastic
system with distinct time scales, through the simultaneous influence of Brownian and fractional Brownian noise. Including
also (nice) drift terms, which constitute a harmless perturbation from a weak
rate perspective, \eqref{eq:RoughVolatilityModelNew} embeds in multivariate stochastic differential
systems of the form
\begin{align*}
  X_t & = x_0 + \int_0^t \sigma_X (X_s, Y_s)  \hspace{0.17em} d \whW_s +
  \int_0^t \mu_X (X_s, Y_s)  \hspace{0.17em} ds,\\
  Y_t & = y_0 + \int_0^t \sigma_Y (X_s, Y_s)  \hspace{0.17em} dW_s + \int_0^t
  \mu_Y  (X_s, Y_s)  \hspace{0.17em} ds
\end{align*}
in dimension $d_X + d_Y$, with multidimensional fractional resp. classical
Brownian noise, with prescribed correlation structure. In dimension $1 + 1$,
take $\sigma_X \equiv 1, \mu_X \equiv 0, \sigma_Y (x, y) = \sigma (x)$ and
$\rho$-correlated $W$ and $\widehat{W}$. The question of weak rates for such
systems is largely open; our work can seen as a precise contribution to this problem. It also underlines the subtlety of the problem, for instance with regards to the correlation structure.

We note that such equations have been studied recently by a number of authors
in the context of fast-slow systems \cite{BourguinGailusSpiliopoulos2021SlowFast,HairerLi2022GeneratingDiffusions,HongLiLiu2022SlowFastMKV,LiSieber2022AveragingFbm,PeiInahamaXuFastSlowRoughPath,RocknerXie2021AveragingPrinciple} \ in which case all
coefficient fields scale with a homogenization parameter $\varepsilon$. It
would of course be interesting to explore the interplay of $\varepsilon$ with the Euler step-size parameter $n$, but this is very much beyond the scope of the present work.
	Throughout, we consider the standard left-point approximation of \eqref{eq:RoughVolatilityModelNew}, that is 
	\begin{align}
		\label{eq:RoughVolatilityModel-Discrete}
		X_t^{(n)} - X_0 &= \int_0^t f\Bigl( \whW_{\eta(s)} \Bigr) dB_s = \bigg( \sum_{i=0}^{\lfloor nt\rfloor -1} f\Bigl( \whW_{\tfrac{i}{n}} \Bigr) B_{\tfrac{i}{n}, \tfrac{i+1}{n}} \bigg) + f\Bigl( \whW_{\eta(t)} \Bigr) B_{\eta(t), t},
	\end{align}
    {where $\eta(s) = \tfrac{\lfloor n s\rfloor}{n}$ and $B_{s,t}=B_t-B_s$.
	One easily checks, for reasonable $f$, that $X^n \to X$ with strong rate $H$, uniformly on compacts in time. 
    We draw the readers attention to \cite{Nourdin2010CLTforSkorokhod} Section 5.1 where the authors consider similar summations of smooth functions of a fractional Brownian motion multiplied by increments of a Brownian motion converging stably to an appropriate stochastic integral.
    For rate $H$ close to zero, as suggested in most works on rough volatility, numerical simulation of such models may appear to be difficult. However, in many situations, including option pricing under such models, the weak rate matters. 
	The naive guess of weak rate $2H$, twice the strong rate, is not supported by numerical simulations - after all, option pricing under rough volatility works numerically surprisingly well.
	
	A second guess for the weak rate might then be $H+1/2$, taking into account the ``mixed'' appearance of $\whW$ and $B$, with their respective scaling exponents, in \eqref{eq:RoughVolatilityModelNew}. 
	Surprisingly perhaps, at least before Gassiat's work \cite{Gassiat2022Weak}, both guesses are wrong.
	In presence of correlation arising from \eqref{eq:BwithRHO}, the correct weak rate turns out to be
    $$
    \min (3H+1/2,1). 
    $$
    We establish this rate of convergence for a reasonably generic class of test functions $\Phi$ and volatility functions $f$, together with a lower bound (Section \ref{sec:LB}) that implies optimality. When $\rho$ is taken to be equal to 0, the weak rate improves to $1$. Either way, the rate always stays above $1/2$, which is particularly useful when dealing with Hurst parameter close to $0$. We have
	
	\begin{theorem}  
		\label{Thm_Rate}
		Let $X$ and $X^n$ be given by \eqref{eq:RoughVolatilityModelNew} and \eqref{eq:RoughVolatilityModel-Discrete}, respectively, with $f \in \mathcal{C}^{N}$, $N \in \mathbb{N}$, such that $f$ and its $N$ derivatives have at most exponential growth with constants $C_f'$ and $C_f$, uniformly among $f$ and its $N$ derivatives. 
		\begin{enumerate}[label=(\roman*)]
		    \item 
		    \label{Thm1i}
		    For any polynomial test function $\Phi$ with $\deg (\Phi) \le N$, there is a constant $C_N$ depending only on the coefficients of the polynomial such that, as $n\to \infty$,
    		\begin{equation*}
    			\bE\Bigl[ \Phi \bigl( X_T \bigl) \Bigr] - \bE\Bigl[ \Phi  \bigl( X_T^{(n)} \bigl) \Bigr] 
    			\le
    			\begin{cases}
    				C_N n^{-3H-1/2} \vee n^{-1} & \text{for } H \not = 1/6,
    				\\
    				C_N n^{-1}\log(n) & \text{for } H  = 1/6.
    			\end{cases}
    		\end{equation*}
    		\item 
    		\label{Thm1ii}
    		In the uncorrelated case, with $\rho = 0$ in \eqref{eq:BwithRHO}, we have
    		\begin{equation*}
    			\bE\Bigl[ \Phi \bigl( X_T \bigl) \Bigr] - \bE\Bigl[ \Phi  \bigl( X_T^{(n)} \bigl) \Bigr] 
    			\lesssim 
    			n^{-1}  \qquad \text{any } H>0.
    		\end{equation*}
    	\end{enumerate}
	\end{theorem}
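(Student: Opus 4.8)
The plan is to reduce, by conditioning, to a question about Gaussian families, and then to localise the error over the mesh. First, by linearity it suffices to treat monomials $\Phi(x)=x^p$ with $p\le N$. Conditionally on $\sigma(W)$, both $X_T$ and $X_T^{(n)}$ are Gaussian: from \eqref{eq:BwithRHO} and \eqref{eq:RoughVolatilityModelNew}, $X_T\mid\sigma(W)\sim\mathcal N(\mu,\Sigma)$ with
\[
\mu=X_0+\rho\int_0^T f(\whW_s)\,dW_s,\qquad \Sigma=(1-\rho^2)\int_0^T f(\whW_s)^2\,ds,
\]
and $X_T^{(n)}\mid\sigma(W)\sim\mathcal N(\mu_n,\Sigma_n)$ with $\mu_n,\Sigma_n$ the analogous expressions in which $\whW_s$ is replaced by $\whW_{\eta(s)}$. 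Hence $\bE[\Phi(X_T)]=\bE[u(\Sigma,\mu)]$, where $u(t,x):=\bE[\Phi(x+\sqrt t\,Z)]$ ($Z\sim\mathcal N(0,1)$) solves the heat equation $\partial_t u=\tfrac12\partial_x^2 u$ with $u(0,\cdot)=\Phi$ and is, since $\Phi$ is a polynomial of degree $\le N$, itself a polynomial in $(t,x)$. Therefore
\[
\bE[\Phi(X_T)]-\bE[\Phi(X_T^{(n)})]=\bE\big[u(\Sigma,\mu)-u(\Sigma_n,\mu_n)\big],
\]
and the task is to quantify how the two random parameter pairs differ.

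Next I would telescope over the mesh. Let $(\mu^{(i)},\Sigma^{(i)})$ be the hybrid parameters obtained by using the frozen coefficient $f(\whW_{\eta(s)})$ on $[0,i/n]$ and the exact coefficient $f(\whW_s)$ on $[i/n,T]$, so that $(\mu^{(0)},\Sigma^{(0)})=(\mu,\Sigma)$, $(\mu^{(n)},\Sigma^{(n)})=(\mu_n,\Sigma_n)$, and successive hybrids differ only through the single interval $[i/n,(i+1)/n]$:
\[
\mu^{(i)}-\mu^{(i+1)}=\rho\,\Delta\mu_i,\qquad \Delta\mu_i:=\int_{i/n}^{(i+1)/n}\!\!\big(f(\whW_s)-f(\whW_{i/n})\big)\,dW_s,
\]
\[
\Sigma^{(i)}-\Sigma^{(i+1)}=(1-\rho^2)\,\Delta\Sigma_i,\qquad \Delta\Sigma_i:=\int_{i/n}^{(i+1)/n}\!\!\big(f(\whW_s)^2-f(\whW_{i/n})^2\big)\,ds.
\]
From $\|\whW_s-\whW_{\eta(s)}\|_{L^q}\lesssim n^{-H}$ uniformly in $s$, the at most exponential growth of $f$, and $L^q$-bounds on $\mu^{(i)},\Sigma^{(i)}$ that are uniform in $i,n$ (a consequence of Gaussian concentration and Burkholder--Davis--Gundy), one gets $\|\Delta\mu_i\|_{L^q}\lesssim n^{-1/2-H}$ and $\|\Delta\Sigma_i\|_{L^q}\lesssim n^{-1-H}$. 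Expanding $u(\Sigma^{(i)},\mu^{(i)})$ in a (finite, since $u$ is polynomial) Taylor series around $(\Sigma^{(i+1)},\mu^{(i+1)})$ and summing over $i$, all terms in which $\Delta\mu_i$ occurs to a power $\ge 3$, and all those carrying $(\Delta\Sigma_i)^b$ with $b\ge 2$, are already acceptable; the borderline one,
\[
\tfrac{\rho^3}{6}\sum_{i}\bE\big[\partial_x^3 u(\Sigma^{(i+1)},\mu^{(i+1)})\,(\Delta\mu_i)^3\big],
\]
is bounded in modulus by $C\,n\cdot n^{-3(1/2+H)}=C\,n^{-1/2-3H}$, and everything of higher order is smaller. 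Together with the $O(n^{-1})$ obtained below for the low-order terms this yields the bound $n^{-3H-1/2}\vee n^{-1}$; this cubic term is precisely what makes the exponent $3H+\tfrac12$ appear and sets the phase transition at $3H+\tfrac12=1$, i.e. $H=\tfrac16$.

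The heart of the matter is to show that the remaining, low-order Taylor terms — those with coefficients $\Delta\mu_i$, $(\Delta\mu_i)^2$, $\Delta\Sigma_i$ and $\Delta\mu_i\Delta\Sigma_i$ — which a priori sum only to sizes of order $n^{-H}$ or $n^{-2H}$, actually contribute at most $O(n^{-1})$. The tools I would use are: (i) the heat-equation identity $\partial_t u=\tfrac12\partial_x^2 u$, which allows the $\partial_t u$-term carrying $\Delta\Sigma_i$ to be combined with the $\tfrac12\partial_x^2 u$-term carrying $(\Delta\mu_i)^2$; (ii) conditioning on the left-endpoint filtrations $\mathcal F_{i/n}:=\sigma(W_r:\,r\le i/n)$, together with the conditional It\^o isometry, to replace $(\Delta\mu_i)^2$ by its conditional mean and split off a genuinely-martingale remainder; and (iii) repeated Gaussian integration by parts (Stein's lemma / the Malliavin calculus of $W$, legitimate because $\whW$ is Gaussian and $u$ polynomial), each use of which trades a stochastic integral against $dW$ for a smoother object and, ultimately, reduces every contribution to a \emph{deterministic} expression built from the covariance function of $\whW$. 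In those expressions the only genuinely small quantities are the variance and cross-variance increments $\mathrm{Var}(\whW_s)-\mathrm{Var}(\whW_{\eta(s)})$ and $\mathrm{Cov}(\whW_s,\whW_t)-\mathrm{Cov}(\whW_{\eta(s)},\whW_t)$; the potentially dangerous $O(n^{-2H})$ ``decorrelation'' pieces must, and do, cancel exactly across the combinations prescribed by (i)--(ii). The decisive analytic input is that $\mathrm{Cov}(\whW_s,\whW_t)$ is smooth off the diagonal with singularities mild enough (because $H>0$) that the resulting Riemann-sum errors over the uniform mesh are $O(n^{-1})$, the contribution near $s=0$ being absorbed by $\int_0^T u^{2H-1}\,du<\infty$; the logarithmic factor at $H=\tfrac16$ comes from a double sum over pairs of mesh intervals whose kernel is of exact order $1/|i-j|$ precisely at that value. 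For part \ref{Thm1ii}, $\rho=0$ forces $\mu=\mu_n=X_0$, so only the $\Sigma$-dependence survives, every term containing a $\Delta\mu_i$ disappears — in particular the cubic term responsible for the $3H+\tfrac12$ rate — and one is left with the $\Delta\Sigma_i$ terms, all $O(n^{-1})$ by the argument just described.

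The main obstacle is exactly this last step: arranging the low-order terms and their integration-by-parts descendants so that all $O(n^{-2H})$ contributions provably cancel and only integrable Riemann-sum errors survive, while keeping the polynomial moments of $u$ and its derivatives, evaluated at the hybrid parameters, under control uniformly in $i$ and $n$. The reduction to the Gaussian picture, the telescoping, and the estimation of the higher-order terms are by comparison routine.
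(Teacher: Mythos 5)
Your conditional-Gaussian framework (condition on $\sigma(W)$, write $\bE[\Phi(X_T)]=\bE[u(\Sigma,\mu)]$ with $u$ the heat semigroup applied to $\Phi$, then telescope over the mesh and Taylor-expand) is a genuinely different entry point from the paper, which instead iterates It\^o and Clark--Ocone to produce an exact moment formula (Theorem~\ref{Thm_MomRep}) and then compares it term by term with the discrete analogue. Your reduction, the telescoping, the higher-order estimates and the identification of the cubic $(\Delta\mu_i)^3$ term as the source of the exponent $3H+\tfrac12$ are all sound, and part~\ref{Thm1ii} essentially comes out correct along these lines because $\rho=0$ kills the mean.

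However, the heart of part~\ref{Thm1i} is exactly where you stop, and I do not think the mechanism you sketch closes the gap. First, the heat-equation identity only recombines $\partial_t u\cdot(1-\rho^2)\Delta\Sigma_i+\tfrac12\partial_x^2 u\cdot\rho^2(\Delta\mu_i)^2$ into a single $\tfrac12\partial_x^2 u$-term whose inner bracket $(1-\rho^2)\Delta\Sigma_i+\rho^2(\Delta\mu_i)^2$ is still $O(n^{-1-2H})$ per step, hence $O(n^{-2H})$ in total; the cancellation that actually makes the second-order sum $O(n^{-1})$ is the algebraic interaction with the \emph{linear} term $\partial_x u\cdot\rho\Delta\mu_i$ (already in the $\rho=1$, $\Phi(x)=x^2$ case one needs $2\mu^{(i+1)}\Delta\mu_i+(\Delta\mu_i)^2$ to telescope to $\bE\!\int(f^2(\whW_s)-f^2(\whW_{i/n}))\,ds$), together with the $C^1$ (not merely $2H$-H\"older) regularity of $t\mapsto\bE[g(\whW_t)]$ away from $t=0$ (Lemma~\ref{Lem_DerivSigma2} in the paper). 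None of this is carried out. Second, and more seriously, you list $\Delta\mu_i\Delta\Sigma_i$ among the low-order terms to be shown $O(n^{-1})$, but its a~priori size after one integration by parts is $\sum_i n^{-1}\cdot n^{-H-1/2}\cdot n^{-H}\sim n^{-1/2-2H}$, which for small $H$ is \emph{larger} than both $n^{-1}$ and $n^{-3H-1/2}$; this term therefore requires a nontrivial additional cancellation that your outline does not even flag. Filling these gaps requires, in one form or another, the fine kernel analysis that occupies Lemmas~\ref{Lem_VolFunc} and~\ref{Lem_DiffKernel} of the paper, and your own integration-by-parts tree would regenerate essentially the same iterated singular integrals. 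So the proposal is a reasonable alternative scaffold, but the core analytic lemma is missing and the asserted cancellation mechanism is insufficient as stated.
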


    \begin{remark}\label{Rem_Rate}
        The proof of Theorem \ref{Thm_Rate} yields that $C_N$ can be chosen as
      \begin{equation*}
          C_N=
          C
          8^N N^3\bigl(N!\bigr)^2 \bigl(C'_f\bigr)^{2N}
          \exp\Big({2(C_f \cdot N)^2} \cdot \frac{T^{2H+1}}{(2H+1/2)^2} \Big) 
      \end{equation*}
      for a constant $C$ that does not depend on $f,N,T$ or $h$.
    \end{remark}

	The proof is surprisingly involved and seems to be far from classically used methods. The basic idea is an exact expansion, based on an iteration scheme, which terminates for polynomial test functions, using a mixture of It\^{o} and Malliavin calculus, leading to complicated iterated integrals that require both algebraic and analytic understanding. The left-point approximation admits a similar expansion, but with suitably modified kernel; the rate is then obtained from a term-by-term comparison. The origin of the rate $3H+1/2$, unintuitive at first, can be traced to the estimation of
	integrals of the form $\int_{t-2/n}^t \bigl(\varphi(s)-\varphi(\eta(s))\bigr) (t-s)^{H-1/2} \, ds$, which in fact appear in different places, e.g. \eqref{Eq_FormRate1} or \eqref{Eq_FormRate2}, where $\varphi$ is a $2H$-H\"older regular function, that arises from the expectation of fractional Brownian functionals.
	Optimality is settled by a lower bound in the case of $f(x)=x, \Phi(x) = x^3, \rho = \pm 1$. In case of $\Phi(x)=x^2$ one always has weak rate $1$, for any $H>0$, as one can see using It\^{o}'s isometry. In \cite{Bayer2020weak} this elegant observation is attributed to A. Neuenkirch. 
	
	\medskip 
	
	Over the last years, several authors \cite{Bayer2022weak, Bayer2020weak, Gassiat2022Weak} have studied this problem, albeit only in the case $\rho = 1$ in which case \eqref{eq:RoughVolatilityModelNew} simplifies to $d X =  f (\widehat W) d W$. 
	In \cite{Bayer2020weak}, the authors employ Markovian approximations to fBM through a family of Ornstein-Uhlenbeck processes, enabling them to recycle results from a diffusions setting. They are able to pass to the limit and obtain, for linear volatility function $f(x)=x$ and sufficiently nice (bounded) test functions, 
	a weak rate of (at least) $H+1/2$. 
	This is further improved by \cite{Bayer2022weak} with Malliavin calculus methods, yielding the same weak rate, still for linear $f(x)=x$, but now allowing polynomial growth test functions, further improved by \cite{Gassiat2022Weak}. Much less is know for non-linear $f$, despite a clear need in applications such as in the rough Bergomi model, cf. \eqref{eq:f_needed}. The only available result, to our knowledge, is due to \cite{Gassiat2022Weak} where the same rate is obtained as in Theorem \ref{Thm_Rate}, part \ref{Thm1i}, but only for cubic test functions $\Phi(x) =x^3$, generic $f$ and $\rho =1$.
    
    In the special case of fractional Brownian motion with Hurst parameter $H=1/6$, the question of defining the Stratonovich stochastic integral was studied in \cite{Nourdin2010weakStratonovic}, where key convergence results rely on the choice of $H$. In a recent preprint \cite{Bonesini2023WeakRates} the authors give a weak rate of $H+1/2$ for a relatively broad class of test functions.

	This paper addresses the case of general polynomial test functions, generic $f$ and arbitrary $\rho \in [-1,1]$. As we illustrate throughout with Examples \ref{example:Gassiat-X^4}, \ref{Ex_MomRep2} and Subsection \ref{subsec:4-Examples}, already the case $\Phi(x) =x^4$ leads to complicated iterated singular integrals. Part of the work is thus to find an efficient algebraic representation, followed by a fine analysis (term-by-term comparison) that ultimately leads to the weak rate. Our moment formula for $\mathbb{E}(X_T^N)$ (cf. Theorem \ref{Thm_MomRep}) has some similarity (in spirit) with a moment formula of \cite[Sec 4.2]{HairerLi2022GeneratingDiffusions}, in the context of fractional fast-slow systems. Another intriguing remark is that we deal with
	multivariate fractional correlation functions that are reminiscent of expressions seen in quantum field theory and renormalization theory of singular SPDEs. On a detailed technical level, after circulating a draft version of this paper, Ajay Chandra kindly pointed us to Chapter 9 of \cite{glimm2012quantum} where formula (9.1.33) therein can be seen as variation of Lemma \ref{Lem_DerivSigma}, providing unexpected connection between the mathematics of quantum field theory (QFT) and the numerical analysis of rough volatility models.
    Our referee kindly pointed out that the proof of Lemma \ref{Lem_DerivSigma} and Lemma \ref{Lem_DerivSigma2} can be found in \cite[Theorem 2.1]{Hirsch2011ItoforGauss}. Additionally, Felix Otto pointed us to \cite{ClozenJosienOttoXu2023Bias}, as well as \cite{Plackett1954ReductionFormula} and \cite{Price1958GaussianInputs} which also include similar results.

	\subsection{Notations and other preliminaries} 
	
	Let $T>0$ so that $[0,T]$ is a closed, bounded interval of the real line. Let 
	\begin{equation}
		\label{eq:Simplex}
		\Delta_m^\circ = \Big\{ (t_1, ..., t_m) \in [0,T]^{\times m}: 0<t_m<t_{m-1}<\dots<t_1<T \}. 
	\end{equation}
	We will often denote $\mathbf{t} = (t_1, ..., t_m) \in \Delta_m^{\circ}$. For any $m \in \mathbb{N}$ and some function $f:\Delta_m^\circ \to \mathbb{R}$, we extend the definition $f: [0,T]^{\times m} \to \bR$ by
	\begin{equation*}
		f(\mathbf{t}) = 0  \quad \text{ for } \mathbf{t}=(t_1,\dots,t_m)\not \in \Delta_m^\circ.
	\end{equation*}
	
	Throughout this paper, we denote the \emph{Liouville kernel} $K: \Delta_2^{\circ} \to \bR$ defined by
	\begin{equation}
		\label{Liouville-Kernel}
		K(t, s) = 
		\begin{cases}
			(t-s)^{H-1/2} \quad &\quad \mbox{if $t>s$}
			\\
			0\quad &\quad \mbox{if $t\leq s$}
		\end{cases}
	\end{equation}
	
	We will often use the convention that $t_0$ is an abstract variable and extend $K$ such that $K(t_0,\cdot)\equiv 1$.
	\begin{definition}
		\label{Def_ExpGrowth}
		Let $g:\Delta_m^{\circ}\rightarrow \mathbb{R}$ be in $C^m$. We say that $g$ has at most exponential growth, if for all $j,k\in \{1,\dots,m\}$ there exist constants $C_g,C_g'$ such that
		\begin{equation}
			\label{eq:Def_ExpGrowth}
			|g|+|\partial_j g|+|\partial_j\partial_k g|\le C_g' \exp(C_g|x_1|+\dots+C_g|x_m|).
		\end{equation}
	\end{definition}
	
	\section{Gaussian Computations}
	
	The following lemma will prove very useful. Remark that $ \partial_t\Sigma(t)$ below is not assumed to be positive semi-definite (example: Brownian bridge!) so that the following result, for which we offer a Fourier proof, does not seem to approachable by It\^o's formula. The following two Lemmas can also be found in \cite[Theorem 2.1]{Hirsch2011ItoforGauss} as was kindly pointed out by a referee. For the reader's convenience we also state the proof here. 
	\begin{lemma}
		\label{Lem_DerivSigma}
		Let $d\in \mathbb{N}$, $\Sigma: [0,T]\rightarrow \mathbb{R}^{d \times d}$ a continuously differentiable, matrix valued map, such that $\Sigma(t)$ is symmetric, positive semi-definite for all $t \in [0,T]$. 	Let $g:\mathbb{R}^d \rightarrow \mathbb{R}$ be a Schwartz function, i.e. smooth with rapidly decreasing derivatives.
		Let $W(t) \sim \mathcal{N}(0,\Sigma(t))$ and define the function
		\begin{gather*}
			\varphi(t)=\bE\bigl[ g(W(t)) \bigr].
		\end{gather*}
		Then $\varphi$ is in $C^1$ and
		\begin{gather*}
			\partial_t \varphi(t) = \sum_{k,l=1}^d \frac 12 \partial_t\Sigma(t)_{k,l} \bE \bigl[\partial_k \partial_l g(W(t))\bigr].
		\end{gather*}
	\end{lemma}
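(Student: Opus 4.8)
The plan is to prove the identity via a Fourier-analytic argument, exactly as announced before the statement. First I would write, for a Schwartz function $g$ with Fourier transform $\hat g$,
\begin{equation*}
    \varphi(t) = \bE\bigl[g(W(t))\bigr] = \frac{1}{(2\pi)^{d}}\int_{\bR^d} \hat g(\xi)\, \bE\bigl[e^{i\langle \xi, W(t)\rangle}\bigr]\, d\xi = \frac{1}{(2\pi)^{d}}\int_{\bR^d} \hat g(\xi)\, e^{-\tfrac12 \langle \xi, \Sigma(t)\xi\rangle}\, d\xi,
\end{equation*}
using the Gaussian characteristic function. Since $\Sigma$ is $C^1$, the exponent $t\mapsto -\tfrac12\langle \xi,\Sigma(t)\xi\rangle$ is $C^1$ with derivative $-\tfrac12\langle \xi,\partial_t\Sigma(t)\xi\rangle$, and because $|e^{-\tfrac12\langle\xi,\Sigma(t)\xi\rangle}|\le 1$ (here positive semi-definiteness of $\Sigma(t)$, not of $\partial_t\Sigma(t)$, is what matters) while $\hat g$ decays rapidly, I can differentiate under the integral sign: the integrand's $t$-derivative is bounded by $\tfrac12|\hat g(\xi)|\,|\xi|^2\|\partial_t\Sigma\|_\infty$, which is integrable and locally uniformly dominated. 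This gives $\varphi\in C^1$ and
\begin{equation*}
    \partial_t\varphi(t) = \frac{1}{(2\pi)^d}\int_{\bR^d}\hat g(\xi)\Bigl(-\tfrac12\langle\xi,\partial_t\Sigma(t)\xi\rangle\Bigr)e^{-\tfrac12\langle\xi,\Sigma(t)\xi\rangle}\,d\xi.
\end{equation*}

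Next I would recognize the right-hand side as an inverse Fourier transform evaluated through the Gaussian expectation again. Writing $\langle\xi,\partial_t\Sigma(t)\xi\rangle = \sum_{k,l} \partial_t\Sigma(t)_{k,l}\,\xi_k\xi_l$ and noting that multiplication by $\xi_k\xi_l$ on the Fourier side corresponds (up to sign) to $\partial_k\partial_l$ on $g$ — concretely, $\widehat{\partial_k\partial_l g}(\xi) = -\xi_k\xi_l\,\hat g(\xi)$ — I can rewrite
\begin{equation*}
    \partial_t\varphi(t) = \sum_{k,l=1}^d \frac12\,\partial_t\Sigma(t)_{k,l}\cdot \frac{1}{(2\pi)^d}\int_{\bR^d}\widehat{\partial_k\partial_l g}(\xi)\,e^{-\tfrac12\langle\xi,\Sigma(t)\xi\rangle}\,d\xi = \sum_{k,l=1}^d \frac12\,\partial_t\Sigma(t)_{k,l}\,\bE\bigl[\partial_k\partial_l g(W(t))\bigr],
\end{equation*}
where the last step reverses the characteristic-function identity applied to the Schwartz function $\partial_k\partial_l g$. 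This is precisely the claimed formula.

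The one genuinely delicate point — and the main obstacle — is the justification of differentiation under the integral when $\Sigma(t)$ is allowed to be merely positive \emph{semi}-definite (so that $W(t)$ may be degenerate and $e^{-\tfrac12\langle\xi,\Sigma(t)\xi\rangle}$ need not decay in $\xi$). The bound $|e^{-\tfrac12\langle\xi,\Sigma(t)\xi\rangle}|\le 1$ still holds, and $|\hat g(\xi)|(1+|\xi|^2)$ is integrable since $g$ is Schwartz, so the dominating function $\tfrac12\|\partial_t\Sigma\|_{L^\infty([0,T])}|\hat g(\xi)|\,|\xi|^2$ is integrable and independent of $t$; this is what rescues the argument, and it is the reason the statement carefully does \emph{not} require $\partial_t\Sigma$ to be semi-definite. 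A secondary technical nuisance is that $W(t)$ is only specified in law, not as a process, but since every quantity in the statement depends on $W(t)$ only through its one-dimensional-in-$t$ marginal law $\mathcal N(0,\Sigma(t))$, this causes no difficulty and I would dispatch it in one line. The continuity of $t\mapsto\partial_t\varphi(t)$ then follows from dominated convergence together with continuity of $t\mapsto\Sigma(t)$ and $t\mapsto\partial_t\Sigma(t)$.
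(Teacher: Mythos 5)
Your proof is correct and follows essentially the same Fourier-analytic argument as the paper: pass to the Gaussian characteristic function, differentiate under the integral using $|e^{-\frac12\langle\xi,\Sigma(t)\xi\rangle}|\le1$ and rapid decay of $\hat g$ for domination, and reinterpret the resulting $\xi_k\xi_l$ factors via $\widehat{\partial_k\partial_l g}$. The only difference is that you spell out the domination step that the paper dispatches with ``one easily justifies,'' and you correctly flag that only positive semi-definiteness of $\Sigma(t)$ (not of $\partial_t\Sigma(t)$) is needed, which is exactly why the paper opts for Fourier over It\^o.
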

	
	\begin{proof} 
		Let $\hat g$ be the Fourrier-transform of $g$, i.e.\ 
		\begin{gather*}
			\hat g (\xi)=\frac 1 {\sqrt{2\pi}^d} \int_{\mathbb{R}^d} e^{\mathrm{i}\xi \cdot x} g(x) \, dx.
		\end{gather*}
		Note that 
		\begin{gather*}
			g(x)=\frac{1}{\sqrt{2 \pi}^d} \int_{\mathbb{R}^d} e^{-\mathrm{i}x \cdot \xi} \hat g(\xi) \, d\xi.
		\end{gather*}
		
		Then it follows from Fubini's theorem that
		\begin{gather*}
			\begin{aligned}
				\bE[g(W(t))]&=\frac 1 {\sqrt{2\pi}^d} \int_{\mathbb{R}^d} \bE\bigl[ e^{\mathrm{i}\xi \cdot W(t)} \bigr]\hat g(\xi) \, d\xi
				\\
				&=\frac 1 {\sqrt{2\pi}^d} \int_{\mathbb{R}^d} e^{-\frac 12  \xi \cdot\Sigma(t) \xi}\hat g(\xi) \, d\xi.
			\end{aligned}
		\end{gather*}
		One easily justifies that
		\begin{gather*}
			\begin{aligned}
				\partial_t \varphi(t)&=\frac 1 {\sqrt{2\pi}^d} \int_{\mathbb{R}^d} -\frac 12  \xi \partial_t\Sigma(t) \xi e^{-\frac 12  \xi \cdot \Sigma(t) \xi}\hat g(\xi) \, d\xi
				\\
				&=\frac 1 {\sqrt{2\pi}^d}\sum_{k,l=1}^d \int_{\mathbb{R}^d} -\frac 12 \partial_t\Sigma(t)_{k,l}\xi_k \xi_l e^{-\frac 12  \xi \cdot\Sigma(t) \xi}\hat g(\xi) \, d\xi
				\\
				&=\frac 1 {\sqrt{2\pi}^d}\sum_{k,l=1}^d \frac 12 \partial_t\Sigma(t)_{k,l} \bE \Bigl[\int_{\mathbb{R}^d}  -\xi_k \xi_l e^{-\mathrm{i} \xi \cdot W(t)}\hat g(\xi) \, d\xi \Bigr]
				\\
				&=\sum_{k,l=1}^d \frac12\partial_t\Sigma(t)_{k,l} \bE [ \partial_k \partial_l g(W(t))].
			\end{aligned}
		\end{gather*}
	\end{proof}	
	
	Next, we extend Lemma \ref{Lem_DerivSigma} to a wider class of functions $g$:
	\begin{lemma}
		\label{Lem_DerivSigma2}
		Let $d\in \mathbb{N}$, $\Sigma: [0,T]\rightarrow \mathbb{R}^{d \times d}$ a continuously differentiable, matrix valued map, such that $\Sigma(t)$ is positive semi-definite for all $t \in [0,T]$.
		
		Let $g \in C^2$ and suppose that $g$ has at most exponential growth in the sense of Definition \ref{Def_ExpGrowth}. 
		Then
		\begin{gather*}
			\partial_t \varphi(t) = \sum_{k,l=1}^d \frac 12 \partial_t \Sigma(t)_{k,l} \bE \bigl[\partial_k \partial_l g(W(t))\bigr]
		\end{gather*}
	\end{lemma}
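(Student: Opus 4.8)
The plan is to reduce to Lemma \ref{Lem_DerivSigma} by approximating $g$ with Schwartz functions and then passing to the limit, the whole difficulty being to make every limit uniform in $t \in [0,T]$. The quantitative input is compactness: $M := \sup_{t \in [0,T]} \|\Sigma(t)\|$ and $M' := \sup_{t \in [0,T]} \|\partial_t \Sigma(t)\|$ are finite. Since every marginal of $W(t) \sim \mathcal{N}(0,\Sigma(t))$ has variance at most $M$, the elementary bound $\bE[\exp(c|Z|)] \le 2\exp(c^2\sigma^2/2)$ for $Z \sim \mathcal{N}(0,\sigma^2)$ with $\sigma^2 \le M$ gives, for every $c>0$, a finite $\sup_{t \in [0,T]} \bE[\exp(c|W(t)|)]$ depending only on $c,d,M$, and a routine argument upgrades this to $\sup_{t \in [0,T]} \bE[\1_{\{|W(t)| > R\}}\exp(c|W(t)|)] \to 0$ as $R \to \infty$. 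Together with the exponential growth hypothesis \eqref{eq:Def_ExpGrowth} on $g, \partial_j g, \partial_j \partial_k g$, these two facts make all the dominated-convergence steps below uniform in $t$; note that possible degeneracy of $\Sigma(t)$ is harmless, as only marginal variances enter.

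First I would fix $\chi \in C_c^\infty(\bR^d)$ with $\1_{B_1} \le \chi \le \1_{B_2}$, set $\chi_R(x) = \chi(x/R)$, and take a standard mollifier $\rho_\varepsilon$. For the compactly supported function $\psi_R := g\chi_R \in C_c^2$, the mollifications $\psi_R * \rho_\varepsilon$ are Schwartz, so Lemma \ref{Lem_DerivSigma} applies and gives $\partial_t \bE[(\psi_R * \rho_\varepsilon)(W(t))] = \sum_{k,l}\tfrac12 \partial_t\Sigma(t)_{k,l}\,\bE[\partial_k\partial_l(\psi_R * \rho_\varepsilon)(W(t))]$. Letting $\varepsilon \to 0$: since $\psi_R$ and its derivatives up to order two are continuous with compact support, $\psi_R * \rho_\varepsilon \to \psi_R$ and $\partial_k\partial_l(\psi_R * \rho_\varepsilon) \to \partial_k\partial_l \psi_R$ uniformly on $\bR^d$, so both sides converge uniformly in $t$; by the standard fact that a uniform limit of $C^1$ functions whose derivatives converge uniformly is $C^1$ with the limiting derivative, $\varphi_R(t) := \bE[\psi_R(W(t))]$ is $C^1$ with $\partial_t\varphi_R(t) = \sum_{k,l}\tfrac12\partial_t\Sigma(t)_{k,l}\,\bE[\partial_k\partial_l\psi_R(W(t))]$. (Continuity in $t$ of each $\bE[\partial_k\partial_l\psi_R(W(t))]$ follows since $\Sigma(t') \to \Sigma(t)$ forces $W(t') \Rightarrow W(t)$ weakly and $\partial_k\partial_l\psi_R$ is bounded continuous.)

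Then I would let $R \to \infty$. By the product rule $\partial_k\partial_l\psi_R = \chi_R\,\partial_k\partial_l g + (\partial_k\chi_R)(\partial_l g) + (\partial_l\chi_R)(\partial_k g) + (\partial_k\partial_l\chi_R)\,g$, and the last three terms are supported in $\{R \le |x| \le 2R\}$ and bounded there by $C'\exp(C|x|)$ with $C,C'$ independent of $R \ge 1$ (using $|\partial_k\chi_R| \le \|\nabla\chi\|_\infty$, $|\partial_k\partial_l\chi_R| \le \|D^2\chi\|_\infty$); hence their contributions to $\bE[\partial_k\partial_l\psi_R(W(t))]$ tend to $0$ uniformly in $t$ by the tail estimate, while $\chi_R\,\partial_k\partial_l g \to \partial_k\partial_l g$ pointwise with $|\chi_R\,\partial_k\partial_l g| \le |\partial_k\partial_l g| \le C_f'\exp(C_f(|x_1|+\dots+|x_d|))$, so $\bE[\partial_k\partial_l\psi_R(W(t))] \to \bE[\partial_k\partial_l g(W(t))]$ uniformly in $t$. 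Similarly $\varphi_R(t) = \bE[g(W(t))\chi_R(W(t))] \to \bE[g(W(t))] = \varphi(t)$. Using $|\partial_t\Sigma(t)_{k,l}| \le M'$, we get $\partial_t\varphi_R \to \Psi$ uniformly on $[0,T]$, where $\Psi(t) := \sum_{k,l}\tfrac12\partial_t\Sigma(t)_{k,l}\,\bE[\partial_k\partial_l g(W(t))]$ is continuous (as the uniform limit of the continuous functions $\partial_t\varphi_R$). Invoking once more the differentiation-under-uniform-convergence fact, $\varphi$ is $C^1$ with $\partial_t\varphi = \Psi$, which is the assertion.

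The main obstacle is not conceptual but organisational: one must ensure that both limits ($\varepsilon \to 0$ and $R \to \infty$) are uniform in $t$, so that the limit passes through $\partial_t$; this is exactly what the finiteness of $M, M'$ and the uniform exponential-integrability tail bound for the Gaussian family $\{W(t)\}_{t \in [0,T]}$ provide.
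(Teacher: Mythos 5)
Your proof is correct and follows essentially the same strategy as the paper: approximate $g$ by compactly supported smooth functions so that Lemma \ref{Lem_DerivSigma} applies, then pass to the limit using uniform exponential integrability of the Gaussian family $\{W(t)\}_{t\in[0,T]}$. One small point in your favour: the paper's single-step construction $g^{(n)}=g\cdot(\mathds{1}_{[-n-1/2,n+1/2]}*\eta_{1/4})$ in fact only yields a $C^2$ function of compact support when $g$ is merely $C^2$ (so not Schwartz as required by Lemma \ref{Lem_DerivSigma}), whereas your two-step cutoff-then-mollify argument explicitly produces genuine Schwartz approximants; otherwise the two arguments differ only in bookkeeping, with the paper passing to the limit in the integrated identity via dominated convergence and you invoking the differentiation-under-uniform-convergence lemma directly.
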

	
	\begin{proof}
		Let $\lambda\in \mathbb{N}$ be such that
		\begin{gather*}
			\sup_{|x|>2} \frac{|g(x)|+|\partial g(x)|+|\partial^2g(x)|}{\exp(\lambda |x|)}\le 1. 
		\end{gather*}
		
		For $n \in \mathbb{N}$ let $g^{(n)}\in C^{\infty}$ be such that
		\begin{enumerate}
			\item $\operatorname{supp}(g^{(n)})\subset [-n-1,n+1]^d$,
			\item $\sup_{|x|\le n} |g(x)-g^{(n)}(x)|+|\partial g(x)-\partial g^{(n)}(x)|+
			|\partial^2 g(x)-\partial^2 g^{(n)}(x)|<1/n$,
			\item $|g^{(n)}(x)|+|\partial g^{(n)}(x)| + |\partial^2 g^{(n)}(x)|\le 4 \exp(\lambda |x|)$ for $|x| \ge n$. 
		\end{enumerate}
		Such a sequence can be constructed via $g^{(n)}=g\cdot\bigl(\mathds{1}_{[-n-1/2,n+1/2]}*\eta_{1/4}\bigr)$, where $\eta_{1/4}$ is a standard mollifier with $\operatorname{supp}\bigl(\eta_{1/4}\bigr)\subset [-1/4,1/4]$.
		Define $\varphi^{(n)}(t)=\bE[g^{(n)}(W(t))]$ .
		Note that $g^{(n)}$ satisfies the assumptions of Lemma \ref{Lem_DerivSigma}, implying that
		\begin{gather*}
			\varphi^{(n)}(t)-\varphi^{(n)}(0) =\int_0^t \sum_{k,l=1}^d \frac 12 \partial_s \Sigma(s)_{k,l} \bE \bigl[\partial_k \partial_l g^{(n)}(W(s))\bigr]\,ds.
		\end{gather*}
		We now show that $\varphi^{(n)}\rightarrow \varphi$ uniformly as $n \rightarrow \infty$. Let $\varepsilon>0$ be arbitrary. By (uniform) exponential growth, it follows from the generalized Hölder inequality, and the continuity of $\Sigma$ that there exists some $m \in \mathbb{N}$ such that 
		\begin{gather*}
			\sup_{t \in I} \sup_{n \in \mathbb{N}} \bE\bigl[g^{(n)}(W(t))\mathds{1}_{\{|X(t)|>m\}} \bigr] + \bE\bigl[g(W(t))\mathds{1}_{\{|W(t)|>m\}} \bigr]< \varepsilon.
		\end{gather*}
		W.l.o.g.\ we can assume that $1/m<\varepsilon$. By construction, we know that for every $n \ge m$ and all $t \in I$
		\begin{gather*}
			\bE\bigl[\bigl(g^{(n)}(W(t))-g(W(t))\bigr)\mathds{1}_{\{|W(t)|\le m\}} \bigr]  < \varepsilon.
		\end{gather*}
		Hence $\|\varphi-\varphi^{(n)}\|_\infty<2\varepsilon$ for $n \ge m$, implying uniform convergence. By using $\partial_k\partial_lg$ instead of $g$, we see that
		\begin{gather*}
			\lim_{n \rightarrow \infty} \bE \bigl[\partial_k \partial_l g^{(n)}(W(t))\bigr]=\bE \bigl[\partial_k \partial_l g(W(t))\bigr].
		\end{gather*}
		From dominated convergence it follows that 
		\begin{gather*}
			\varphi^{}(t)-\varphi^{}(0) =\int_0^t \sum_{k,l=1}^d \frac 12 \partial_s \Sigma(s)_{k,l} \bE \bigl[\partial_k \partial_l g^{}(W(s))\bigr]\,ds,
		\end{gather*}
		which proves the claim.
	\end{proof}
	
	\subsection{Multivariate estimates}
	
	In this section we demonstrate some estimates for the \emph{Liouville fractional Brownian motion} (recalling the Liouville Kernel equation \eqref{Liouville-Kernel})
	\begin{equation*}
		\whW_t = \int_0^t K(t, s)\,dW_s = \int_0^t (t-s)^{H-1/2} dW_s 
	\end{equation*}
	and its covariance function $C:\Delta_2^{\circ}\to \bR$, 
	\begin{equation}
		\label{eq:Covariance-Liouville}
		C(t, s) = \int_0^1 K(t, u)\cdot K(s, u)\,du = \int_0^{s\wedge t} (s - u)^{H-1/2} \cdot (t - u)^{H-1/2}\,du
	\end{equation}
	
	The following estimates are shown in \cite{Gassiat2022Weak}* {Proposition 4.1}: 
	\begin{align}
		C(t,t)&= C_H t^{2H}, \label{Lem_VarEstim1} 
		\\ 
		|C(t,t)-C(t,s)|+|C(t,s)-C(s,s)| &\lesssim (t-s)^{2H}, \label{Lem_VarEstim2} 
		\\ 
		|\partial_t C(t,s) |&\lesssim (t-s)^{2H-1}, \label{Lem_VarEstim3} 
		\\ 
		|\partial_s C(t,s) |&\lesssim  (t-s)^{2H-1}+s^{2H-1}. \label{Lem_VarEstim4}
	\end{align}
	For this work, we additionally require the following:
	\begin{lemma}
		\label{Lem_VarEstim}
		Let $t>s$. The covariance kernel of a Liouville fractional Brownian motion (Equation \eqref{eq:Covariance-Liouville}) satisfies the following estimates:
		\begin{align}
			|C(t,s)-C(t,\tilde s) |+|C(\tilde t,s)-C(t,s) |&\lesssim (\tilde s-s)^{2H}+ |\tilde t-t|^{2H} \label{Lem_VarEstim5}
		\end{align}
	\end{lemma}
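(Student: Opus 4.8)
The plan is to bound the two summands in \eqref{Lem_VarEstim5} separately, in each case moving a single time point while freezing the others and integrating the relevant partial derivative of $C$ along the way. The only substantial inputs, besides the representation \eqref{eq:Covariance-Liouville} and the symmetry $C(a,b)=C(b,a)$, will be the sharp one-sided derivative bounds \eqref{Lem_VarEstim3} and \eqref{Lem_VarEstim4}, together with the elementary but decisive fact that $x\mapsto x^{2H}$ is subadditive on $[0,\infty)$ (since $2H\in(0,1]$), so that a telescoped difference $(a+h)^{2H}-a^{2H}$ is automatically $\le h^{2H}$. It is worth noting up front that the naive route via Cauchy--Schwarz in the associated $L^2$ space, $|C(t,s)-C(t,\tilde s)|=|\langle K(t,\cdot),K(s,\cdot)-K(\tilde s,\cdot)\rangle|\le \|K(t,\cdot)\|\,\|K(s,\cdot)-K(\tilde s,\cdot)\|$, only yields the exponent $H$ (via \eqref{Lem_VarEstim2}) and is therefore too lossy.

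For the first summand, assume $s\neq\tilde s$ and first treat the generic regime $s\vee\tilde s<t$. Then $C(t,\cdot)$ is $C^1$ on $(s\wedge\tilde s,s\vee\tilde s)$ and we may write $|C(t,s)-C(t,\tilde s)|=\bigl|\int_{s\wedge\tilde s}^{s\vee\tilde s}\partial_\sigma C(t,\sigma)\,d\sigma\bigr|$ and insert \eqref{Lem_VarEstim4}, giving a bound by $\int_{s\wedge\tilde s}^{s\vee\tilde s}\bigl((t-\sigma)^{2H-1}+\sigma^{2H-1}\bigr)\,d\sigma$. Both integrals are explicit: they equal $\tfrac1{2H}\bigl((t-s\wedge\tilde s)^{2H}-(t-s\vee\tilde s)^{2H}\bigr)$ and $\tfrac1{2H}\bigl((s\vee\tilde s)^{2H}-(s\wedge\tilde s)^{2H}\bigr)$, each of which is $\le \tfrac1{2H}|s-\tilde s|^{2H}$ by subadditivity. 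The second summand $|C(\tilde t,s)-C(t,s)|$ is handled identically in the regime $s\le t\wedge\tilde t$: integrate $\partial_\tau C(\tau,s)$ over $\tau$ between $t$ and $\tilde t$, use $|\partial_\tau C(\tau,s)|\lesssim(\tau-s)^{2H-1}$ from \eqref{Lem_VarEstim3}, and observe that the integral telescopes to $\tfrac1{2H}\bigl|(t\vee\tilde t-s)^{2H}-(t\wedge\tilde t-s)^{2H}\bigr|\le\tfrac1{2H}|\tilde t-t|^{2H}$.

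The remaining point, and the only place requiring care, is the degenerate configurations in which the moving point crosses the diagonal: $\tilde s>t$ in the first summand, or $\tilde t<s$ in the second. There the integration path leaves the region $\{a>b\}$ on which \eqref{Lem_VarEstim3} and \eqref{Lem_VarEstim4} were derived, and in fact $\partial_\sigma C(t,\sigma)$ ceases to be integrable across $\sigma=t$. In such cases I would split through the diagonal value instead: using $C(t,\tilde s)=C(\tilde s,t)$,
$|C(t,s)-C(t,\tilde s)|\le|C(t,s)-C(t,t)|+|C(t,t)-C(\tilde s,t)|\lesssim(t-s)^{2H}+(\tilde s-t)^{2H}\lesssim(\tilde s-s)^{2H}$,
the last step because both $t-s\le\tilde s-s$ and $\tilde s-t\le\tilde s-s$; symmetrically, when $\tilde t<s$ one splits $|C(\tilde t,s)-C(t,s)|$ through $C(s,s)$ and bounds each piece by \eqref{Lem_VarEstim2}. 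Summing the two estimates gives \eqref{Lem_VarEstim5}. I expect this diagonal-crossing bookkeeping — deciding which of \eqref{Lem_VarEstim2}--\eqref{Lem_VarEstim4} applies in which regime and verifying the length comparisons — to be the main obstacle, though it is routine; the bulk estimate is a one-line consequence of \eqref{Lem_VarEstim3}--\eqref{Lem_VarEstim4} plus subadditivity of $x\mapsto x^{2H}$.
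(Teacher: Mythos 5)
Your argument is correct, and it takes a genuinely different route from the paper's. The paper proves the first estimate by working directly with the kernel representation: writing
\begin{equation*}
C(t,\tilde s)-C(t,s)=\int_0^s K(t,r)\bigl(K(\tilde s,r)-K(s,r)\bigr)\,dr+\int_s^{\tilde s\wedge t}K(t,r)K(\tilde s,r)\,dr,
\end{equation*}
then dominating $K(t,\cdot)$ by $K(s,\cdot)$ (resp.\ $K(\tilde s\wedge t,\cdot)$) and reducing the first piece to the diagonal estimate $C(s,s)-C(s,\tilde s)\lesssim(\tilde s-s)^{2H}$ from \eqref{Lem_VarEstim2} and the second to an explicit power integral. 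For $|C(\tilde t,s)-C(t,s)|$ the paper does the analogous kernel manipulation when $s>\tilde t$, and uses \eqref{Lem_VarEstim3} when $s<\tilde t$. Your route instead treats $C$ as a function of one slot at a time, applies the fundamental theorem of calculus along the moving variable, feeds in the already-established derivative bounds \eqref{Lem_VarEstim3}--\eqref{Lem_VarEstim4}, and closes with subadditivity of $x\mapsto x^{2H}$; you only fall back on \eqref{Lem_VarEstim2} (via a triangle split through a diagonal value $C(t,t)$ or $C(s,s)$) when the moving point crosses the diagonal and the derivative formula stops applying. Your version buys a more uniform mechanism (FTC plus subadditivity handles all the non-degenerate cases identically, and your cases are organized purely by whether the diagonal is crossed), at the cost of an explicit case split; the paper's version stays entirely at the level of the kernel and absorbs the $\tilde s>t$ configuration into the $\tilde s\wedge t$ upper limit, so it never has to single out diagonal crossing, but it has to manage signs and kernel monotonicity by hand. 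Your preliminary remark that Cauchy--Schwarz applied to $\langle K(t,\cdot),K(s,\cdot)-K(\tilde s,\cdot)\rangle$ only yields exponent $H$ is also accurate and is a useful sanity check that a pointwise/derivative argument is genuinely needed.
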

	
	\begin{proof}
		Without loss of generality, we assume $\tilde s >s$ and write down explicitly
		\begin{align*}
			C(t,\tilde s)-C(t, s) &=\int_0^s K(t,r) \bigl(K(s,r)-K(\tilde s,r) \bigr) \, dr+ \int_s^{\tilde s \wedge t} K(t,r) K(\tilde s,r)  \, dr
			\\
			&\le
			\int_0^s K(s,r) \bigl(K(s,r)-K(\tilde s,r) \bigr) \, dr+ \int_0^{\tilde s - s} (\tilde s -s-r)^{2H-1}  \, dr
			\\
			&\le C( s,s)-C(s,\tilde s)+\frac 1 {2H} (\tilde s - s)^{2H} \lesssim (\tilde s - s)^{2H}.
		\end{align*}
		Assume that $\tilde t <t$. If $s>\tilde t$ we have
		\begin{align*}
			C(\tilde t, s)-C(t, s) &=\int_0^{\tilde t} K(s,r) \bigl(K(\tilde t,r)-K(t,r) \bigr) \, dr+ \int_{\tilde t}^{s} K(t,r) K(s,r)  \, dr
			\\
			&\le
			\int_0^{\tilde t} K(\tilde t,r) \bigl(K(\tilde t,r)-K(t,r) \bigr) \, dr+ \int_{\tilde t}^{s} K(s,r) K(s,r)  \, dr
			\\
			&\le C( t,\tilde t)-C(\tilde t,\tilde t)+\frac 1 {2H} (s - \tilde t)^{2H}
			\\
			&\lesssim (t  - \tilde t)^{2H}.
		\end{align*}
		If $s<\tilde t$ we use \eqref{Lem_VarEstim3} to see that
		\begin{align*}
			|C(\tilde t, s)-C(t, s)| &\lesssim \int_{\tilde t}^t (u-s)^{2H-1} du \lesssim \int_{\tilde t}^t (u-\tilde t)^{2H-1} du
			\lesssim (t-\tilde t)^{2H}.
		\end{align*}
	\end{proof}
	
	\begin{lemma}
		\label{Lem_VerifAss}
		Let $m \in \mathbb{N}$, $g \in C^{2}$ with at most exponential growth, see Definition \ref{Def_ExpGrowth}. We define $\varphi: \Delta_m^{\circ} \to \bR$ according to 
		\begin{equation*}
			\varphi(t_1,t_2,\dots,t_m) \coloneqq \bE\Bigl[ g(\whW_{t_1},\whW_{t_2},\dots,\whW_{t_m})\Bigr].
		\end{equation*}
		Then there exists a constant $C$ not depending on $t_1,\dots,t_m$ such that
		\begin{align}
			\label{eq:Lem_VerifAss_1}
			|\partial_j \varphi| & \le C\Bigl( (t_j-t_{j+1})^{2H-1}+(t_{j-1}-t_j)^{2H-1}\Bigr),
			\\\
			\label{eq:Lem_VerifAss_2}
			|\partial_1 \varphi| &\le C (t_1-t_2)^{2H-1},
			\\
			\label{eq:Lem_VerifAss_3}
			|\partial_m \varphi| & \lesssim C\Bigl( t_m^{2H-1}+(t_{m-1}-t_m)^{2H-1}\Bigr),
			\\
			\label{eq:Lem_VerifAss_4}
			\varphi(\mathbf{s})-\varphi\bigl(\mathbf{t}\bigr) &\lesssim  C\|\mathbf{s}-\mathbf{t}\|^{2H}.
		\end{align}
		In particular
        \begin{equation*}
            C \lesssim C'_g m \exp\bigg( \frac{(C_g \cdot m)^2}{2} \cdot \frac{T^{2H+1}}{(2H+1/2)^2} \bigg)
        \end{equation*}
	\end{lemma}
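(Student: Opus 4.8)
The plan is to run everything through Lemma~\ref{Lem_DerivSigma2}, applied to the centred Gaussian vector $(\whW_{t_1},\dots,\whW_{t_m})$, whose covariance matrix I will denote $\Sigma(\mathbf t)$: its entries are $\Sigma(\mathbf t)_{k,l}=\bE[\whW_{t_k}\whW_{t_l}]$, equal to $C(t_k,t_l)$ from \eqref{eq:Covariance-Liouville} when $t_k>t_l$ and to $C_H t_k^{2H}$ on the diagonal. For the derivative bounds \eqref{eq:Lem_VerifAss_1}--\eqref{eq:Lem_VerifAss_3} I would fix $\mathbf t\in\Delta_m^\circ$, freeze all coordinates but $t_j$, and let $t_j$ vary over the open interval $(t_{j+1},t_{j-1})$ (with the conventions $t_0:=T$, $t_{m+1}:=0$). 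On this interval $t_j\mapsto\Sigma(\mathbf t)$ is continuously differentiable --- only row and column $j$ depend on $t_j$, and there $t_j$ stays away from the other $t_l$'s and from $0$ --- and it is positive semi-definite, being a genuine covariance matrix. Hence Lemma~\ref{Lem_DerivSigma2} applies and, using symmetry of $\Sigma$ and of the Hessian of $g$, gives $\partial_{t_j}\varphi(\mathbf t)=\sum_{l\neq j}(\partial_{t_j}\Sigma)_{j,l}\,\bE[\partial_j\partial_l g(\whW_{t_1},\dots,\whW_{t_m})]+\tfrac12(\partial_{t_j}\Sigma)_{j,j}\,\bE[\partial_j^2 g(\whW_{t_1},\dots,\whW_{t_m})]$; continuity of the right-hand side also yields $\varphi\in C^1$.

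Next I would estimate the two factors separately. The diagonal coefficient is explicit, $(\partial_{t_j}\Sigma)_{j,j}=2HC_H t_j^{2H-1}$. For $l>j$ one has $(\partial_{t_j}\Sigma)_{j,l}=\partial_t C(t_j,t_l)$, bounded by \eqref{Lem_VarEstim3} by $\lesssim(t_j-t_l)^{2H-1}\le(t_j-t_{j+1})^{2H-1}$; for $l<j$ one has $(\partial_{t_j}\Sigma)_{j,l}=\partial_s C(t_l,t_j)$, bounded by \eqref{Lem_VarEstim4} by $\lesssim(t_l-t_j)^{2H-1}+t_j^{2H-1}\le(t_{j-1}-t_j)^{2H-1}+t_j^{2H-1}$. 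Since $H>0$ the exponent $2H-1$ is negative, so $t_j^{2H-1}\le(t_j-t_{j+1})^{2H-1}$ when $j<m$ (and it survives as $t_m^{2H-1}$ when $j=m$, matching \eqref{eq:Lem_VerifAss_3}), while for $j=1$ only the terms $l>1$ and the diagonal term are present and $t_1^{2H-1}\le(t_1-t_2)^{2H-1}$, giving \eqref{eq:Lem_VerifAss_2}. For the expectations, exponential growth of $g$ gives $|\bE[\partial_k\partial_l g(\whW_{t_1},\dots,\whW_{t_m})]|\le C'_g\,\bE[\exp(C_g\sum_{i=1}^m|\whW_{t_i}|)]$, and the last quantity is bounded uniformly over $\mathbf t\in\Delta_m^\circ$ by a Gaussian moment estimate (generalised H\"older inequality together with $\bE[e^{a|N(0,\sigma^2)|}]\le2e^{a^2\sigma^2/2}$ and $\sigma^2\le\sup_{t\le T}C(t,t)$). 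Summing the at most $m$ terms and using $|\cdot|\le m\max$ produces \eqref{eq:Lem_VerifAss_1}--\eqref{eq:Lem_VerifAss_3} with a constant of the announced shape $C\lesssim m\,C'_g\,\bE[\exp(C_g\sum_i|\whW_{t_i}|)]$; carrying the constants through the moment bound then gives the explicit expression in the statement.

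For the H\"older bound \eqref{eq:Lem_VerifAss_4} the subtlety is that coupling $\whW_{\mathbf s}$ with $\whW_{\mathbf t}$ and estimating in $L^2$ only yields exponent $H$, whereas the claim is $2H$; the extra factor must come from second-order (pair-correlation) smoothing. I would realise $\whW$ on one probability space and interpolate the \emph{random vectors}, $Z_\theta:=(1-\theta)(\whW_{t_1},\dots,\whW_{t_m})+\theta(\whW_{s_1},\dots,\whW_{s_m})$ for $\theta\in[0,1]$, which is centred Gaussian, with covariance $\Sigma_\theta$ that is a quadratic polynomial in $\theta$ (hence $C^1$) with coefficients built from the matrices $(\bE[\whW_{a_k}\whW_{b_l}])_{k,l}$, $a,b\in\{\mathbf s,\mathbf t\}$, and that is positive semi-definite (a genuine covariance). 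Applying Lemma~\ref{Lem_DerivSigma2} to $\psi(\theta):=\bE[g(Z_\theta)]$ (so $\psi(1)-\psi(0)=\varphi(\mathbf s)-\varphi(\mathbf t)$) gives $\psi'(\theta)=\sum_{k,l}\tfrac12(\partial_\theta\Sigma_\theta)_{k,l}\,\bE[\partial_k\partial_l g(Z_\theta)]$, and the decisive observation is that $(\partial_\theta\Sigma_\theta)_{k,l}$ is an affine combination of the $\bE[\whW_{a_k}\whW_{b_l}]$, $a,b\in\{\mathbf s,\mathbf t\}$, whose coefficients sum to zero; it therefore equals a combination of \emph{differences} of these covariances, each of which is $\lesssim\|\mathbf s-\mathbf t\|^{2H}$ by \eqref{Lem_VarEstim2} and \eqref{Lem_VarEstim5} (after a short case check on the orderings of the four time points involved). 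Hence $|(\partial_\theta\Sigma_\theta)_{k,l}|\lesssim\|\mathbf s-\mathbf t\|^{2H}$; bounding $|\bE[\partial_k\partial_l g(Z_\theta)]|\le C'_g\,\bE[\exp(C_g\sum_i(|\whW_{t_i}|+|\whW_{s_i}|))]$ by the same Gaussian moment estimate and integrating over $\theta\in[0,1]$ gives \eqref{eq:Lem_VerifAss_4} with the same constant.

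The part I expect to be the main obstacle is precisely reaching exponent $2H$ (rather than $H$) in \eqref{eq:Lem_VerifAss_4}: this forces the use of the Hessian form of Lemma~\ref{Lem_DerivSigma2} instead of a first-order estimate, and it is important to interpolate the Gaussian vectors rather than the time points --- interpolating the time points would reintroduce the singular factors $(t-s)^{2H-1}$ and require delicate handling near the boundary of $\Delta_m^\circ$, whereas interpolating the vectors makes the covariance affine with zero-sum coefficients, so that everything is genuinely controlled by increments of $C$ rather than by $C$ itself. The remaining work is routine bookkeeping of the kernel estimates \eqref{Lem_VarEstim3}, \eqref{Lem_VarEstim4}, \eqref{Lem_VarEstim5} and of the Gaussian moment generating function of $\sum_i|\whW_{t_i}|$, from which the displayed form of the constant follows.
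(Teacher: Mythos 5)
Your derivation of the derivative bounds \eqref{eq:Lem_VerifAss_1}--\eqref{eq:Lem_VerifAss_3} mirrors the paper exactly: apply Lemma~\ref{Lem_DerivSigma2} to the covariance matrix $\Sigma(\mathbf t)_{k,l}=C(t_k,t_l)$, use the diagonal formula $C(t,t)=C_H t^{2H}$ together with \eqref{Lem_VarEstim3}--\eqref{Lem_VarEstim4} for the off-diagonal derivatives, absorb $t_j^{2H-1}$ into $(t_j-t_{j+1})^{2H-1}$ for $j<m$, and bound the Hessian expectations by a uniform Gaussian moment estimate of the form you describe. This part is correct and essentially identical to the published proof.

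For the $2H$-H\"older bound \eqref{eq:Lem_VerifAss_4} your route is genuinely different. The paper telescopes coordinate by coordinate: it writes $\varphi(\mathbf s)-\varphi(\mathbf t)$ as a sum of one-variable increments $\Delta_j\varphi$, bounds each by integrating the derivative estimate \eqref{Eq_VerifAss} along $r\in[s_j,t_j]$, and obtains $(t_j-s_j)^{2H}$ from the integrability of $(r-s_j)^{2H-1}$, $(t_j-r)^{2H-1}$ and $r^{2H-1}$ (together with concavity of $x\mapsto x^{2H}$ for the last term). Your alternative interpolates the random vectors $Z_\theta=(1-\theta)\whW_{\mathbf t}+\theta\whW_{\mathbf s}$, applies Lemma~\ref{Lem_DerivSigma2} in $\theta$, and exploits the zero-sum structure of the coefficients in $\partial_\theta\Sigma_\theta$ to control everything by covariance increments via \eqref{Lem_VarEstim2} and \eqref{Lem_VarEstim5}. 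Both arguments correctly deliver exponent $2H$. Note, however, that your stated motivation --- that ``interpolating the time points would reintroduce singular factors $(t-s)^{2H-1}$ and require delicate handling near the boundary'' --- is somewhat overcautious: since $2H-1>-1$, those factors are integrable in one variable, and that is precisely the mechanism the paper uses, harmlessly. Your construction is slightly cleaner in that it never sees the singularity, at the modest cost of the case check on the orderings of the four endpoints feeding into \eqref{Lem_VarEstim5}, and it inflates the Gaussian moment constant (one now conditions on $2m$ rather than $m$ time points), which affects the prefactor but not the rate. Either way the statement follows.
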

	
	In particular, $\varphi$ satisfies Assumption \ref{Ass_G} as will be discussed later.
	\begin{proof}
		Recall $\mathbf{t}=(t_1,\dots,t_m)\in \Delta_m^\circ$, $\partial_j \varphi=\partial_{t_j}\varphi$ and define the matrix $\Sigma(\mathbf{t})$ as
		$\Sigma_{k,l}(\mathbf{t})=C(t_k,t_l)=\bE\bigl[\whW_{t_k}\whW_{t_l}\bigr]$. Lemma \ref{Lem_DerivSigma2} implies that
		\begin{align*}
			\partial_j \varphi(\mathbf{t})= \sum_{k,l=1}^m \frac 12 \partial_j \Sigma(\mathbf{t})_{k,l} \bE \bigl[\partial_k \partial_l g(\whW_{t_1},\whW_{t_2},\dots,\whW_{t_m})\bigr]. 
		\end{align*}
		Courtesy of Equation \eqref{eq:Def_ExpGrowth}, we have that
		$$
		\sup_{k, l} \bE \Big[ \partial_k \partial_l g\big( \whW_{t_1},\whW_{t_2},\dots,\whW_{t_m} \big) \Big] 
		\leq 
		C'_g \cdot \exp\bigg( \frac{(C_g \cdot m)^2}{2} \cdot \frac{T^{2H+1}}{(2H+1/2)^2} \bigg) 
		\eqqcolon \fC.
		$$
		Using the explicit form of $\Sigma$ we get
		\begin{align*}
			|\partial_j \varphi|\lesssim \fC \bigg( \sum_{k=1}^{j-1} \Bigl|\partial_t C(t_k,t)\big|_{t=t_j}\Bigr|+ \Bigl|\partial_t C(t,t)\big|_{t=t_j}\Bigr| + \sum_{k=j+1}^{m} \Bigl|\partial_t C(t,t_k)\big|_{t=t_j}\Bigr| \bigg).
		\end{align*}
		Now Lemma \ref{Lem_VarEstim} implies for $1<j<m$ that
		\begin{gather}
			\label{Eq_VerifAss}
			\begin{aligned}
				|\partial_1 \varphi| &\lesssim m\fC  t_1^{2H-1}+m\fC(t_1-t_2)^{2H-1},
				\\
				|\partial_m \varphi| & \lesssim m\fC t_m^{2H-1}+m\fC (t_{m-1}-t_m)^{2H-1},
				\\
				|\partial_j \varphi| & \lesssim m\fC \Bigl( t_j^{2H-1}+(t_{j-1}-t_j)^{2H-1}+(t_j-t_{j+1})^{2H-1}\Bigr).
			\end{aligned}
		\end{gather}
		Here all the hidden constants do not depend on $T,g$ or $m$. Using the trivial bound $t_j^{2H-1}\lesssim(t_{j}-t_{j+1})^{2H-1}$ for $1\le j\le m-1 $, Equation \eqref{Eq_VerifAss} implies \eqref{eq:Lem_VerifAss_1}, \eqref{eq:Lem_VerifAss_2} and \eqref{eq:Lem_VerifAss_3}.
		
		Let $s_1> \dots>s_m$ and assume w.l.o.g. that $s_j \le t_j$ for $j=1,\dots, m$. Define
		\begin{gather*}
			\Delta_j \varphi = \varphi(t_1,\dots,t_{j-1},t_j,s_{j+1},\dots,s_m)-\varphi(t_1,\dots,t_{j-1},s_j,s_{j+1},\dots,s_m).
		\end{gather*}
		Then
		\begin{gather*}
			|\varphi(s_1,\dots,s_m)-\varphi(t_1,\dots,t_m)| \lesssim \sum_{j=1}^m |\Delta_j \varphi|.
		\end{gather*}
		Finally, by \eqref{Eq_VerifAss}, using concavity of $x^{2H}$, for all $1 \le j \le m$
		\begin{align*}
			|\Delta_j \varphi| \lesssim m\fC \int_{s_j}^{t_j} \Bigl(r^{2H-1} + (r-s_j)^{2H-1}+ (t_j-r)^{2H-1}\Bigr) \,dr \lesssim m\fC (t_j-s_j)^{2H},
		\end{align*}
		which implies \eqref{eq:Lem_VerifAss_4}.
	\end{proof}

	\section{Representation of Log-Stock}
	\label{Section:Representation}
	
	Recall the process $X$ is defined to be
	\begin{align*}
		X_t&= \rho \int_0^t f\bigl( \whW_s \bigr) dW_s + \sqrt{1-\rho^2} \int_0^t f\Bigl( \whW_s \Bigr) dW_s^\perp, 
		\\
		\whW_t &= \int_0^t (t-s)^{H-1/2} dW_s
		\quad
		W_t \perp W_t^{\perp}.
	\end{align*}
	The aim of this section is to derive an exact expression for $\bE [ (X_T)^N ]$. The case $N=2$ is elementary, It\^{o} isometry, we have that
	\begin{equation*}
		\bE \Bigl[ (X_T)^2 \Bigr] =\int_0^T \bE\Bigl[ f(\whW_t)^2 \Bigr] \,dt.
	\end{equation*}
	The case $N=3, \rho =1$ appeared in \cite{Gassiat2022Weak}, the general case left as open problem. There and below, the Clark-Ocone formula \cite[Prop 1.3.14]{nualart2006malliavin} is useful. It provides an explicit form of for It\^o's representation theorem, valid for sufficiently regular random variable on Wiener space, say $F \in D^{1,2}$ in Malliavin sense,
	\begin{equation*}
		F = \bE\bigl[ F \bigr] + \int_0^T \bE\bigl[ D_t F \big| \cF_t \bigr] dW_t
	\end{equation*}
	where $D$ denotes the Malliavin derivative. For the sequel it is sufficient to recall that, for deterministic $h \in L^2$ and $f \in C^1$ one has
	$D_t f ( \int_0^T h d W ) = f' (\int_0^T h d W ) h_t$. 
	
	\begin{example}
		\label{example:Gassiat-X^3}
		Following the ideas of \cite{Gassiat2022Weak}, let us consider the case $N=3$ as well as $\rho^2=1$. By redefining $f$ we can w.l.o.g.\ assume $\rho=1$. An application of It\^{o}'s formula yields that
		\begin{equation*}
			\bE\Bigl[ (X_T)^3 \Bigr] = 3\int_0^T \bE\Bigl[ X_t f\bigl( \whW_t \bigl)^2 \Bigr] dt. 
		\end{equation*}
		The key observation is that thanks to the Clark-Ocone formula 
		\begin{equation*}
			f\bigl( \whW_t \bigl)^2 = \bE\Bigl[ f\bigl( \whW_t \bigl)^2 \Bigr] + 2 \int_0^t \bE\Bigl[ ff'\bigl( \whW_t \bigl) \Big| \cF_s \Bigr] K(t, s) dB_s,
		\end{equation*}
		where $K(t, s)$ is the Liouville kernel defined in Equation \eqref{Liouville-Kernel}. Then
		\begin{equation*}
			3\int_0^T \bE\Bigl[ X_t f\bigl( \whW_t \bigl)^2 \Bigr] dt 
			= 
			6 \int_0^T \int_0^t \bE\Bigl[ f\bigl( \whW_s\bigl) ff'\bigl( \whW_t\bigl) \Bigr] K(t, s) ds dt.
		\end{equation*}
	\end{example}
	
	\begin{example}
		\label{example:Gassiat-X^4}
		Following the ideas of Example \ref{example:Gassiat-X^3}, we can also consider the case $N=4$. This was not addressed in \cite{Gassiat2022Weak}, but we provide it here for motivational purposes. Thanks to It\^{o}'s formula, 
		\begin{equation*}
			\bE\Bigl[ (X_T)^4 \Bigr] = 6 \int_0^T \bE\Bigl[ (X_t)^2 f \bigl( \whW_t\bigl)^2 \Bigr] dt. 
		\end{equation*}
		Similarly, thanks to It\^{o}'s formula and the Clark-Ocone formula we have that
		\begin{align*}
			X_t^2 =& 2 \int_0^t X_s f \bigl( \whW_s \bigl) dB_s + \int_0^t f\bigl( \whW_s\bigl)^2 ds, 
			\\
			f\bigl( \whW_t\bigl)^2 =& \bE\Bigl[ f\bigl( \whW_t\bigl)^2 \Bigr] + 2\int_0^t \bE\Bigl[ ff'\bigl( \whW_t\bigl) \Big| \cF_s \Bigr] K(t, s)\, dW_s
		\end{align*}
		where $K(t, s)$ is the Liouville kernel defined in Equation \eqref{Liouville-Kernel}. 
		
		Thus
		\begin{align*}
			\bE\Bigl[ (X_T)^4 \Bigr] &= 12 \int_0^T  \bE\Bigl[ f^2\bigl(\whW_{t}\bigr)\int_0^t X_s f\bigl( \whW_s\bigl)\,dB_s \Bigr] dt
			\\
			&\qquad+6 \int_0^T  \bE\Bigl[f\bigl( \whW_t\bigl)^2  \int_0^tf\bigl( \whW_s\bigl)^2 \,ds\Bigr] \,dt
			\\
			&=24 \rho \int_0^T \int_0^t \bE\Bigl[ X_s  f\bigl( \whW_s\bigl) f\bigl(\whW_{t}\bigr) f'\bigl(\whW_{t}\bigr)  \Bigr]\,ds \, dt 
			\\
			&\qquad+6 \int_0^T \int_0^t \bE\Bigl[ f\bigl( \whW_s\bigl)^2 f\bigl( \whW_t\bigl)^2 \Bigr] \, ds \, dt.
		\end{align*}
		Again, by taking the Malliavin derivative of the function $f\bigl( \whW_s\bigl) f f'\bigl( \whW_t \bigl)$ and applying the Clark-Ocone formula yields
		\begin{align*}
			\bE\Bigl[ (X_T)^4 \Bigr] 
			=&
			6 \int_0^T \int_0^t \bE\Bigl[ f\bigl( \whW_s\bigl)^2 f\bigl( \whW_t\bigl)^2 \Bigr] \, ds \, dt
			\\
			&+24\rho^2 \int_0^T \int_0^t \int_0^s \bE\Bigl[ f\bigl( \whW_r \bigl) f \bigl( \whW_s \bigl) ff''\bigl( \whW_t \bigl) \Bigr] K(t,s) K(t,r) \, dr\,ds\,dt
			\\
			&+24\rho^2 \int_0^T \int_0^t \int_0^s \bE\Bigl[ f\bigl( \whW_r \bigl) f \bigl( \whW_s \bigl) f'f'\bigl( \whW_t \bigl) \Bigr] K(t,s) K(t,r) \, dr\,ds\,dt
			\\
			&+24\rho^2 \int_0^T \int_0^t \int_0^s \bE\Bigl[ f\bigl( \whW_r \bigl) f'\bigl( \whW_s \bigl) ff'\bigl( \whW_t \bigl) \Bigr] K(t,s)K(s,r) \, dr\,ds\,dt.
		\end{align*}
	\end{example}
	
	\subsection{Polynomial test functions}
	
	We have seen in Example \ref{example:Gassiat-X^3} and Example \ref{example:Gassiat-X^4} that we can express fourth order moments of our rough volatility model in terms of the kernel $K$, the volatility functional $f$ and the correlation $\rho$. In this Section, we obtain a similar expression for higher order polynomials. 
	
	\begin{definition}
		\label{definition:I-J_operations}
		Let $m,N\in \bN$ and $\Delta_m^{\circ}$ be the open simplex as in Equation \eqref{eq:Simplex}. Let $\mathcal{Q}^{m}$ be the set of functions $F:\mathbb{R}^{m}\times \Delta_{m}^\circ \rightarrow \mathbb{R}$ such that for all $j\le m$ the derivative $\partial_{x_j}F(x_1,\dots,x_m,t_1,\dots,t_m)$ exists and is continuous. Furthermore set $\mathcal{Q}=\bigcup_{m \in \mathbb{N}} \mathcal{Q}^m$. 
		
		For $s<t_m$ and $y \in \mathbb{R}$, we define $\cI^N,\cJ^N:\mathcal{Q}^{m}\rightarrow C$  as
		\begin{gather*}
			(\cI^N F)(x_1,\dots,x_m,y,t_1,\dots,t_m,s)=\rho N f(y) \sum_{j=1}^m \partial_{x_j} F(x_1,\dots,x_m,t_1,\dots,t_m) K(t_j,s),
		\end{gather*}
		where $K(t, s)$ is the Liouville kernel defined in Equation \eqref{Liouville-Kernel} and
		\begin{gather*}
			(\cJ^N F)(x_1,\dots,x_m,y,t_1,\dots,t_m,s)=\frac{N(N-1)}2 f^2(y) F(x_1,\dots,x_m,t_1,\dots,t_m).
		\end{gather*}
	\end{definition}
	
	\begin{lemma}
		\label{Lem_IJ1}
		Let $N\geq 1$, $\cI,\cJ$ as before. Let $\mathbf{t}=(t_1, ..., t_{m}) \in \Delta_{m}^\circ$ and $t\in [0,t_{m})$. Let $F: \mathbb{R}^{m}\times \Delta_{m}^\circ \to \bR$ be in $\mathcal{Q}^m$ and denote $\whW_{\mathbf{t}}=\bigl( \whW_{t_1},\dots,\whW_{t_m}\bigr)$. Then
		\begin{align*}
			\bE\Bigl[ (X_t)^N F\bigl(\whW_{\mathbf{t}},\mathbf{t}\bigr) \Bigr]
			&=
			\int_0^t \bE\Bigl[ (X_s)^{N-1} \bigl(\cI^{N}(F)\bigr)\bigl(\whW_{\mathbf{t}},\whW_s,\mathbf{t},s\bigr) \Bigr] \,ds
			\\
			&+ 
			\int_0^t \bE\Bigl[ (X_s)^{N-2}  \bigl(\cJ^{N} (F)\bigr)\bigl(\whW_{\mathbf{t}},\whW_s,\mathbf{t},s\bigr)\Bigr] \,ds.
		\end{align*}
	\end{lemma}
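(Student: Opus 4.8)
The plan is to generalise the computations in Examples~\ref{example:Gassiat-X^3} and~\ref{example:Gassiat-X^4}: apply It\^o's formula to $t\mapsto (X_t)^N$, insert the factor $F(\whW_{\mathbf t},\mathbf t)$, take expectations, and then dispose of the resulting martingale term by a Clark--Ocone argument and of the finite-variation term by Fubini. Concretely, since $X_0=0$, $dX_t=f(\whW_t)\,dB_t$ and $d\langle X\rangle_t=f(\whW_t)^2\,dt$, It\^o's formula for $x\mapsto x^N$ gives
\begin{equation*}
  (X_t)^N=N\int_0^t (X_s)^{N-1}f(\whW_s)\,dB_s+\frac{N(N-1)}2\int_0^t (X_s)^{N-2}f(\whW_s)^2\,ds .
\end{equation*}
All moments that appear are finite: Burkholder--Davis--Gundy together with Minkowski's integral inequality and Gaussianity of $\whW$ (plus exponential growth of $f$) give $X_s\in L^p$ for every $p$, and similarly for $f(\whW_s)$ and $f(\whW_s)^2$; I will also use that $F(\whW_{\mathbf t},\mathbf t)\in D^{1,2}$, which holds under the mild additional assumption that $F$ and its first $x$-derivatives have at most exponential growth (automatic in every case to which the Lemma is later applied). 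Multiplying the displayed identity by $F(\whW_{\mathbf t},\mathbf t)$ and taking expectations produces exactly two terms, treated next.

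For the finite-variation term, Fubini's theorem gives
\begin{equation*}
  \frac{N(N-1)}2\,\bE\Bigl[F(\whW_{\mathbf t},\mathbf t)\int_0^t (X_s)^{N-2}f(\whW_s)^2\,ds\Bigr]
  =\int_0^t\bE\Bigl[(X_s)^{N-2}\,\tfrac{N(N-1)}2 f^2(\whW_s)\,F(\whW_{\mathbf t},\mathbf t)\Bigr]ds ,
\end{equation*}
and the integrand is precisely $\bE\bigl[(X_s)^{N-2}(\cJ^N F)(\whW_{\mathbf t},\whW_s,\mathbf t,s)\bigr]$ by Definition~\ref{definition:I-J_operations}; this already gives the second summand in the claim.

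For the martingale term, the point is that $F(\whW_{\mathbf t},\mathbf t)$ is $\sigma(W)$-measurable (it does not involve $W^\perp$), so that the Malliavin/Clark--Ocone machinery only sees the $W$-component of $B$. Since $\whW_{t_j}=\int_0^{t_j}K(t_j,u)\,dW_u$, the chain rule yields $D_uF(\whW_{\mathbf t},\mathbf t)=\sum_{j=1}^m\partial_{x_j}F(\whW_{\mathbf t},\mathbf t)\,K(t_j,u)$, and the Clark--Ocone formula gives
\begin{equation*}
  F(\whW_{\mathbf t},\mathbf t)=\bE\bigl[F(\whW_{\mathbf t},\mathbf t)\bigr]+\int_0^T\sum_{j=1}^m\bE\bigl[\partial_{x_j}F(\whW_{\mathbf t},\mathbf t)\,\big|\,\cF_u\bigr]K(t_j,u)\,dW_u .
\end{equation*}
Because $\int_0^\cdot (X_s)^{N-1}f(\whW_s)\,dB_s$ is a martingale started at $0$, the $\bE[F]$-part contributes nothing, and for the stochastic-integral part I use It\^o's isometry together with $d\langle B,W\rangle_s=\rho\,ds$ to get
\begin{equation*}
  N\,\bE\Bigl[F(\whW_{\mathbf t},\mathbf t)\int_0^t (X_s)^{N-1}f(\whW_s)\,dB_s\Bigr]
  =N\rho\int_0^t\bE\Bigl[(X_s)^{N-1}f(\whW_s)\sum_{j=1}^m\bE\bigl[\partial_{x_j}F(\whW_{\mathbf t},\mathbf t)\,\big|\,\cF_s\bigr]K(t_j,s)\Bigr]ds ;
\end{equation*}
here $K(t_j,\cdot)$ is bounded on $[0,t]$ since $s\le t<t_m\le t_j$, so no singular integral appears. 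Finally, Fubini plus the tower property (the factors $(X_s)^{N-1}f(\whW_s)K(t_j,s)$ being $\cF_s$-measurable) let me drop the inner conditional expectation, and the integrand becomes $\bE\bigl[(X_s)^{N-1}(\cI^N F)(\whW_{\mathbf t},\whW_s,\mathbf t,s)\bigr]$ by Definition~\ref{definition:I-J_operations}. Adding the two contributions gives the stated identity (for $N=1$ the second summand is simply absent, consistently with $\cJ^1\equiv0$).

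The step I expect to be the only genuinely delicate one is making the Clark--Ocone representation and the exchanges of expectation with It\^o/Lebesgue integration rigorous for merely $C^1$ (exponentially growing) $F$. I would handle this exactly as in the proof of Lemma~\ref{Lem_DerivSigma2}: first prove the identity for compactly supported smooth $F$, where everything is classical, and then remove the truncation by dominated convergence, using uniform exponential-growth bounds on $F$ and $\partial_{x_j}F$ and the $L^p$-bounds on $X_s$ and $f(\whW_s)$ to control the limits.
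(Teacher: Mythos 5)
Your proof takes essentially the same route as the paper: apply Itô's formula to $x\mapsto x^N$, expand $F(\whW_{\mathbf t},\mathbf t)$ via Clark--Ocone, and combine using $d\langle B,W\rangle_s=\rho\,ds$, with Fubini and the tower property closing the argument. The only difference is your explicit remark on the regularity/growth hypotheses needed to justify Clark--Ocone and the stochastic Fubini step for merely $C^1$ $F\in\mathcal{Q}^m$ (to be removed by truncation as in Lemma~\ref{Lem_DerivSigma2}), a point the paper's proof leaves implicit.
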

	
	\begin{remark} 
		Since $\cJ^N(F)\equiv 0$ for $N= 1$ we define $(X_s)^{N-2}\cJ^N(F)= 0$, meaning that the second term above vanishes in this case.
	\end{remark}
	
	\begin{proof}
		We apply It\^{o} formula for the function $x \mapsto x^N$ to get
		\begin{equation*}
			(X_t)^N = N \int_0^t (X_s)^{N-1} f\bigl( \whW_s \bigl) \, dB_s + \frac{N(N-1)}2 \int_0^t (X_s)^{N-2} f\bigl( \whW_s \bigl)^2 \,ds.    
		\end{equation*}
		Define $\partial_j F=\partial_{x_j}F.$
		Using the definition of $F$, we apply the Clark-Ocone formula to see that
		\begin{align*}
			F\bigl(\whW_{\mathbf{t}},\mathbf{t}\bigr)&=\bE\Bigl[ F\bigl(\whW_{\mathbf{t}},\mathbf{t}\bigr) \Bigr] 
			+ 
			\sum_{j=1}^{m} \int_0^{t_j} \bE\Bigl[\partial_j F\bigl(\whW_{\mathbf{t}},\mathbf{t} \bigr) \Big| \cF_s \Bigr]K(t_j,s) \, dW_s.
		\end{align*}
		Using both representations we get
		\begin{align*}
			\bE\Bigl[ (X_t)^NF\bigl(\whW_{\mathbf{t}},\mathbf{t}\bigr) \Bigr]
			=&\bE\Bigl[ N\Bigl(\int_0^t (X_s)^{N-1} f\bigl( \whW_s \bigl) \, dB_s\Bigr) F\bigl(\whW_{\mathbf{t}},\mathbf{t} \bigr) \Bigr]
			\\
			&+
			\bE \Bigl[ \frac{N(N-1)}2 \Bigl(\int_0^t (X_s)^{N-2} f^2\bigl( \whW_s \bigl) \,ds\Bigr) F\bigl(\whW_{\mathbf{t}},\mathbf{t} \bigr) \Bigr]
			\\
			=& \int_0^t \bE\Bigl[ \sum_{j=1}^m \rho N (X_s)^{N-1} f\bigl( \whW_s \bigl) \partial_j F\bigl(\whW_{\mathbf{t}},\mathbf{t} \bigr) \Bigr] K(t_j,s)\, ds
			\\
			&+ 
			\int_0^t \bE \Bigl[ (X_s)^{N-2} \frac{N(N-1)}2 f^2(\whW_s)F\bigl(\whW_{\mathbf{t}},\mathbf{t}\bigr) \Bigr] \,ds 
			\\
			=&\int_0^t \bE\Bigl[ (X_s)^{N-1} \bigl(\cI^{N}(F)\bigr)\bigl(\whW_{\mathbf{t}},\whW_s,\mathbf{t},s\bigr) \Bigr] \,ds
			\\
			&+ 
			\int_0^t \bE\Bigl[ (X_s)^{N-2}  \bigl(\cJ^{N} (F)\bigr)\bigl(\whW_{\mathbf{t}},\whW_s,\mathbf{t},s\bigr)\Bigr] \,ds.
		\end{align*}
	\end{proof}
	
	\begin{definition}
		Let $\cW$ be the set of all words with letters in $\{\wI,\wJ\}$. We denote by $|.|$ the number of letters, i.e.\ for $w=w_1\dots w_m$ we have $|w|=m$.
		
		We define an inhomogenous length $\ell: \cW\to \bN$ via  $\ell(\wI)=1$, $\ell(\wJ)=2$ and 
		\begin{equation*}
			\ell(w)=\sum_{j=1}^{|w|} \ell(w_j). 
		\end{equation*}
		We define an embedding $\iota: \cW \to \mathcal{Q}$ as follows: If $w$ has a single letter $w=\wI$ or $w=\wJ$ then 
		\begin{equation*}
			\iota(\wI)=\cI^1
			\quad\mbox{or}\quad
			\iota(\wJ)=\cJ^2.
		\end{equation*}
		For $w=w_1\dots w_m$ we define
		\begin{equation*}
			\iota(w)=
			\begin{cases}
				\iota(w_1\dots w_{m-1}) \circ \wI^{\ell(w)} \quad &\quad \mbox{if} \quad w_m = I
				\\
				\iota(w_1\dots w_{m-1}) \circ \cJ^{\ell(w)} \quad &\quad \mbox{if} \quad w_m = J. 
			\end{cases}
		\end{equation*}
	\end{definition}
	
	\begin{lemma}
		\label{Lem_Izero}
		Let $w \in \cW$ ba a word with last letter $w_{|w|}=\cI$. Denote by $1$ the constant one function. Then $\iota(w)(1)\equiv 0$.
	\end{lemma}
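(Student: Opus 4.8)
The plan is to unwind the definition of $\iota$ and observe that the very last operator applied to the constant function $1$ is some $\cI^N$, and that $\cI^N$ applied to a \emph{constant} function vanishes identically because it only sees the partial derivatives $\partial_{x_j}$ of its argument.

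First I would make precise what $\iota(w)(1)$ means when $w_{|w|} = \cI$. By the recursive definition, if $|w| = 1$ then $w = \cI$ and $\iota(w) = \cI^1$, so $\iota(w)(1) = (\cI^1 \mathbf{1})(\dots)$ where $\mathbf{1}:\mathbb{R}^0 \times \Delta_0^\circ \to \mathbb{R}$ is the constant; since $\cI^1$ is built from $\sum_{j=1}^m \partial_{x_j} F$ and here $m=0$ (or, reading the convention with the abstract variable $t_0$, the sum is empty / the derivative of a constant is zero), we get $\iota(\cI)(1) \equiv 0$. If $|w| = m > 1$, then by definition $\iota(w) = \iota(w_1 \dots w_{m-1}) \circ \cI^{\ell(w)}$. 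The composition is read so that $\cI^{\ell(w)}$ is applied \emph{last} (i.e.\ it is the outermost operator acting on the already-built function of the inner word), hence $\iota(w)(1) = \bigl(\cI^{\ell(w)} \bigl[\iota(w_1\dots w_{m-1})(1)\bigr]\bigr)$.

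Now the key point: for any function $G \in \mathcal{Q}^k$, the output $\cI^N G$ depends on $G$ only through $\sum_{j=1}^k \partial_{x_j} G(x_1,\dots,x_k,t_1,\dots,t_k) K(t_j, s)$, multiplied by $\rho N f(y)$. In particular, if $G = \iota(w_1\dots w_{m-1})(1)$ does not depend on the spatial variables $x_1, \dots, x_k$ at all --- which is the case, since applying $\cI$'s and $\cJ$'s to the constant $1$ never introduces $x$-dependence: $\cJ^N$ just multiplies by $\tfrac{N(N-1)}{2} f^2(y)$ and passes $F$ through unchanged in the $x$-slots, while $\cI^N$ produces $\rho N f(y) \sum_j \partial_{x_j} F \cdot K(t_j,s)$, which again has no $x$-dependence --- then every $\partial_{x_j} G \equiv 0$ and therefore $\cI^N G \equiv 0$. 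So I would first record the auxiliary claim that $\iota(u)(1)$ is, for every word $u$, a function that does not depend on its spatial arguments (an easy induction on $|u|$ using the two displayed formulas for $\cI^N$ and $\cJ^N$), and then conclude that $\cI^{\ell(w)}$ applied to it vanishes.

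The main (very mild) obstacle is purely bookkeeping: one must be careful about the composition convention in the definition of $\iota$ — i.e.\ which operator is ``outermost'' — and about the degenerate base case where the inner word is empty, so that the constant function $1$ has no spatial variables and the sum defining $\cI$ is empty; in both readings the conclusion $\iota(w)(1)\equiv 0$ holds, but the write-up should pin down the convention so the induction is unambiguous. Once that is fixed, the proof is two short lines: the induction that $\iota(u)(1)$ is $x$-independent, and the observation that $\cI^{N}$ kills $x$-independent functions.
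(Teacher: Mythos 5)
Your proof gets the composition convention backwards, and once that is fixed your auxiliary claim turns out to be false, so the argument does not go through. In Definition~\ref{definition:I-J_operations} one has $\iota(w) = \iota(w_1\dots w_{m-1}) \circ \cI^{\ell(w)}$, and under the usual reading $(f\circ g)(x) = f(g(x))$ the operator corresponding to the \emph{last} letter is the \emph{innermost} one, applied first. Thus $\iota(w)(1) = \iota(w_1\dots w_{m-1})\bigl(\cI^{\ell(w)}(1)\bigr)$, not $\cI^{\ell(w)}\bigl(\iota(w_1\dots w_{m-1})(1)\bigr)$ as you wrote. Under the correct reading the proof is immediate and entirely local: the constant $1$ is a function of zero spatial variables, so the sum $\sum_{j=1}^{0}\partial_{x_j}(\cdot)$ defining $\cI^{\ell(w)} 1$ is empty and $\cI^{\ell(w)} 1 \equiv 0$; since each $\cI^N$ and $\cJ^N$ is linear in its $F$-argument, $\iota(w_1\dots w_{m-1})(0) = 0$, and we are done.

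Your auxiliary claim --- that $\iota(u)(1)$ never depends on its spatial arguments --- is what is supposed to rescue the argument under your (incorrect) outermost reading, but it is simply false. Already $\iota(\wJ)(1) = \cJ^2(1) = f^2(x_1)$ depends on $x_1$, and $\iota(\wJ\wJ)(1) = 6 f^2(x_1) f^2(x_2)$ depends on both variables (this appears explicitly in Example~\ref{Ex_MomRep2}). The subtlety you overlook is that the ``new'' variable $y$ that $\cJ^N$ introduces becomes one of the $x_j$'s at the next level, so $x$-dependence is constantly being created. Consequently your closing remark that ``in both readings the conclusion $\iota(w)(1)\equiv 0$ holds'' is also wrong: under your reading, with $w = \wJ\wI$ one would compute $\iota(\wJ\wI)(1) = \cI^{3}\bigl(\iota(\wJ)(1)\bigr) = \cI^{3}\bigl(f^2(\cdot)\bigr)$, which is manifestly nonzero, contradicting the lemma. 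So the convention issue is not innocuous bookkeeping --- it is exactly what the proof hinges on, and only one reading makes the lemma true.
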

	\begin{proof}
		By definition of $\iota$ it holds that $\iota(w)(1)=\iota(w_1\dots w_{|w|-1})(\cI^{\ell(w)} 1)$. By definiton of the operator $\cI^{\ell(w)}$ it follows that $\cI^{\ell(w)} 1\equiv0$.
	\end{proof}
	
	\begin{proposition}
		\label{Prop_MomRep}
		Let $N\geq 1$, $\cI,\cJ$ as before. Let $t<u_m<\dots<u_1$ and assume furthermore that $F(x_1,\dots,x_{m},u_1,\dots,u_m)$ is $N$-times continuously differentiable in its first $m$ variables as well as that $f\in C^{N-2}$.  Then
		\begin{align*}
			\bE&\Bigl[ (X_t)^N F\bigl(\whW_{\mathbf{u}},\mathbf{u}\bigr) \Bigr]
			\\
			&=
			\sum_{\substack{w\in \cW \\ \ell(w)=N}}\,\, \int_0^t ... \int_0^{t_{|w|-1}} \bE\Bigl[\bigl( \iota(w)F\bigr)\bigl(\whW_{\mathbf{u}},\whW_{t_1},\dots,\whW_{t_{|w|}},\mathbf{u},t_1,\dots,t_{|w|}\bigr) \Bigr] dt_{|w|} \dots dt_1.
		\end{align*}
	\end{proposition}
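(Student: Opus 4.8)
The plan is to establish the identity by strong induction on $N$, using Lemma~\ref{Lem_IJ1} as the one-step reduction and the recursive definitions of $\ell$ and $\iota$ to organise the iteration. For the base case $N=1$ the only word with $\ell(w)=1$ is $w=\wI$, and $\iota(\wI)=\cI^1$; since $\cJ^1\equiv 0$, Lemma~\ref{Lem_IJ1} at $N=1$ says exactly $\bE[(X_t)F(\whW_{\mathbf{u}},\mathbf{u})]=\int_0^t\bE[(\cI^1 F)(\whW_{\mathbf{u}},\whW_{t_1},\mathbf{u},t_1)]\,dt_1$, which is the claimed formula. (The degenerate endpoint of the recursion occurring at $N=2$, where the $\cJ$-branch produces the factor $(X_{t_1})^0$ and terminates at once, is consistent with the base clause $\iota(\wJ)=\cJ^2$ of the definition of $\iota$.)

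For the inductive step, fix $N\ge 2$ and assume the statement at every power $<N$, for all admissible $m$, $F$ and time vectors. I would first check that $\cI^N F,\cJ^N F\in\mathcal{Q}^{m+1}$ and that they still meet the smoothness hypotheses required to invoke the induction hypothesis at powers $N-1$ and $N-2$: since $\cI^N F=\rho N f(y)\sum_j\partial_{x_j}F\,K(t_j,s)$ spends one derivative of $F$ and, on the new coordinate $y$, derivatives of $f$, whereas $\cJ^N F=\tfrac{N(N-1)}{2}f^2(y)F$ spends no derivative of $F$, the hypotheses ``$F\in C^{N}$ in its first $m$ variables'' and ``$f\in C^{N-2}$'' are exactly what is needed for them to propagate. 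Granting this, Lemma~\ref{Lem_IJ1} gives
\begin{align*}
\bE\bigl[(X_t)^N F(\whW_{\mathbf{u}},\mathbf{u})\bigr]
&=\int_0^t\bE\bigl[(X_{t_1})^{N-1}(\cI^N F)(\whW_{\mathbf{u}},\whW_{t_1},\mathbf{u},t_1)\bigr]\,dt_1\\
&\quad+\int_0^t\bE\bigl[(X_{t_1})^{N-2}(\cJ^N F)(\whW_{\mathbf{u}},\whW_{t_1},\mathbf{u},t_1)\bigr]\,dt_1,
\end{align*}
and I would apply the induction hypothesis at power $N-1$ to $\cI^N F$ inside the first integral (with terminal time $t_1$ and time vector $(\mathbf{u},t_1)$) and at power $N-2$ to $\cJ^N F$ inside the second; for $N=2$ the second branch is simply the terminal term $\int_0^t\bE[\cJ^2 F(\cdots)]\,dt_1$, matching the single-letter word $\wJ$. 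After relabelling integration variables, every summand so produced carries a nested integral $\int_0^t\!\int_0^{t_1}\!\cdots$ of exactly the simplex-ordered shape in the statement, with integrand $\bE[\iota(w')(\cI^N F)(\cdots)]$ for some $w'$ with $\ell(w')=N-1$, or $\bE[\iota(w'')(\cJ^N F)(\cdots)]$ for some $w''$ with $\ell(w'')=N-2$.

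It then remains to recognise these as the terms indexed by words of inhomogeneous length $N$. By the defining recursion of $\iota$, $\iota(w'\wI)=\iota(w')\circ\cI^{\ell(w')+1}=\iota(w')\circ\cI^{N}$ and $\iota(w''\wJ)=\iota(w'')\circ\cJ^{\ell(w'')+2}=\iota(w'')\circ\cJ^{N}$, so that $\iota(w')(\cI^N F)=\iota(w'\wI)(F)$ and $\iota(w'')(\cJ^N F)=\iota(w''\wJ)(F)$, with $|w'\wI|=|w'|+1$ and $|w''\wJ|=|w''|+1$. As $w'$ runs over words with $\ell(w')=N-1$ and $w''$ over words with $\ell(w'')=N-2$, the words $w'\wI$ and $w''\wJ$ run, without repetition, over precisely the words $w$ with $\ell(w)=N$ ending in $\wI$, respectively in $\wJ$; and every word of inhomogeneous length $N\ge 1$ has a last letter and is of exactly one of these two forms. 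Summing the two contributions then yields $\sum_{\ell(w)=N}\int_0^t\cdots\int_0^{t_{|w|-1}}\bE[(\iota(w)F)(\whW_{\mathbf{u}},\whW_{t_1},\dots,\whW_{t_{|w|}},\mathbf{u},t_1,\dots,t_{|w|})]\,dt_{|w|}\cdots dt_1$, completing the induction.

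Given Lemma~\ref{Lem_IJ1}, the proof is essentially bookkeeping, and the two points that I expect to be the only real friction are: first, checking that the smoothness and exponential-growth assumptions genuinely survive each application of $\cI^N$ and $\cJ^N$ so that the induction hypothesis is legitimately applicable --- in particular that $f\in C^{N-2}$ is the sharp requirement once one tracks how many further times a factor $f^{(k)}$, once created, can still be differentiated as the power is lowered; and second, verifying that the nested integration domains generated at successive steps assemble exactly into the ordered-simplex iterated integral, with the time labels $t_1>\dots>t_{|w|}$ slotted into the correct arguments of $\iota(w)F$.
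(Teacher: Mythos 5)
Your proposal is correct and follows essentially the same route as the paper: strong induction on $N$ with Lemma \ref{Lem_IJ1} as the one-step reduction, the induction hypothesis applied at powers $N-1$ and $N-2$ to the $\cI^N F$ and $\cJ^N F$ branches, and the recursive definition of $\iota$ identifying the two resulting sums with words of inhomogeneous length $N$ ending in $\wI$ and $\wJ$ respectively. Your added remarks about the propagation of the smoothness hypotheses and the explicit base-case/endpoint bookkeeping are points the paper glosses over, but they do not change the argument.
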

	
	\begin{proof}
		We prove this lemma by induction. For $N=1$ this is already shown in Lemma \ref{Lem_IJ1}. 
		
		Recall $\mathbf{u}=(u_1,\dots,u_m)$ as well as $\whW_{\mathbf{u}}=(\whW_{u_1},\dots,\whW_{u_m})$.
		For the induction step we also use Lemma \ref{Lem_IJ1}, implying that
		\begin{align*}
			\bE\Bigl[ (X_t)^N  F\bigl(\whW_{\mathbf{u}},\mathbf{u}\bigr) \Bigr]
			=&\int_0^t \bE\Bigl[ (X_{t_1})^{N-1}\bigl(\cI^N F \bigr)\bigl(\whW_{\mathbf{u}},\whW_{t_1},\mathbf{u},t_1\bigr) \Bigr] dt_1
			\\
			&+
			\int_0^t \bE\Bigl[ (X_{t_1})^{N-2} \bigl(\cJ^NF\bigr)\bigl(\whW_{\mathbf{u}},\whW_{t_1},\mathbf{u},t_1\bigr) \Bigr]
			dt_1.
		\end{align*}
		\pagebreak[1]
		
		By the induction hypothesis
		\begin{align*}
			\bE\Bigl[& (X_t)^N  F\bigl(\whW_{\mathbf{u}},\mathbf{u}\bigr) \Bigr]
			\\
			=&\mkern-18mu \sum_{\substack{ v \in \cW\\ \ell(v)=N-1}} \, \int\limits_{t\ge t_1\ge \dots\ge t_{|v|+1}}
			\mkern-28mu 
			\bE\bigl[ \bigl(\iota(v)(\cI^N F)\bigr)\bigl(\whW_{\mathbf{u}},\whW_{t_1},\dots,\whW_{t_{|v|+1}},\mathbf{u},t_{1},\dots,t_{|v|+1}\bigr) \bigr]
			\,dt_{|v|+1}\dots dt_1
			\\
			+&\mkern-18mu \sum_{\substack{ v \in \cW\\ \ell(v)=N-2}} \, \int\limits_{t\ge t_1\ge \dots\ge t_{|v|+1}}
			\mkern-28mu 
			\bE\bigl[ \bigl(\iota(v)(\cJ^N F)\bigr)\bigl(\whW_{\mathbf{u}},\whW_{t_1},\dots,\whW_{t_{|v|+1}},\mathbf{u},t_{1},\dots,t_{|v|+1}\bigr) \bigr]
			\,dt_{|v|+1}\dots dt_1
			\\
			=&\sum_{\substack{w\in \cW: \ell(w)=N\\ w_{|w|}=\wI}} \,  \int\limits_{t\ge t_1\ge \dots\ge t_{|w|}}
			\bE\bigl[ \bigl(\iota(w)F\bigr)\bigl(\whW_{\mathbf{u}},\whW_{t_1},\dots,\whW_{t_{|w|}},\mathbf{u},t_{1},\dots,t_{|w|}\bigr) \bigr]
			\,dt_{|w|}\dots dt_1
			\\
			&+\sum_{\substack{w\in \cW: \ell(w)=N\\ w_{|w|}=\wJ}} \, \int\limits_{t\ge t_1\ge  \dots\ge t_{|w|}}\bE\bigl[ \bigl(\iota(w)F\bigr)\bigl(\whW_{\mathbf{u}},\whW_{t_1},\dots,\whW_{t_{|w|}},\mathbf{u},t_{1},\dots,t_{|w|}\bigr) \bigr]
			\,dt_{|w|}\dots dt_1
			\\
			=&\sum_{\substack{w\in \cW \\ \ell(w)=N}}\,\, \int_0^t ... \int_0^{t_{|w|-1}} \bE\Bigl[\bigl( \iota(w)F\bigr)\bigl(\whW_{\mathbf{u}},\whW_{t_1},\dots,\whW_{t_{|w|}},\mathbf{u},t_1,\dots,t_{|w|}\bigr) \Bigr] dt_{|w|} \dots dt_1.
		\end{align*}
		Here we used the fact that for $w$ with $\ell(w)=N$ we have either 
		\begin{equation*}
			\iota(w)F=\iota(w_1\dots w_{|w|-1})\cI^N F
			\quad \mbox{or} \quad 
			\iota(w)F=\iota(w_1\dots w_{|w|-1})\cJ^NF.
		\end{equation*}
	\end{proof}
	
	\begin{theorem}
		\label{Thm_MomRep}
		Let $N\ge 1$. Then
		\begin{equation}
			\label{eq:Thm_MomRep}
			\begin{aligned}
				&\bE\Bigl[ (X_T)^N \Bigr]
				\\
				&=
				\sum_{\substack{w\in \cW \\ \ell(w)=N}}\,\, \int_0^T \int_0^{t_1} \dots \int_0^{t_{|w|-1}} \bE\Bigl[\bigl( \iota(w)1\bigr)\bigl(\whW_{t_1},\dots,\whW_{t_{|w|}},t_1,\dots,t_{|w|}\bigr) \Bigr] dt_{|w|} \dots dt_1.
			\end{aligned}
		\end{equation}
	\end{theorem}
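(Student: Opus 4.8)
The plan is to obtain Theorem \ref{Thm_MomRep} as the degenerate case of Proposition \ref{Prop_MomRep} in which there are no ``external'' frozen time points: one takes $t = T$, $m = 0$, and $F$ equal to the constant one function. With $m = 0$ the left-hand side $\bE[(X_t)^N F(\whW_{\mathbf{u}}, \mathbf{u})]$ is just $\bE[(X_T)^N]$, the maps $\iota(w) F$ reduce to $\iota(w) 1$, and the nested integration region $T \ge t_1 \ge \dots \ge t_{|w|} \ge 0$ is precisely the one in \eqref{eq:Thm_MomRep}. So, modulo justifying the $m = 0$ instance of the proposition, the formula falls out immediately.

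First I would check that Lemma \ref{Lem_IJ1}, the engine of the induction in Proposition \ref{Prop_MomRep}, still holds when $m = 0$ and $F$ is constant. The only place the argument could be sensitive to $m$ is the Clark--Ocone step: for a constant $F$ this reads simply $F = \bE[F]$ with no It\^{o}-integral term, which is consistent with the fact that $(\cI^N F)(\,\cdot\,, s) = \rho N f(\,\cdot\,) \sum_{j=1}^{0} \partial_{x_j} F\, K(t_j, s)$ is an empty sum and hence vanishes. Running It\^{o}'s formula for $x \mapsto x^N$ exactly as in the proof of Lemma \ref{Lem_IJ1} then leaves only $\bE[(X_t)^N F] = \int_0^t \bE[(X_s)^{N-2}(\cJ^N F)(\whW_s, s)]\, ds$, which is the $m = 0$ case of the lemma. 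The induction step of Proposition \ref{Prop_MomRep} never uses the size of $m$, so it carries over unchanged down to $N = 1$. If one would rather not touch the $m = 0$ boundary of the stated hypotheses, an equivalent route is to apply Proposition \ref{Prop_MomRep} with a single dummy frozen variable $u_1 > T$ and $F \equiv 1$, and then discard the trivial $\whW_{u_1}$-dependence. Either way, the only regularity needed is $f \in C^{N-2}$, which holds under the standing assumption on $f$, so that all the operators $\cI^k, \cJ^k$ with $k \le N$ used along the iteration are well defined.

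It is also worth recording the simplification coming from Lemma \ref{Lem_Izero}: any word $w$ whose last letter is $\cI$ has $\iota(w) 1 \equiv 0$, so those summands vanish and the sum in \eqref{eq:Thm_MomRep} may be taken over words $w \in \cW$ with $\ell(w) = N$ ending in the letter $\cJ$. This is not needed for the statement as written, but it is what makes the representation tractable in the term-by-term comparison later on.

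Honestly, the hard part is already behind us in Proposition \ref{Prop_MomRep}; here the only thing that requires care is the degenerate $m = 0$ bookkeeping --- making sure the empty-sum conventions built into the definitions of $\cI^N$ and $\iota$, together with the nested-simplex integration domains, are propagated correctly through the induction when there are no external time variables. Once that is in place the theorem is immediate.
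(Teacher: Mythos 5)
Your proposal is correct and takes the same route as the paper, which disposes of the theorem in one line by invoking Proposition \ref{Prop_MomRep} with $F\equiv 1$. The extra care you take with the $m=0$ bookkeeping (and the dummy-variable workaround) addresses a degenerate case the paper's statement of Proposition \ref{Prop_MomRep} glosses over, and your resolution of it is sound.
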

	
	\begin{proof}
		This theorem follows immediately from Proposition \ref{Prop_MomRep} with $F\equiv 1$.
	\end{proof}
	
	\begin{example}
		In the case $N=3$ there are $3$ words $w$ with $\ell(w)=3$, namely $\wI\wI\wI$, $\wJ\wI$ and $\wI\wJ$. By Lemma \ref{Lem_Izero} we have $\iota(\wI)1=0$ and therefore the only non-trivial word is $\wI\wJ$. It follows that
		\begin{align*}
			\bigl(\iota(\wI\wJ)1\bigr)(y,x,t,s)&=\bigl(\cI^1 \cJ^3 1\bigr)(y,x,t,s)=3 \bigl( \cI^1 f^2\bigr)(y,x,t,s)
			=6\rho f\bigl( x \bigr) ff'\bigl( y \bigr)K(t,s).
		\end{align*}
		Thus it follows that
		\begin{align*}
			\bE\Bigl[ (X_T)^3 \Bigr]=6\rho \int_0^T \int_0^t \bE\Bigl[ f\bigl( \whW_s \bigr) ff'\bigl( \whW_t \bigr)\Bigr] K(t,s) \,ds\,dt.
		\end{align*}
		This expression agrees with Example \ref{example:Gassiat-X^3}.
	\end{example}
	
	\begin{example}
		\label{Ex_MomRep2}
		For $N=4$ the non-trivial words are $\wJ\wJ$ and  $\wI\wI\wJ$. A short calculation shows that
		\begin{align*}
			\bigl(\iota(\wJ\wJ)1\bigr)(x,y,t,s)=\bigl(\cJ^2\cJ^41\bigr)(x,y,t,s)=6f^2(x)f^2(y).
		\end{align*}
		Furthermore we see that $\bigl(\cI^2\cJ^41\bigr)(z,y,t,s)=12\rho f(y)\cdot 2 ff'(z) K(t,s)$.
		Applying $\mathcal{I}^1$ to this function we get
		\begin{align*}
			\bigl(\cI^1(\cI^2\cJ^41)\bigr)&(z,y,x,t,s,r)
			\\
			&=24\rho^2 f(x)\Bigl( f'(y) ff'(z)K(s,r)
			+f(y)\partial_z \bigl(ff'(z)\bigr)K(t,r)\Bigr)K(t,s)
			\\
			&=24\rho^2 f(x)f'(y)ff'(z)K(s,r)K(t,s)
			\\
			&\quad+
			24\rho^2 f(x)f(y)f'f'(z)K(t,r)K(t,s)
			\\
			&\quad +24\rho^2 f(x)f(y)ff''(z)K(t,r)K(t,s).
		\end{align*}
		Using Theorem \ref{Thm_MomRep} and the fact that $\iota(\wI\wI\wJ)1=(\cI^1(\cI^2\cJ^41)$ it follows that
		\begin{align*}
			\bE\Bigl[ (X_T)^4 \Bigr]=&6 \int_0^T \int_0^t \bE\Bigl[ f\bigl( \whW_s \bigr)^2 f\bigl( \whW_t \bigr)^2 \Bigr] \,ds\,dt
			\\
			&+24\rho^2 \int_0^T \int_0^t \int_0^s \bE\Bigl[ f\bigl( \whW_r \bigr) f'\bigl( \whW_s \bigr) f\bigl( \whW_t \bigr) f'\bigl( \whW_t \bigr) \Bigr] K(t,s)K(s,r) \, dr\,ds\,dt,
			\\
			&+24\rho^2 \int_0^T \int_0^t \int_0^s \bE\Bigl[ f(\whW_r) f(\whW_s)  f'\bigl( \whW_t \bigr) f'\bigl( \whW_t \bigr) \Bigr] K(t,s)K(t,r) \, dr\,ds\,dt
			\\
			&+24\rho^2 \int_0^T \int_0^t \int_0^s \bE\Bigl[ f(\whW_r) f(\whW_s)  f\bigl( \whW_t \bigr) f''\bigl( \whW_t \bigr) \Bigr] K(t,s)K(t,r) \, dr\,ds\,dt,
		\end{align*}
		which agrees with Example \ref{example:Gassiat-X^4}. 
	\end{example}
	
	\subsection{Representation of the integrand}
	
	Let $m \in \mathbb{N}$ and consider a word $w\in \cW$ such that $|w|=m$. Let $1$ denote the constant function with value $1$.
	Our goal is to find a compact representation for the expression $\bigl( \iota(w) 1 \bigr)(x_1,\dots,x_m,t_1,\dots,t_m)$. 
	\begin{definition}
		\label{Def_Afcts}
		For $w \in \cW$ such that $|w|=m$, let 
		\begin{equation*}
			N_J^w=\{j: w_j = \wJ\} \quad \mbox{and} \quad N_I^w=\{1,\dots,m\}\setminus N_J^w. 
		\end{equation*}
		Let $k\coloneqq |N_I^w|\ge 1$ and use an enumeration $N_I^w=\{j_1,\dots,j_k\}$ such that $j_1<\dots<j_k$. 
		
		We define 
		\begin{equation*}
			\cL^w= \Big\{ \mathbf{l}=(l_1,\dots,l_k) \in \bN^{\times k}: l_1\le m-j_k,\dots,l_k\le m-j_1 \Big\}.
		\end{equation*}
		For $\mathbf{l}\in \cL^w$, we define $\alpha_\mathbf{l}: \{1,\dots,m\} \rightarrow \{0,1,\dots,m\}$ such that 
		\begin{align*}
			&\alpha_\mathbf{l}(m-j_{k-i}+1)=l_{i} 
			\quad \text{for } i=1,\dots,k-1,
			\\
			&\alpha_\mathbf{l}(j)=0 \quad \forall j \in \{m-j+1:j \in N_J^w\}. 
		\end{align*}
		If $k=0$ we define $\cL^w=\{1\}$ and set $\alpha_1\equiv 0$.
	\end{definition}
	
	Recall that $w_m=\wJ$, as otherwise $w1\equiv 0$ by Lemma \ref{Lem_Izero}.
	\pagebreak[1]
	\begin{proposition}
		\label{Lem_wRep}
		Fix $m\in \bN$ and $w\in \cW$ such that $|w|=m$. For $\mathbf{l} \in \cL^w$, we define
		$\psi_{\mathbf{l}}: \bR^{\times m} \to \bR$ to be 
		\begin{equation}
			\label{eq:Lem_wRep}
			\psi_{\mathbf{l}}(x_1,\dots,x_m)= \partial_{x_{l_1}}\dots \partial_{x_{l_{k}}} 
			\prod_{\substack{l \in N_I^w}}f(x_{m-l+1}) \prod_{\substack{l \in N_J^w}}f^2(x_{m-l+1}).
		\end{equation}
		Then there exist a constant $C_w$ only depending on $w$ and $\rho$ such that
		\begin{align}
			\nonumber
			(\iota(w)1)&(x_1,\dots,x_m,t_1,\dots,t_m)
			\\
			\nonumber
			=&C_w\sum_{l_1=1}^{m-j_k} \dots \sum_{l_{k}=1}^{m-j_{1}} \partial_{x_{l_1}}\dots \partial_{x_{l_{k}}} \biggl( \prod_{\substack{l \in N_J^w}}f^2(x_{m-l+1}) \prod_{\substack{l \in N_I^w}}f(x_{m-l+1}) \biggr)
			\\
			\nonumber
			&\mkern 300mu \cdot K(t_{l_1},t_{m-j_k+1}) \dots K(t_{l_{k}},t_{m-j_{1}+1})
			\\
			\label{Eq_wRep}
			=&C_w \sum_{\mathbf{l} \in \cL^w} \psi_{\mathbf{l}}(x_1,\dots,x_m)K(t_{\alpha_{\mathbf{l}}(2)},t_2) \dots K(t_{\alpha_{\mathbf{l}}(m)},t_m).
		\end{align}
	\end{proposition}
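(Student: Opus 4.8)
The plan is to establish \eqref{Eq_wRep} by induction on the word length $m=|w|$, viewing $\iota(w)1$ as the result of applying a sequence of operators $\cI^\bullet,\cJ^\bullet$ to the constant $1$ and tracking how this acts on a function of the product form appearing on the right-hand side. First note that if $w_m=\wI$ then $\iota(w)1\equiv 0$ by Lemma \ref{Lem_Izero}, while the right-hand side of \eqref{Eq_wRep} is an empty sum (then $j_k=m$, so $l_1\le m-j_k=0$ is unsatisfiable and $\cL^w=\emptyset$); hence we may assume $w_m=\wJ$, as the paper notes. Unwinding the recursive definition of $\iota$, the function $\iota(w)1$ is built by applying the operators attached to $w_m,w_{m-1},\dots,w_1$ to $1$ in this order: the operator of $w_p$ is $\cI^{\ell(w_1\cdots w_p)}$ or $\cJ^{\ell(w_1\cdots w_p)}$, it is applied at step $m-p+1$, and it introduces the pair $(x_{m-p+1},t_{m-p+1})$, so that $x_j,t_j$ correspond to the letter $w_{m-j+1}$. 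Two structural facts drive everything: (i) a $\cJ^N$-operator multiplies the current function by $\tfrac{N(N-1)}{2}$ and by $f^2$ of the freshly introduced variable; (ii) an $\cI^N$-operator multiplies by $\rho N$ and by $f$ of the freshly introduced variable, and replaces the current function $G$ by $\sum_j\partial_{x_j}G\,K(t_j,s)$ with the sum over all already-present variables $x_j$ — and, crucially, the Liouville kernels carry no dependence on any $x$-variable, so each $\partial_{x_j}$ slides past all previously accumulated kernels and acts only on the product of $f$'s and derivatives of $f$'s.

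I would phrase the induction over the growing suffix of $w$: after applying the first $r$ operators (those of $w_m,\dots,w_{m-r+1}$) the result should equal $C'\sum_{\mathbf{l}'\in\cL^{u'}}\psi^{u'}_{\mathbf{l}'}(x_1,\dots,x_r)\prod_{p=2}^{r}K(t_{\alpha_{\mathbf{l}'}(p)},t_p)$ with $u'=w_{m-r+1}\cdots w_m$ and a constant $C'$ depending only on the word and $\rho$ (the operator superscripts enter solely through $C'$). The base case $r=1$ is immediate since $\cJ^{\ell(w)}1=\tfrac{\ell(w)(\ell(w)-1)}{2}f^2(x_1)$, matching the $k=0$ conventions of Definition \ref{Def_Afcts} (empty kernel product, $\psi_1=f^2(x_1)$). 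For the inductive step, prepend $c:=w_{m-r}$ to $u'$, forming $u=c\,u'$ of length $r+1$ and introducing $(x_{r+1},t_{r+1})$. If $c=\wJ$: multiplying by $\tfrac{N(N-1)}{2}f^2(x_{r+1})$ leaves $N_I$ unchanged, so $\cL^u=\cL^{u'}$ and $\alpha$ is unchanged on $\{2,\dots,r\}$ while $\alpha(r+1)=0$ and $K(t_0,\cdot)\equiv 1$ absorbs the new index, and the extra $f^2(x_{r+1})$ is precisely the new factor in $\prod_{l\in N_J^u}f^2(x_{m-l+1})$. If $c=\wI$: fact (ii) gives, for each $\mathbf{l}'$, the contribution $\rho N f(x_{r+1})\sum_{j=1}^{r}\bigl(\partial_{x_j}\psi^{u'}_{\mathbf{l}'}\bigr)K(t_j,t_{r+1})\prod_{p=2}^{r}K(t_{\alpha_{\mathbf{l}'}(p)},t_p)$, having passed $\partial_{x_j}$ through the old kernels; since $j\le r<r+1$ one has $f(x_{r+1})\partial_{x_j}\psi^{u'}_{\mathbf{l}'}=\psi^{u}_{(\mathbf{l}',j)}$ (prepending a $\wI$ multiplies the base product by $f(x_{r+1})$ and appends the derivative $\partial_{x_j}$); the assignment $(\mathbf{l}',j)\mapsto\mathbf{l}$ with $\mathbf{l}'\in\cL^{u'}$, $j\in\{1,\dots,r\}$ is a bijection onto $\cL^u$ (the new $\wI$ sits at position $1$ of $u$, giving the bound $(r+1)-1=r$ on the new entry, and the old $\wI$-positions shift up by one, preserving their bounds); and the new kernel $K(t_j,t_{r+1})$ forces $\alpha^u(r+1)=j$ with $\alpha^u=\alpha_{\mathbf{l}'}$ elsewhere. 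Either way this is \eqref{Eq_wRep} for $u$ with an updated constant, and $r=m$ gives the proposition. A check against Examples \ref{Ex_MomRep2} (the words $\wJ\wJ$ and $\wI\wI\wJ$) confirms the bookkeeping.

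The main obstacle is combinatorial rather than analytic: carrying the three data $\cL^w$, $\alpha_{\mathbf{l}}$, $\psi_{\mathbf{l}}$ coherently through the recursion — in particular the index arithmetic translating word positions into variable indices ($w_p\leftrightarrow x_{m-p+1}$), the fact that the derivative index $j$ supplied by an $\cI^N$-operator ranges over exactly the variables already present (equivalently the letters to the right of $c$) and must be placed in the correct slot of $\mathbf{l}$ and of $\alpha$, and the verification that prepending a letter transforms $N_I^w$, and hence $\cL^w$ and $\alpha_{\mathbf{l}}$, exactly as encoded by Definition \ref{Def_Afcts}. One also has to carry the overall constant, $C_w=\rho^{|N_I^w|}\prod_p(\text{scalar of the }p\text{-th operator})$; only its existence and its dependence on $(w,\rho)$ are needed here, with the precise value deferred to where the weak-rate estimate actually requires it.
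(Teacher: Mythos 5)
Your proof is correct, and it takes a genuinely different route from the paper's. The paper inducts on $k=|N_I^w|$, the number of $\wI$ letters: it writes $w=w_1\wI w_2$ with $w_1=\wJ^{j_1-1}$ the leading $\wJ$-block, applies the induction hypothesis to $\iota(w_2)1$, then hits the result with $\iota(\wI)$ and finally with $\iota(w_1)$ (a bundle of $\cJ$-operators). Your induction is finer and more uniform: you prepend one letter at a time to a growing suffix $u'=w_{m-r+1}\cdots w_m$, and in each step you track exactly how the three data $\cL^{u'}$, $\alpha_{\mathbf{l}'}$, $\psi^{u'}_{\mathbf{l}'}$ transform under $c\mapsto cu'$. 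The key structural points you isolate — that the kernels carry no $x$-dependence so $\partial_{x_j}$ commutes past them, that prepending $\wJ$ leaves $\cL$ and the kernel product untouched while multiplying $\psi$ by $f^2$ of the new variable, and that prepending $\wI$ gives a bijection $\cL^{u'}\times\{1,\dots,r\}\to\cL^{u}$ with $\alpha^u(r{+}1)=j$ — are precisely what the paper also relies on, just packaged in coarser steps. Your variant has the advantage of a trivial base case (a single $\wJ$) and no $k=1$ special case; the paper's variant has marginally less index bookkeeping per step. Both yield the existence and $(w,\rho)$-dependence of $C_w$, with the exact value deferred to Remark \ref{Rem_Cw} as you note. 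Your cross-check against Example \ref{Ex_MomRep2} is the right sanity check.
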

	
	\begin{proof}
		We prove this claim by induction over $k$. We denote by $C$ a constant which does not depend on $t_1,\dots,t_m$ or $x_1,\dots,x_m$ and which may change every line.
		
		If $k=1$ then $w$ can be written as $w=w_1 \wI w_2$ where $w_1=\wJ^{j-1}$ and $w_2=\wJ^{m-j}$. By construction of $\iota$ it follows that $\iota(w)1=C \bigl(\iota(w_1)\bigl(\iota(\wI) (\iota(w_2)1)\bigr)\bigr)$. From Definition \ref{definition:I-J_operations} it follows that $\bigl(\iota(w_2)1\bigr)(x_1,\dots,x_{m-j})=C \prod_{l=1}^{m-j}f^2(x_l)$.
		Now we apply $\iota(\wI)$ and get
		\begin{gather*}
			\Bigl(\iota(\wI) \bigl(\iota(w_2)1\bigr) \Bigr)(x_1,\dots,x_{m-j+1})=C\sum_{i=1}^{m-j} f(x_{m-j-1})K(t_i,t_{m-j+1})\partial_{x_i} \prod_{l=1}^{m-j}f^2(x_l) .
		\end{gather*}
		Applying $\iota(w_1)$ to this function it follows that
		\begin{align*}
			(\iota(w)1)&=C\prod_{l=m-j+2}^{m}f^2(x_l) \sum_{i=1}^{m-j}  K(t_i,t_{m-j+1})f(x_{m-j-1})\partial_{x_i}\prod_{l=1}^{m-j}f^2(x_l)
			\\
			&=C \sum_{i=1}^{m-j}  K(t_i,t_{m-j+1}) f(x_{m-j-1}) \partial_{x_i}\prod_{l \in N_J^w}f^2(x_{m-l+1}).
		\end{align*}
		
		Now let $k>1$ be arbitrary. We split $w$ up into $w=w_1\wI w_2$ where $w_1=\wJ^{j_1-1}$ and such that the letter $\wI$ appears in $w_2$ $k-1$ times.
		
		By induction hypothesis we have
		\begin{align*}
			\bigl(\iota(w_2)&1\bigr)(x_1,\dots,x_{m-j_1},t_1,\dots,t_{m-j_1})  
			\\
			&=C\sum_{l_1=1}^{m-j_k} \dots \sum_{l_{k-1}=1}^{m-j_{2}}\partial_{x_{l_1}}\dots \partial_{x_{l_{k-1}}} 
			\prod_{\substack{l \in N_I^w\\ l>j_1}}f(x_{m-l+1}) \prod_{\substack{l \in N_J^w\\ l>j_1}}f^2(x_{m-l+1}) 
			\\
			&\mkern 300mu \cdot K(t_{l_1},t_{m-j_k+1}) \dots K(t_{l_{k-1}},t_{m-j_{2}+1}).
		\end{align*}
		By definition of $\iota(\wI)$ we have
		\begin{align*}
			\Bigl(\iota(\wI)&\bigl(\iota(w_2)1\bigr)\Bigr)(x_1,\dots,x_{m-j_1+1},t_1,\dots,t_{m-j_1+1})  
			\\
			&=C\sum_{l_k=1}^{m-j_1} \partial_{x_{l_k}} \bigl(\iota(w_2)1\bigr)(x_1,\dots,x_{m-j_1},t_1,\dots,t_{m-j_1})f(x_{m-j_1+1}) K(t_{l_k},t_{m-j_1+1})
			\\
			&=C\sum_{l_1=1}^{m-j_k} \dots \sum_{l_{k}=1}^{m-j_{1}}\partial_{x_{l_1}}\dots \partial_{x_{l_{k}}} 
			\prod_{\substack{l \in N_I^w}}f(x_{m-l+1}) \prod_{\substack{l \in N_J^w\\ l>j_1}}f^2(x_{m-l+1})
			\\
			&\mkern 300mu \cdot K(t_{l_1},t_{m-j_k+1}) \dots K(t_{l_{k-1}},t_{m-j_{1}+1}).
		\end{align*}
		
		From Definition \ref{definition:I-J_operations} it follows that
		\begin{align*}
			\Bigl(\iota(&w_1)\bigl(\iota(\wI w_2)1\bigr)\Bigr)(x_1,\dots,x_{m},t_1,\dots,t_{m})
			\\
			&=C\sum_{l=1}^{j_1-1} f^2(x_{m-l})\bigl(\iota(\wJ w_2)1\bigr)(x_1,\dots,x_{m-j_1+1},t_1,\dots,t_{m-j_1+1})  
			\\
			&=C\sum_{l_1=1}^{m-j_k} \dots \sum_{l_{k}=1}^{m-j_{1}}\partial_{x_{l_1}}\dots \partial_{x_{l_{k}}} 
			\prod_{\substack{l \in N_I^w}}f(x_{m-l+1}) \prod_{\substack{l \in N_J^w}}f^2(x_{m-l+1})
			\\
			&\mkern 300mu \cdot K(t_{l_1},t_{m-j_k+1}) \dots K(t_{l_{k}},t_{m-j_{1}+1}).
		\end{align*}
		
		Now it follows from a simple substitution that
		\begin{align*}
			(\iota(w)1)&(x_1,\dots,x_m,t_1,\dots,t_m)=\sum_{\mathbf{l}\in \cL^w} \psi_{\mathbf{l}}(x_1,\dots,x_m) K(t_{\alpha_{\mathbf{l}}(2)},t_2)\dots K(t_{\alpha_{\mathbf{l}}(m)},t_m).
		\end{align*}
	\end{proof}
	
	\begin{remark}\label{Rem_Cw}
		Let $w \in \cW$ with  $w_{|w|}=\wJ$. Following Definition \ref{definition:I-J_operations} and the proof of Proposition \ref{Lem_wRep}, manual computation allows us to verify that the constant $C_w$ satisfies
		\begin{equation*}
			|C_w|=|\rho|^{2|w|-\ell(w)}2^{\ell(w)-|w|}\ell(w)!.
		\end{equation*}
	\end{remark}
	
	\begin{example}
		In Example \ref{Ex_MomRep2} we looked in the fourth moment at the word $\wI\wI\wJ$. Written out we had 
		\begin{align*}
			\bigl(\iota(\wI\wI\wJ)1)\bigr)(x_1,x_2,x_3,t_1,t_2,t_3)
			&=24\rho^2 f(x_3)f(x_2)\partial_{x_1}\bigl(f(x_1)f'(x_1)\bigr)K(t_1,t_2)K(t_1,t_3)
			\\
			&\quad +24 \rho^2 f(x_3)f'(x_2)f(x_1)f'(x_1)K(t_1,t_2)K(t_2,t_3).
		\end{align*}
		In the notation of this section we would have $m=3$, $N_J^w=\{3\}$, $N_I^w=\{1,2\}$ as well as $\cL^w=\{(1,1),(1,2) \}$. Now define the functions $\alpha_{1,1},\alpha_{1,2}:\{1,2,3\}\rightarrow \{0,1,2,3\}$ as well as $\psi_{{1,2}},\psi_{{1,1}}:\mathbb{R}^3\rightarrow\mathbb{R}$ such that
		\begin{enumerate}
			\item $\alpha_{1,1}(1)=0$, $\alpha_{1,1}(2)=1$, $\alpha_{1,1}(3)=1$,
			\item $\alpha_{1,2}(1)=0$, $\alpha_{1,2}(2)=1$, $\alpha_{1,2}(3)=2$,
			\item $\psi_{{1,1}}(x_1,x_2,x_3)=\partial_{x_1}\partial_{x_1} f(x_3)f(x_2)f^2(x_1)=2 f(x_3)f(x_2)\partial_{x_1}\bigl(f(x_1)f'(x_1)\bigr)$,
			\item $\psi_{{1,2}}(x_1,x_2,x_3)=\partial_{x_1}\partial_{x_2} f(x_3)f(x_2)f^2(x_1)=2 f(x_3)f'(x_2)f(x_1)f'(x_1)$,
			\item $C_{\wI\wI\wJ}=12\rho^2$.
		\end{enumerate}
		Then it follows by putting everything together
		\begin{align*}
			\bigl(\iota(\wI\wI\wJ)1)\bigr)(x_1,x_2,x_3,t_1,t_2,t_3)
			&= C_{\wI\wI\wJ}\psi_{{1,1}}(x_1,x_2,x_3)K(t_{\alpha_{1,1}(2)},t_2)K(t_{\alpha_{1,1}(3)},t_3)
			\\
			&\quad + C_{\wI\wI\wJ}\psi_{{1,2}}(x_1,x_2,x_3)K(t_{\alpha_{1,2}(2)},t_2)K(t_{\alpha_{1,2}(3)},t_3).
		\end{align*}
		
		The other word appearing was $w=\wJ \wJ$.
		Written out we had 
		\begin{align*}
			\bigl(\iota(\wJ\wJ)1)\bigr)(x_1,x_2,t_1,t_2)
			&=6 f^2(x_1)f^2(x_2).
		\end{align*}
		In the notation of this section we would have $m=2$, $N_J^w=\{1,2\}$ as well as $N_I^w=\emptyset$. By Definition \ref{Def_Afcts} we set $\cL^w=\{1\}$. Therefore we have $\alpha_1(1)=\alpha_2(1)=0$ as well as $\psi_1(x_1,x_2)=f^2(x_1)f^2(x_2)$. The constant $C_w$ is given by $C_w=6$.
		Then it follows by putting everything together
		\begin{align*}
			\bigl(\iota(\wJ\wJ)1)\bigr)(x_1,x_2,t_1,t_2)
			&=6 f^2(x_1)f^2(x_2)=C_w \psi_1(x_1,x_2)K(t_{\alpha_{1,2}(2)},t_2).
		\end{align*}
	\end{example}
	
	\begin{example}
		Consider the word $w=\wI \wJ\wJ$. By Definition \ref{definition:I-J_operations} we have
		\begin{align*}
			\iota(w)1&=\bigl(\wI^1\wJ^3\wJ^51\bigr)(x_1,x_2,x_3,t_1,t_2,t_3)
			\\
			&=30\rho \partial_{x_1} f(x_3) f^2(x_2)f^2(x_1) K(t_1,t_3)+30\rho \partial_{x_2} f(x_3) f^2(x_2)f^2(x_1) K(t_1,t_2).
		\end{align*}
		For this word we would have $m=3$, $N_J^w=\{2,3\}$ $N_I^w=\{1\}$ as well as $\cL^w=\{1,2 \}$. 
		Define the functions $\alpha_{1},\alpha_{2}:\{1,2,3\}\rightarrow \{0,1,2,3\}$ as well as $\psi_{{1}},\psi_{{2}}:\mathbb{R}^3\rightarrow\mathbb{R}$ such that
		\begin{enumerate}
			\item $\alpha_{1}(1)=0$, $\alpha_{1}(2)=0$, $\alpha_{1}(3)=1$,
			\item $\alpha_{2}(1)=0$, $\alpha_{2}(2)=0$, $\alpha_{2}(3)=2$,
			\item $\psi_{{1}}(x_1,x_2,x_3)= \partial_{x_1} f(x_3) f^2(x_2)f^2(x_1)$,
			\item $\psi_{{2}}(x_1,x_2,x_3)= \partial_{x_2} f(x_3) f^2(x_2)f^2(x_1)$,
			\item $C_w=30\rho$.
		\end{enumerate}
		Then we have
		\begin{align*}
			\bigl(\iota(w)1)\bigr)(x_1,x_2,x_3,t_1,t_2,t_3)
			&= C_w\psi_{{1}}(x_1,x_2,x_3)K(t_{\alpha_{1}(2)},t_2)K(t_{\alpha_{1}(3)},t_3)
			\\&\qquad+ C_w\psi_{{2}}(x_1,x_2,x_3)K(t_{\alpha_{2}(2)},t_2)K(t_{\alpha_{2}(3)},t_3)
			\\
			&=C_w\psi_{{1}}(x_1,x_2,x_3)K(t_{\alpha_{1}(3)},t_3)
			\\
			&\qquad+ C_w\psi_{{2}}(x_1,x_2,x_3)K(t_{\alpha_{2}(3)},t_3).
		\end{align*}
	\end{example}
	
	\subsection{Discrete time approximations}
	
	Next, we demonstrate that the stochastic process $X_T^n$ defined in Equation \eqref{eq:RoughVolatilityModel-Discrete} has a similar decomposition as Theorem \ref{Thm_MomRep} (see Theorem \ref{Thm_MomRepDisr} below). To do this, we describe operators (see \ref{Def_DiscWords} below) that have the same It\^{o}'s formula type properties to those of Definition \ref{definition:I-J_operations} but for the process $X_T^n$ and then iterate these to get a formula equivalent to Equation \eqref{eq:Thm_MomRep}. 
	\begin{definition}
		\label{Def_DiscWords}
		Fix $n \in \mathbb{N}$ and recall $\eta(t)=\lfloor nt\rfloor/n$. For $K$ the Liouville kernel defined in Equation \eqref{Liouville-Kernel}, we denote
		\begin{equation}
			\label{eq:Def_DiscWords_K}
			\tilde K(t,s)=K(\eta(t),s).
		\end{equation}
		For $k \in \mathbb{N}$, we define $\tilde{\cI}^N: \mathcal{Q}^{m} \to C$ as
		\begin{equation*}
			(\tilde{\cI}^N F)(x_1,\dots,x_m,y,t_1,\dots,t_m,s)=\rho N f(y) \sum_{j=1}^m \partial_{x_j} F(x_1,\dots,x_m,t_1,\dots,t_m) \tilde{K}(t_j,s). 
		\end{equation*}
		
		We define an embedding $\tilde{\iota}: \cW \to \mathcal{Q}$ as follows: If $w$ has a single letter then 
		\begin{equation*}
			\tilde{\iota}(\wI)= \tilde{\cI}^1
			\quad\mbox{or}\quad
			\tilde{\iota}(\wJ)=\cJ^2.
		\end{equation*}
		For $w=w_1\dots w_m$ we define
		\begin{equation*}
			\tilde{\iota}(w)=
			\begin{cases}
				\tilde{\iota}(w_1\dots w_{m-1}) \circ \tilde{\cI}^{\ell(w)} \quad &\quad \mbox{if} \quad w_m = I
				\\
				\tilde{\iota}(w_1\dots w_{m-1}) \circ \cJ^{\ell(w)} \quad &\quad \mbox{if} \quad w_m = J. 
			\end{cases}
		\end{equation*}
	\end{definition}
	
	\begin{theorem}
		\label{Thm_MomRepDisr}
		Let $N\ge 1$. Then
		\begin{align*}
			\bE&\Bigl[ (X_T^n)^N \Bigr]
			=
			\\
			&\sum_{\substack{ w\in {\cW} \\ \ell(w)=N}}\,\, \int_0^T \int_0^{t_1} \dots \int_0^{t_{|w|-1}} \bE\Bigl[\bigl( \tilde \iota(w)1\bigr)\bigl(\whW_{t_1},\dots,\whW_{t_{|w|}},t_1,\dots,t_{|w|}\bigr) \Bigr] dt_{|w|} \dots dt_1.
		\end{align*}
	\end{theorem}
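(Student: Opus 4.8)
The plan is to obtain this as the exact discrete counterpart of Theorem~\ref{Thm_MomRep}: run the same chain of arguments --- Lemma~\ref{Lem_IJ1}, then Proposition~\ref{Prop_MomRep}, then the specialization $F\equiv 1$ --- with $X$ replaced by $X^n$ throughout. Only two things change, and they are exactly the two changes that distinguish $\tilde\cI^N$ from $\cI^N$ and produce the decorated kernel $\tilde K$. First, It\^o's formula for $x\mapsto x^N$ along $X^n$ uses $dX^n_s=f(\whW_{\eta(s)})\,dB_s$ and $d\langle X^n\rangle_s=f(\whW_{\eta(s)})^2\,ds$, so the \emph{frozen} volatility $f(\whW_{\eta(s)})$ appears in place of $f(\whW_s)$; consequently the functionals to which Clark--Ocone is applied are functionals of the frozen Gaussian vector $(\whW_{\eta(t_1)},\dots,\whW_{\eta(t_m)})$. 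Second, since $D_s\whW_{\eta(u)}=(\eta(u)-s)_+^{H-1/2}=\tilde K(u,s)$, the Clark--Ocone expansion of such a functional produces the kernel $\tilde K(t_j,s)=K(\eta(t_j),s)$ in place of $K(t_j,s)$.

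\textbf{Step 1 (discrete analogue of Lemma~\ref{Lem_IJ1}).} For $\mathbf t\in\Delta_m^\circ$, $t\in[0,t_m)$ and $F\in\mathcal Q^m$, writing $\whW_{\eta(\mathbf t)}:=(\whW_{\eta(t_1)},\dots,\whW_{\eta(t_m)})$, I would prove
\begin{align*}
  \bE\bigl[(X^n_t)^N\,F(\whW_{\eta(\mathbf t)},\mathbf t)\bigr]
  &= \int_0^t \bE\bigl[(X^n_s)^{N-1}\,(\tilde\cI^N F)(\whW_{\eta(\mathbf t)},\whW_{\eta(s)},\mathbf t,s)\bigr]\,ds \\
  &\quad+ \int_0^t \bE\bigl[(X^n_s)^{N-2}\,(\cJ^N F)(\whW_{\eta(\mathbf t)},\whW_{\eta(s)},\mathbf t,s)\bigr]\,ds,
\end{align*}
mimicking the proof of Lemma~\ref{Lem_IJ1}: expand $(X^n_t)^N$ by It\^o into a $dB$-martingale with integrand $N(X^n_s)^{N-1}f(\whW_{\eta(s)})$ plus the drift $\tfrac{N(N-1)}{2}(X^n_s)^{N-2}f(\whW_{\eta(s)})^2$; expand $F(\whW_{\eta(\mathbf t)},\mathbf t)$ by Clark--Ocone into $\bE[F]+\sum_j\int_0^{t_j}\bE[\partial_j F\mid\cF_s]\,\tilde K(t_j,s)\,dW_s$ (legitimate since $\tilde K(t_j,\cdot)$ vanishes beyond $\eta(t_j)$); then multiply the two expansions, take expectations, and evaluate the resulting terms by It\^o's isometry (for the product of the two martingale parts) and by Fubini (for the drift part). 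Writing $dB=\rho\,dW+\sqrt{1-\rho^2}\,dW^\perp$, only the $dW$-component of $dB$ survives the isometry pairing --- supplying the factor $\rho$ --- and the conditional expectation collapses by the tower property since $f(\whW_{\eta(s)})$ is $\cF_s$-measurable. This reproduces the operators $\tilde\cI^N$ and $\cJ^N$ of Definition~\ref{Def_DiscWords}, and the regularity of their outputs ($C^{N-1}$ resp.\ $C^{N-2}$ in the spatial variables, given $f\in C^{N-2}$) propagates exactly as in the continuous case.

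\textbf{Step 2 (iteration and specialization).} I would then transcribe the induction on $N$ from the proof of Proposition~\ref{Prop_MomRep}: the base case $N=1$ is Step~1 with the $\cJ$-term absent, and the induction step applies Step~1 once to strip off the outermost operator, invokes the induction hypothesis on the two resulting expectations, and recombines using $\tilde\iota(v)\circ\tilde\cI^{\,\ell(v)+1}=\tilde\iota(v\wI)$ and $\tilde\iota(v)\circ\cJ^{\ell(v)+2}=\tilde\iota(v\wJ)$, sorting the words of inhomogeneous length $N$ by their last letter and bookkeeping the nested integrals $T\ge t_1\ge\dots\ge t_{|w|}$ verbatim. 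Specializing to $F\equiv 1$ (so $m=0$ and no outer variables remain), and discarding every word whose last letter is $\wI$ --- for which $\tilde\iota(w)1\equiv 0$ by the one-line argument of Lemma~\ref{Lem_Izero}, since $\tilde\cI^{\ell}$ applied to a constant is an empty sum --- yields the asserted representation of $\bE[(X^n_T)^N]$ (the argument of each $f$ inside $\tilde\iota(w)1$ being the frozen value $\whW_{\eta(t_j)}$).

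\textbf{Main obstacle.} There is no genuinely new difficulty: the argument is a faithful transcription of Theorem~\ref{Thm_MomRep}, the only honest changes being the substitutions $f(\whW_\cdot)\rightsquigarrow f(\whW_{\eta(\cdot)})$ and $K\rightsquigarrow\tilde K$. The two points that need a little care are: (i) $\eta(t_j)$ need not be ordered with $t$ in a fixed way --- although $t<t_m<t_j$, the point $\eta(t_j)$ may lie on either side of $t$ --- so in restricting the It\^o-isometry pairing to $[0,t]$ one must respect the supports of the kernels $\tilde K(t_j,\cdot)$; this is harmless, the affected integrands simply vanishing there. And (ii) one must verify the integrability and Malliavin regularity, uniform in $n$, needed to license Clark--Ocone for the functionals of $(\whW_{\eta(t_1)},\dots,\whW_{\eta(t_m)})$ and the Fubini interchanges; these follow from $\sup_{t\le T}\bE[e^{\lambda|\whW_{\eta(t)}|}]<\infty$ and $\sup_n\sup_{t\le T}\bE[|X^n_t|^p]<\infty$ for all $p<\infty$, $\lambda>0$, via Gaussian hypercontractivity together with the exponential-growth hypothesis on $f$, exactly as in the continuous setting.
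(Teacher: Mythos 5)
Your proposal is correct and matches the paper's intended argument: the paper's own proof is a one-line remark that the claim follows from Proposition~\ref{Prop_MomRep} and Theorem~\ref{Thm_MomRep} by substituting $K,\cI,\iota$ with $\tilde K, \tilde{\cI},\tilde{\iota}$, which is exactly the substitution you carry out explicitly, including the correct observations that $f$ now sees the frozen value $\whW_{\eta(s)}$ and that $D_s\whW_{\eta(u)}=\tilde K(u,s)$. Note also that your reading with spatial arguments $\whW_{\eta(t_j)}$ (rather than the $\whW_{t_j}$ literally printed in the theorem statement) is the one consistent with the way the formula is subsequently used in Equation~\eqref{Eq_MomForm}.
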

	\begin{proof}
		This theorem follows immediately from Proposition \ref{Prop_MomRep} and \ref{Thm_MomRep} by substituting $K,\cI,\iota$ with $\tilde K, \tilde{\cI},\tilde{\iota}$.
	\end{proof}
	
	To conclude Section \ref{Section:Representation}, we remark (without proof) that the operator $\tilde{\iota}$ has a similar representation to that of Equation \eqref{Eq_wRep}. 
	
	\begin{proposition}
		Fix $w\in \cW$ and let $m=|w|$. Then there exist a constant $C_w$ only depending on $w$ such that
		\begin{align*}
			(\tilde{\iota}(w)1)&(x_1,\dots,x_m,t_1,\dots,t_m)
			=C_w \sum_{\mathbf{l} \in \cL^w} \psi_{\mathbf{l}}(x_1,\dots,x_m) \tilde{K}(t_{\alpha_{\mathbf{l}}(2)},t_2) \dots \tilde{K}(t_{\alpha_{\mathbf{l}}(m)},t_m).
		\end{align*}
	\end{proposition}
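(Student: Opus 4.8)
The plan is to simply re-run the induction of Proposition \ref{Lem_wRep}, observing that the only change between $\iota$ and $\tilde\iota$ is the replacement of the operator $\cI^N$ by $\tilde\cI^N$, and that the latter differs from the former \emph{only} by using the kernel $\tilde K(t_j,s)=K(\eta(t_j),s)$ in place of $K(t_j,s)$. The operator $\cJ^N$ is literally unchanged in Definition \ref{Def_DiscWords}, and — crucially — $\cJ^N F$ carries no kernel factor at all (it only multiplies $F$ by $\tfrac{N(N-1)}{2}f^2(y)$). Hence in the proof of Proposition \ref{Lem_wRep} every factor $K(t_{l_i},t_{m-j_{\,\cdot}+1})$ that is produced arises from an application of $\iota(\wI)=\cI^1$ or of $\cI^{\ell(w)}$, and from nothing else. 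Tracking through that same induction with $\tilde\cI$ in place of $\cI$, each such factor is instead produced as $\tilde K(t_{l_i},t_{m-j_{\,\cdot}+1})$, while the differentiation pattern $\partial_{x_{l_1}}\cdots\partial_{x_{l_k}}$, the products $\prod_{l\in N_I^w}f(x_{m-l+1})\prod_{l\in N_J^w}f^2(x_{m-l+1})$, the index set $\cL^w$, the assignment $\alpha_{\mathbf l}$, the functions $\psi_{\mathbf l}$ and the scalar $C_w$ are all untouched, since these depend only on the combinatorics of where $\wI$'s and $\wJ$'s sit in $w$ and on the real-variable differentiation carried by $\partial_{x_j}$, none of which is affected by the $\eta$-discretisation.

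Concretely, I would proceed exactly as before by induction on $k\coloneqq|N_I^w|$. In the base case $k=1$ write $w=w_1\wI w_2$ with $w_1=\wJ^{\,j-1}$, $w_2=\wJ^{\,m-j}$; then $\big(\tilde\iota(w_2)1\big)(x_1,\dots,x_{m-j})=C\prod_{l=1}^{m-j}f^2(x_l)$ exactly as for $\iota$ (no kernels appear), applying $\tilde\iota(\wI)=\tilde\cI^1$ introduces the single factor $\tilde K(t_i,t_{m-j+1})$, and applying $\tilde\iota(w_1)$ only multiplies by further $f^2$ factors. In the inductive step split $w=w_1\wI w_2$ with $w_1=\wJ^{\,j_1-1}$ and $\wI$ occurring $k-1$ times in $w_2$; the induction hypothesis gives the claimed $\tilde K$-representation for $\tilde\iota(w_2)1$, applying $\tilde\cI^{\ell(w)}$ appends the factor $\tilde K(t_{l_k},t_{m-j_1+1})$ together with the derivative $\partial_{x_{l_k}}$ and an $f$ factor, and $\tilde\iota(w_1)$ again only contributes $f^2$ factors. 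The same substitution $\mathbf l\mapsto(\alpha_{\mathbf l},\psi_{\mathbf l})$ as in Proposition \ref{Lem_wRep} then rewrites the nested sums in the compact form
\begin{equation*}
(\tilde\iota(w)1)(x_1,\dots,x_m,t_1,\dots,t_m)=C_w\sum_{\mathbf l\in\cL^w}\psi_{\mathbf l}(x_1,\dots,x_m)\,\tilde K(t_{\alpha_{\mathbf l}(2)},t_2)\cdots\tilde K(t_{\alpha_{\mathbf l}(m)},t_m),
\end{equation*}
with precisely the same constant $C_w$ (so in particular the formula of Remark \ref{Rem_Cw} still applies).

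An alternative, even shorter, way to phrase this is as a formal substitution principle: let $S$ be the linear map on the relevant function space that acts on a product of Liouville kernels and real-variable functions by replacing each occurrence of $K(\,\cdot\,,\,\cdot\,)$ with $\tilde K(\,\cdot\,,\,\cdot\,)$. From Definitions \ref{definition:I-J_operations} and \ref{Def_DiscWords} one checks $S\circ\cI^N=\tilde\cI^N\circ S$ and $S\circ\cJ^N=\cJ^N\circ S$ on the image of such products, and $S(1)=1$; composing along the word $w$ then yields $\tilde\iota(w)1=S\big(\iota(w)1\big)$, and applying $S$ to the representation \eqref{Eq_wRep} gives the claim. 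The argument involves no genuine obstacle — it is pure bookkeeping — and the only point that must be verified with a little care is exactly the one highlighted above, namely that in \eqref{Eq_wRep} no kernel factor is ever generated by a $\cJ$-operator, so that replacing $\cI$ by $\tilde\cI$ replaces \emph{all} kernel factors and nothing else.
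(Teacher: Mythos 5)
Your proof is correct and matches the paper's intent: the paper explicitly states this proposition "without proof," signalling that it follows from Proposition \ref{Lem_wRep} by the mechanical substitution $K\mapsto\tilde K$, which is exactly the re-run-the-induction argument you give, and your key observation — that $\cJ$-steps never produce kernel factors, so every $K$ in \eqref{Eq_wRep} arises from an $\cI$-step and hence becomes a $\tilde K$ — is precisely the point that makes the bookkeeping valid. Your alternative "substitution principle" $S\circ\cI^N=\tilde\cI^N\circ S$, $S\circ\cJ^N=\cJ^N\circ S$ is a cleaner packaging of the same fact and is also correct.
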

	
	\section{Weak error estimation}
	\label{Section:Estimation}
	
	Let $X_T$ be the stochastic process defined in Equation \eqref{eq:RoughVolatilityModelNew} and let $X_T^n$ be the discrete time approximation defined in Equation \eqref{eq:RoughVolatilityModel-Discrete}. Since the class of polynomial test function is invariant under the shift $\Phi (.) \leftrightarrow \Phi( X_0 + . )$ we can and will assume without loss of generality $X_0 = 0$. Recalling Equation \eqref{eq:Def_DiscWords_K}, we note that
	\begin{equation*}
		\whW_{\eta(s)}= \int_0^s K(\eta(s),r) \, dW_r = \int_0^s \tilde{K}(s, r) dW_r.
	\end{equation*}
	For $\mathbf{t} \in \Delta_m^\circ$ we define $\eta(\mathbf{t})=(\eta(t_1),\dots,\eta(t_m))$. Theorem \ref{Thm_MomRep}, Theorem \ref{Thm_MomRepDisr} and Proposition \ref{Lem_wRep} imply that
	\begin{align}
		\nonumber
		\cE_{n, N} &\coloneqq \bE\Bigl[ (X_T)^N \Bigr] - \bE\Bigl[ (X_T^n)^N \Bigr]
		\\
		\nonumber
		&=\sum_{\substack{w\in \cW \\ \ell(w)=N}} \, \, \int\limits_{\mathbf{t}\in \Delta_{|w|}^{\circ}} \bE\Bigl[\bigl( \iota(w)1-\tilde \iota(w)1\bigr)\bigl(\whW_{\mathbf{t}}, \mathbf{t} \bigr) \Bigr] dt_{|w|} \dots dt_1
		\\
		\label{Eq_MomForm}
		&= \sum_{\substack{w\in \cW \\ \ell(w)=N}} 
		\sum_{\mathbf{l}\in \cL^w} \, 
		\int \limits_{\mathbf{t}\in \Delta_{|w|}^{\circ}}
		\bE\Bigl[
		\psi_{\mathbf{l}}\bigl( \whW_{\mathbf{t}} \bigr) 
		\Bigr] \cdot 
		\prod_{i=2}^{|w|} K(t_{\alpha_{\mathbf{l}}(i)},t_i) 
		dt_{|w|} \dots dt_1
		\\\nonumber
		&\quad-\sum_{\substack{w\in \cW \\ \ell(w)=N}}
		\sum_{\mathbf{l}\in \cL^w} \, 
		\int \limits_{\mathbf{t}\in \Delta_{|w|}^{\circ}}
		\bE\Bigl[
		\psi_{\mathbf{l}}\bigl( \whW_{\eta(\mathbf{t})} \bigr) 
		\Bigr] \cdot 
		\prod_{i=2}^{|w|} \tilde{K}(t_{\alpha_{\mathbf{l}}(i)},t_i) 
		dt_{|w|} \dots dt_1.
	\end{align}
	
	We recall Equation \eqref{eq:Simplex} that $\Delta_m^{\circ}$ denotes the simplex and $\mathbf{t}\in \Delta_m^\circ$ for  $\mathbf{t}=(t_1,\dots,t_m)$. Using the triangle inequality, for each $w\in \cW$ such that $|w| = m$ and $\mathbf{l} \in \cL^{w}$, we define $\alpha_{\mathbf{l}}$ according to Definition \ref{Def_Afcts} and consider
	\begin{equation}
		\label{eq:Estimation1}
		\int\limits_{\mathbf{t}\in \Delta_{|w|}^{\circ}} \Bigg( 
		\bE\Bigl[ \psi_{\mathbf{l}}\bigl( \whW_{\mathbf{t}} \bigr) \Bigr] 
		- \bE\Bigl[
		\psi_{\mathbf{l}}\bigl( \whW_{\eta(\mathbf{t})} \bigr) 
		\Bigr] \Bigg) \cdot 
		\prod_{i=2}^{|w|} \tilde{K}(t_{\alpha_{\mathbf{l}}(i)},t_i) 
		dt_{|w|} \dots dt_1
	\end{equation}
	and for any $j\in \{2, ..., |w|\}$
	\begin{align}
		\label{eq:Estimation2}
		\int \limits_{\mathbf{t}\in \Delta_{|w|}^{\circ}}
		\mkern-10mu
		\bE\Bigl[
		\psi_{\mathbf{l}}\bigl( \whW_{\eta(\mathbf{t})} \bigr) 
		\Bigr]&\cdot
		\prod_{i=2}^{j-1} K(t_{\alpha_{\mathbf{l}}(i)},t_i)
        \\
        \nonumber
		&\cdot\Bigl(\!K(t_{\alpha_{\mathbf{l}}(j)}, t_j) - \tilde{K}(t_{\alpha_{\mathbf{l}}(j)}, t_j) \Bigr) \cdot \mkern -10mu
		\prod_{i=j+1}^{|w|} \mkern -10mu \tilde{K}(t_{\alpha_{\mathbf{l}}(i)},t_i) 
		dt_{|w|} \dots dt_1.
	\end{align}
	
	\pagebreak[2]
	
	\begin{definition}
		Let $m\in \bN$. For a function $\varphi:\Delta_m^{\circ} \to \bR$, we define 
		\begin{equation}
			\label{eq:TildePhi}
			\tilde \varphi(t_1,\dots,t_m)=\varphi(\eta(t_1),\dots,\eta(t_m)). 
		\end{equation}
		For $1\le k \le m$ we define
		\begin{gather}
			\label{eq:Delta_k-varphi}
			\begin{aligned}
				\Delta_k \varphi(t_1, \dots,t_m)&=\varphi\bigl(\eta(t_1),\dots,\eta(t_{k-1}),t_k,t_{k+1} \dots,t_m\bigr)
				\\
				&\qquad-\varphi\bigl(\eta(t_1),\dots,\eta(t_{k-1}),\eta(t_k),t_{k+1}\dots,t_m\bigr). 
			\end{aligned}
		\end{gather}
		Finally, let $w\in \cW$ such that $|w| = m$ and fix $\mathbf{l} \in \cL^w$. Let $\alpha_{\textbf{l}}:\{1, ..., m\} \to \{0, 1, ..., m\}$ according to Definition \ref{Def_Afcts} and define
		\begin{equation}
			\label{eq:frakturI}
			\fI_{k}^{\alpha}=\int_0^T \int_0^{t_1} \dots \int_0^{t_{m-1}} \bigl|\Delta_k\varphi(t_1,\dots,t_m)\bigr| \cdot \prod_{j=2}^m K(t_{\alpha(j)},t_{j}) \, dt_m \, \dots \, dt_1. 
		\end{equation}
	\end{definition}
	
	To commence, we introduce a set of assumptions for our function $\varphi: \Delta_m^{\circ} \to \bR$ that are weaker than the results established in Lemma \ref{Lem_VerifAss}. This set of assumptions turns out to be all that is necessary to prove the desired upper bounds in Section \ref{subsec:4.2} and \ref{subsec:4.3} so we include it here in order to state our results in the most generality. 
	
	\begin{assumption}
		\label{Ass_G}
		Let $\varphi: \Delta_m^\circ \to \bR$  and extend it as $0$ outside of the simplex.
		Let $1<k<m$ and assume that there is a constant $C_m$ such that
		\begin{align}
			\bigl|\partial_k \varphi(\mathbf{t}) \bigr|&\le C_m \Bigl(t_{k-1}^{-H}t_{k}^{2H-1} +t_{k-1}^{-H}(t_{k-1}-t_k)^{2H-1} + t_{k}^{-H}(t_{k}-t_{k+1})^{2H-1}\Bigr)\label{Ass_G1}
			\\
			\bigl|\partial_1 \varphi(\mathbf{t}) \bigr| &\le C_m t_{1}^{-H}(t_{1}-t_2)^{2H-1} \label{Ass_G2}
			\\
			\bigl|\partial_m \varphi(\mathbf{t}) \bigr|&\le C_m \Bigl(t_{m-1}^{-H}t_{m}^{2H-1} +t_{m-1}^{-H}(t_{m-1}-t_m)^{2H-1} \Bigr)\label{Ass_G3}
			\\
			\bigl|\varphi(\mathbf{s})-\varphi(\mathbf{t} ) \bigr| &\le C_m  \|s-t\|^{2H}. \label{Ass_G4}
		\end{align}
	\end{assumption}
	
	\subsection{Estimates for Equation \eqref{eq:Estimation1}}
	\label{subsec:4.2}

	\begin{lemma}
		\label{Lem_VolFunc}
		Let $m \in \mathbb{N}$ and let $\varphi:\Delta_m^{\circ} \to \bR$. Let $w\in \cW$ such that $|w| = m$ and let $\textbf{l} \in \cL^{w}$. Let $1\le k \le m$ and let
		$\fI_{k}^{\alpha}$ according to Equation \eqref{eq:frakturI}. 
		
		Suppose that $\varphi$ satisfies Assumption \ref{Ass_G}. Then
		\begin{align*}
			\fI_k^{\alpha}\lesssim \begin{cases}
				C_m 2^m T^{m(H+1/2)}\Bigl( n^{-3H-1/2} \vee n^{-1}\Bigr) & \text{if } H \not = 1/6,
				\\
				C_m 2^m T^{m(H+1/2)}  n^{-1}\log(n) & \text{if } H  = 1/6.
			\end{cases}
		\end{align*}
		Here the hidden constant does not depend on $m,n,\alpha$ or $\varphi$, but may depend on $H$ and $T$.
	\end{lemma}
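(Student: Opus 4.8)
The plan is to estimate the single integral $\fI_k^\alpha$ by first isolating the ``near-diagonal'' region where the discretization $\eta$ actually does something, then bounding the Hölder-type increment $|\Delta_k\varphi|$ via Assumption \ref{Ass_G}, and finally integrating against the singular product kernel $\prod_{j=2}^m K(t_{\alpha(j)},t_j)$. The key structural observation is that $\Delta_k\varphi(t_1,\dots,t_m)$ vanishes identically unless $t_k > \eta(t_k)$ strictly, i.e. unless $t_k$ lies in the interior of a grid interval; more importantly, writing $\varphi(\dots,t_k,\dots)-\varphi(\dots,\eta(t_k),\dots)$ as $\int_{\eta(t_k)}^{t_k}\partial_k\varphi(\dots,r,\dots)\,dr$ lets us bring the derivative bounds \eqref{Ass_G1}--\eqref{Ass_G3} to bear. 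One must be slightly careful that the arguments $t_1,\dots,t_{k-1}$ have already been replaced by $\eta(t_1),\dots,\eta(t_{k-1})$, but the bounds in Assumption \ref{Ass_G} are stated in a form (powers of $t_{k-1}$, $t_{k-1}-t_k$, $t_k$, $t_k - t_{k+1}$) that is stable under this replacement up to harmless constants, since $\eta(t_{k-1})\le t_{k-1}$ and $t_{k-1}-t_k$ is comparable to $\eta(t_{k-1})-t_k$ once $t_{k-1}-t_k \gtrsim 1/n$, while the region $t_{k-1}-t_k \lesssim 1/n$ contributes only a lower-order boundary term.

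First I would fix the word $w$, the index $\mathbf{l}$, hence $\alpha=\alpha_{\mathbf{l}}$, and decompose the domain of integration in the variable $t_k$ into grid cells $[\tfrac{i}{n},\tfrac{i+1}{n})$. On each cell, use $|\Delta_k\varphi| \le \int_{\eta(t_k)}^{t_k} |\partial_k\varphi|\,dr$ and substitute the three-term bound from \eqref{Ass_G1} (resp. \eqref{Ass_G2} for $k=1$, \eqref{Ass_G3} for $k=m$). This produces, schematically, three sub-integrals governed by the local behaviours $r^{2H-1}$, $(t_{k-1}-r)^{2H-1}$ and $(r-t_{k+1})^{2H-1}$, each multiplied by the prefactor $t_{k-1}^{-H}$ or $r^{-H}$. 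The critical inner estimate — this is exactly the integral $\int_{t-2/n}^{t}(\varphi(s)-\varphi(\eta(s)))(t-s)^{H-1/2}\,ds$ flagged in the introduction — is of the form $\int_{\text{cell}} \big(\int_{\eta(t_k)}^{t_k} (\cdot)^{2H-1}dr\big)(t_{\alpha(k)}-t_k)^{H-1/2}\,dt_k$ when some later kernel $K(t_{\alpha(j)},t_j)$ with $\alpha(j)=k$ is present, or simply $\int(\cdots)\,dt_k$ otherwise. A direct computation on a single cell of width $1/n$, using $\int_0^{1/n} r^{2H-1}\,dr \sim n^{-2H}$ and then integrating the resulting $n^{-2H}$ against $(t-t_k)^{H-1/2}$ over the cell and summing over cells, yields the exponent $2H + (H+\tfrac12) = 3H+\tfrac12$ when $H<1/6$ (the near-diagonal cells dominate), the exponent $1$ when $H>1/6$ (the bulk dominates), and a logarithmic correction at the balance point $H=1/6$. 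The remaining variables $t_1,\dots,t_{k-1},t_{k+1},\dots,t_m$ are then integrated out over the simplex against the remaining kernels $\prod_{j\neq k, \alpha(j)\neq k} K$; each such one-dimensional integral $\int_0^{t}(t-s)^{H-1/2}\,ds = \tfrac{t^{H+1/2}}{H+1/2}$ contributes a factor $\lesssim T^{H+1/2}$, and there are at most $m-1$ of them, giving the stated $T^{m(H+1/2)}$ and the combinatorial $2^m$ (bounding the number of ways the kernels $K(t_{\alpha(j)},t_j)$ can be ``chained'' when several $\alpha(j)$ coincide, which is where the worst multiplicative constant in $m$ appears).

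The main obstacle — and the part that needs the most care rather than the most cleverness — is the bookkeeping when the composition structure of $\alpha$ stacks several kernels on the same variable $t_k$, i.e. when $|\alpha^{-1}(k)|\ge 2$, so that the integral in $t_k$ carries a product $(t_{j_1}-t_k)^{H-1/2}(t_{j_2}-t_k)^{H-1/2}\cdots$ while one still only extracts one factor of $n^{-2H}$ from the $\Delta_k$ gain. Here one must check that the surviving singularities are still integrable: each additional kernel $(t_{j}-t_k)^{H-1/2}$ contributes an extra power but also consumes one integration variable $t_j$, and one can reorganize the iterated integral (integrate the ``innermost'' such $t_j$ first, picking up $(t_j - t_k)^{H-1/2}$ integrated to give $(\cdot)^{H+1/2}$, which is bounded) so that at the end only a single kernel genuinely interacts with the $\Delta_k$-cell structure. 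Carrying this reorganization out uniformly in $w$ and $\mathbf{l}$, and tracking that no constant worse than $C_m 2^m T^{m(H+1/2)}$ is generated, is the technical heart of the argument; the scaling exponent $3H+\tfrac12 \vee 1$ itself falls out of the elementary one-cell computation described above once the reduction to a single active kernel is in place.
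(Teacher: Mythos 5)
Your overall decomposition strategy — split the $t_k$ integration by distance to the grid/to the boundary, use the H\"older bound \eqref{Ass_G4} on $|\Delta_k\varphi|$ near the boundary and the derivative bounds \eqref{Ass_G1}--\eqref{Ass_G3} in the bulk, and integrate the remaining chain of kernels to collect the $T^{m(H+1/2)}$ factor — is the same as the paper's. But two things in your write-up do not survive contact with the details.

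First, you flag the case $|\alpha^{-1}(k)|\ge 2$ as the hard bookkeeping and propose a reorganization of the iterated integral. The paper sidesteps this entirely at the very first step: since $\alpha(j)\le j-1$ forces $t_{\alpha(j)}\ge t_{j-1}>t_j$ and $H-\tfrac12<0$, one has the pointwise domination $K(t_{\alpha(j)},t_j)=(t_{\alpha(j)}-t_j)^{H-1/2}\le (t_{j-1}-t_j)^{H-1/2}=K(t_{j-1},t_j)$, whence $\fI_k^\alpha\le \fJ_k$ with the consecutive chain kernel. After this, $t_k$ appears in at most two kernels, $K(t_{k-1},t_k)$ and $K(t_k,t_{k+1})$, and there is no stacked-kernel case to reorganize. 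Your proposed reorganization would probably work but is strictly harder; more importantly, without the domination your argument for the ``single active kernel'' reduction is only sketched, whereas the domination is a one-line, uniform-in-$w,\mathbf{l}$ replacement.

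Second, and this is the genuine gap: you assert that ``the region $t_{k-1}-t_k\lesssim 1/n$ contributes only a lower-order boundary term.'' For $H<1/6$ this is false, and the paper explicitly warns against it (``the boundary terms contribute significantly and therefore the analysis cannot be omitted!''). The boundary regions $t_k<2/n$ and $t_k>t_{k-1}-2/n$, estimated via $|\Delta_k\varphi|\lesssim C_m n^{-2H}$ and $\int_{\text{boundary}}K(t_{k-1},t_k)\,dt_k\lesssim n^{-H-1/2}$, produce a contribution of size $n^{-3H-1/2}$, i.e.\ exactly the final rate. This matters for your plan in two ways. (i) You propose writing $\Delta_k\varphi=\int_{\eta(t_k)}^{t_k}\partial_k\varphi\,dr$ and using the derivative bound throughout, but in the boundary region this gives a non-integrable factor $(t_{k-1}-t_k)^{2H-1}$; moreover, the shifted argument $(\eta(t_1),\dots,\eta(t_{k-1}),r,t_{k+1},\dots)$ may leave $\Delta_m^\circ$ or have $\eta(t_{k-1})=0$, so the $\eta(t_{k-1})^{-H}$ factor is not ``comparable up to harmless constants'' there. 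The paper avoids both problems by restricting the derivative bound to the region $2/n<t_k<t_{k-1}-2/n$ (and, for $k<m$, also $2/n<t_{k+1}<t_k-2/n$), which guarantees $\eta(t_{k-1})\ge\frac34 t_{k-1}$ etc. (ii) Your heuristic that ``near-diagonal cells dominate for $H<1/6$'' is consistent with the bulk estimate $n^{-1}\int (t_{k-1}-t_k)^{3H-3/2}\,dt_k\sim n^{-3H-1/2}$, but you then cannot simultaneously dismiss the boundary as lower-order — it must be estimated by the H\"older bound, and it comes out at the same $n^{-3H-1/2}$. So the decomposition into explicit boundary/bulk pieces and the use of two different bounds is not optional bookkeeping: it is where the exponent actually comes from.
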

	
	\begin{proof}
		Note that for any choice of $\textbf{l} \in \cL^{w}$
		\begin{align*}
			\fI_{k}^{\alpha} \le  \fJ_{k}:=\int_{\mathbf{t}\in \Delta_m^\circ} \bigl| \Delta_k\varphi(\mathbf{t}) \bigr| \cdot \prod_{j=2}^{m}K(t_{j-1},t_{j}) \, dt_m \, \dots \, dt_1,
		\end{align*}
		We treat the cases $k=m$ and $k<m$ separately. For the form, we split the integral into three cases
		\begin{align*}
			\fJ_{m}^{(1)}&=\int \limits_{\substack{\mathbf{t}\in \Delta_m^\circ \\ t_m <2/n}} \bigl|\Delta_m\varphi(\mathbf{t})\bigr| \cdot \prod_{j=2}^{m}K(t_{j-1},t_{j}) \, dt_m \, \dots \, dt_1,
			\\
			\fJ_{m}^{(2)}&=\int \limits_{\substack{\mathbf{t}\in \Delta_m^\circ \\ t_m >t_{m-1}-2/n}} \bigl|\Delta_m\varphi(\mathbf{t})\bigr| \cdot \prod_{j=2}^{m}K(t_{j-1},t_{j}) \, dt_m \, \dots \, dt_1,
			\\
			\fJ_{m}^{(3)}&=\int \limits_{\substack{\mathbf{t}\in \Delta_m^\circ \\ 2/n<t_m <t_{m-1}-2/n}} \bigl|\Delta_m\varphi(\mathbf{t})\bigr| \cdot \prod_{j=2}^{m}K(t_{j-1},t_{j}) \, dt_m \, \dots \, dt_1.
		\end{align*}

		Recalling Equation \eqref{eq:Delta_k-varphi}, by \eqref{Ass_G4} in Assumption \ref{Ass_G} we have $\Delta_m\varphi \le C_m n^{-2H}$. In $\fJ_1^{(1)}$ we can estimate $K(t_{m-1},t_{m})\lesssim K(2/n,t_m)$ and then integrate $t_m$ from $0$ to $t_{m-1}$. By integrating the inner integrals we get
		\begin{gather}
			\begin{aligned}
				\label{Eq_FormRate1}
				\fJ_{m}^{(1)}&+\fJ_{m}^{(2)}
				\\
				& \lesssim C_m n^{-3H-1/2} \int_0^T \int_0^{t_1}   \dots \int_0^{t_{m-2}} K(t_{1},t_{2})\ldots K(t_{m-2},t_{m-1}) \, dt_{m-1} \, \dots \, dt_1 
				\\
				&\lesssim C_m n^{-3H-1/2} \int_0^T \int_0^{t_1}   \dots \int_0^{t_{m-3}} K(t_{1},t_{2})\ldots K(t_{m-3},t_{m-2})t_{m-2}^{H+1/2} \, dt_{m-2} \, \dots \, dt_1 .
				\\
				&
				\lesssim C_m n^{-3H-1/2} 2^mT^{m(H+1/2)}. 
			\end{aligned}
		\end{gather}
		Note that here we see that the boundary terms contribute significantly and therefore the analysis cannot be omitted!
		
		For the last integral we use \eqref{Ass_G3} in Assumption \ref{Ass_G} to conclude that 
		\begin{equation*}
			|\Delta_m \varphi|\le C_m n^{-1}\Bigl( t_{m-1}^{-H}t_{m}^{2H-1} +t_{m-1}^{-H}(t_{m-1}-t_m)^{2H-1}\Bigr).
		\end{equation*}
		Therefore, for $H\not=1/6$ 
		\begin{gather}
			\begin{aligned}
				\label{Eq_FormRate2}
				\fJ_{m}^{(3)} &\lesssim C_m
				n^{-1}
				\mkern -20mu
				\int \limits_{\substack{\mathbf{t}\in \Delta_m^\circ \\ 2/n<t_m <t_{m-1}-2/n}}
				t_{m-1}^{-H}t_{m}^{2H-1} K(t_{1},t_{2})\ldots K(t_{m-1},t_{m}) \, dt_m \, \dots \, dt_1
				\\
				&+C_m n^{-1}
				\mkern -20mu
				\int \limits_{\substack{\mathbf{t}\in \Delta_m^\circ \\ 2/n<t_m <t_{m-1}-2/n}} t_{m-1}^{-H}(t_{m-1}-t_{m})^{2H-1} K(t_{1},t_{2})\ldots K(t_{m-1},t_{m}) \, dt_m \, \dots \, dt_1
				\\
				&\lesssim C_m
				n^{-1}\int_0^T \int_0^{t_1} \dots \int_0^{t_{m-2}} t_{m-1}^{-H}t_{m-1}^{3H-1/2} K(t_{1},t_{2})\ldots K(t_{m-2},t_{m-1}) \, dt_{m-1} \, \dots \, dt_1
				\\
				&+C_m n^{-1}\int_0^T \int_0^{t_1} \dots \int_0^{t_{m-2}} t_{m-1}^{-H}n^{-3H+1/2} K(t_{1},t_{2})\ldots K(t_{m-2},t_{m-1}) \, dt_{m-1} \, \dots \, dt_1
				\\
				&\lesssim n^{-3H-1/2} C_m 2^mT^{m(H+1/2)}.
			\end{aligned}
		\end{gather}
		The same calculation for $H=1/6$ yields
		\begin{gather}
			\begin{aligned}\label{Eq_H16calc}
				\fJ_{m}^{(3)} &\lesssim 
				C_m
				n^{-1}\log(n)\int \limits_{\substack{\mathbf{t}\in \Delta_{m-1}^\circ} } t_{m-1}^{-H}t_{m-1}^{3H-1/2} K(t_{1},t_{2})\ldots K(t_{m-2},t_{m-1}) \, dt_{m-1} \, \dots \, dt_1
				\\
				&\quad+C_m n^{-1}\int \limits_{\substack{\mathbf{t}\in \Delta_{m-1}^\circ} }
				t_{m-1}^{-H}\log(n) K(t_{1},t_{2})\ldots K(t_{m-2},t_{m-1}) \, dt_{m-1} \, \dots \, dt_1
				\\
				&\lesssim n^{-1}\log(n) C_m 2^mT^{m(H+1/2)}.
			\end{aligned}
		\end{gather}
		
		Now let $k<m$ and for the following calculations set $t_0=T$. We split $\fJ_k$ up in 
		\begin{align*}
			\fJ_{k}^{(1)}&=\int \limits_{\substack{\mathbf{t}\in \Delta_m^\circ \\ t_k <2/n}}
			\bigl|\Delta_k\varphi(\mathbf{t})\bigr| \cdot \prod_{j=2}^m K(t_{j-1},t_{j}) \, dt_m \, \dots \, dt_1,
			\\
			\fJ_{k}^{(2)}&=\int \limits_{\substack{\mathbf{t}\in \Delta_m^\circ \\ t_k >t_{k-1}-2/n}}
			\bigl|\Delta_k\varphi(\mathbf{t})\bigr| \cdot \prod_{j=2}^m K(t_{j-1},t_{j}) \, dt_m \, \dots \, dt_1,
			\\
			\fJ_{k}^{(3)}&=\int \limits_{\substack{\mathbf{t}\in \Delta_m^\circ \\ 2/n<t_k <t_{k-1}-2/n \\ t_{k+1}< 2/n }}
			\bigl|\Delta_k\varphi(\mathbf{t})\bigr|  \cdot \prod_{j=2}^m K(t_{j-1},t_{j}) \, dt_m \, \dots \, dt_1,
			\\
			\fJ_{k}^{(4)}&=\int \limits_{\substack{\mathbf{t}\in \Delta_m^\circ \\ 2/n<t_k <t_{k-1}-2/n \\ t_{k+1}>t_k-2/n }}
			\bigl|\Delta_k\varphi(\mathbf{t})\bigr|  \cdot \prod_{j=2}^m K(t_{j-1},t_{j}) \, dt_m \, \dots \, dt_1,
			\\
			\fJ_{k}^{(5)}&=\int \limits_{\substack{\mathbf{t}\in \Delta_m^\circ \\ 2/n<t_k <t_{k-1}-2/n \\ 2/n<t_{k+1} <t_{k}-2/n}}
			\bigl|\Delta_k\varphi(\mathbf{t})\bigr|  \cdot \prod_{j=2}^m K(t_{j-1},t_{j}) \, dt_m \, \dots \, dt_1.
		\end{align*}
		
		By \eqref{Ass_G4} in Assumption \ref{Ass_G} we have $\Delta_k\varphi \lesssim n^{-2H}$. By integrating all the kernels as before we see that
		\begin{align*}
			\fJ_{k}^{(1)}+\fJ_{k}^{(2)}+\fJ_{k}^{(3)}+\fJ_{k}^{(4)}
			\lesssim n^{-3H-1/2}  C_m 2^mT^{m(H+1/2)}.
		\end{align*}
		
		For the last integral we use \eqref{Ass_G1}, or \eqref{Ass_G2}, in Assumption \ref{Ass_G} implying that 
		\begin{equation*}
			|\Delta_k \varphi|\le C_m n^{-1}\Bigl( t_{k-1}^{-H}t_{k}^{2H-1} +t_{k-1}^{-H}(t_{k-1}-t_k)^{2H-1} + t_{k}^{-H}(t_{k}-t_{k+1})^{2H-1}\Bigr).
		\end{equation*}
		
		We start by integrating up to $t_k$ or $t_{k+1}$ to get that
		
		\begin{align*}
			\fJ_{k}^{(5)}&\lesssim n^{-1}C_m2^{m-k}T^{(m-k)(H+1/2)}
			\mkern -40mu
			\int \limits_{\substack{\mathbf{t}\in \Delta_k^\circ\\ 2/n<t_k <t_{k-1}-2/n}}
			\mkern -40mu
			t_{k-1}^{-H}t_{k}^{2H-1} \prod_{j=2}^k K(t_{j-1},t_{j}) \, dt_{k} \, \dots \, dt_1
			\\
			&+n^{-1}C_m2^{m-k}T^{(m-k)(H+1/2)}
			\mkern -40mu
			\int \limits_{\substack{\mathbf{t}\in \Delta_k^\circ \\ 2/n<t_k <t_{k-1}-2/n}}
			\mkern -40mu
			t_{k-1}^{-H}(t_{k-1}-t_k)^{2H-1}
			\prod_{j=2}^k K(t_{j-1},t_{j}) \, dt_{k} \, \dots \, dt_1
			\\
			&+n^{-1}C_m2^{m-k}T^{(m-k)(H+1/2)}
			\mkern -40mu
			\int \limits_{\substack{ \mathbf{t}\in \Delta_{k+1}^\circ \\ 2/n<t_k <t_{k-1}-2/n \\ 2/n<t_{k+1} <t_{k}-2/n}}
			\mkern -40mu
			t_{k}^{-H}(t_{k}-t_{k+1})^{2H-1}
			\prod_{j=2}^{k+1} K(t_{j-1},t_{j}) \, dt_{k+1} \, \dots \, dt_1
			\\
			&=\fJ_{k}^{(5,1)}+\fJ_{k}^{(5,2)}+\fJ_{k}^{(5,3)}.
		\end{align*}
		
		By the properties of the Beta distribution
		\begin{align*}
			\fJ_{k}^{(5,1)}&\lesssim  n^{-1}
			C_m2^{m-k}T^{(m-k)(H+1/2)}
			\int \limits_{\mathbf{t}\in \Delta_{k-1}^\circ }
			t_{k-1}^{-H}t_{k-1}^{3H-1/2} \prod_{j=2}^{k-1} K(t_{j-1},t_{j}) \, dt_{k-1} \, \dots \, dt_1
			\\
			&\lesssim
			n^{-1}C_m2^mT^{m(H+1/2)}.
		\end{align*}
		
		Simple integration with respect to $t_k$ tells us in the case $H\not=1/6$ that
		\begin{align*}
			\fJ_{k}^{(5,2)}& \lesssim n^{-1} C_m2^{m-k}T^{(m-k)(H+1/2)} \int \limits_{\mathbf{t}\in \Delta_{k-1}^\circ }
			t_{k-1}^{-H}n^{-3H+1/2} \prod_{j=2}^{k-1} K(t_{j-1},t_{j}) \, dt_{k-1} \, \dots \, dt_1
			\\
			&\lesssim n^{-3H-1/2}C_m2^mT^{m(H+1/2)}.
		\end{align*}
		If $H=1/6$ then
		\begin{align*}
			\fJ_{k}^{(5,2)}& \lesssim n^{-1} C_m2^{m-k}T^{(m-k)(H+1/2)} \int \limits_{\mathbf{t}\in \Delta_{k-1}^\circ }
			t_{k-1}^{-H}\log(n) \prod_{j=2}^{k-1} K(t_{j-1},t_{j}) \, dt_{k-1} \, \dots \, dt_1
			\\
			&\lesssim n^{-1}\log(n)C_m2^mT^{m(H+1/2)}.
		\end{align*}
		
		We see that $\fJ_{k}^{(5,3)}$ is equal to $\fJ_{k}^{(5,2)}$ with $k+1$ instead of $k$, therefore
		\begin{align*}
			\mathfrak{J}_{k}^{(5,3)} \lesssim 
			\begin{cases}
				n^{-3H-1/2}C_m 2^mT^{m(H+1/2)}, & \text{if } H\not=1/6, \\
				n^{-1}\log(n)C_m 2^mT^{m(H+1/2)}& \text{if } H=1/6.\end{cases}
		\end{align*}
	\end{proof}
	
	\subsection{Estimates for Equation \eqref{eq:Estimation2}}
	\label{subsec:4.3}
	
	Let $m>k\ge 1$. For this section we fix some $w \in \cW$ with $|w|=m$ and let $\cL^w$ from Definition \ref{Def_Afcts}.
	Furthermore, fix some $\mathbf{l} \in \cL^w$ and set $\alpha = \alpha_{\mathbf{l}}$.  Define the function 
	\begin{equation}
		\label{eq:G-function}
		G_{\alpha}(t_1,\dots,t_k)=\int_0^{t_k} \dots \int_0^{t_{m-1}} \tilde \varphi(t_1,\dots,t_m) \cdot \prod_{l=k+1}^m  \tilde K(t_{\alpha(l)},t_l) \, dt_m \dots \, dt_{k+1},
	\end{equation}
	with $\tilde \varphi$ defined in Equation \eqref{eq:TildePhi} and $\tilde K$ defined in Equation \eqref{eq:Def_DiscWords_K}.
	
	\pagebreak[3
 ]
	
	\begin{lemma}
		\label{Lem_GFunc}
		Suppose that $\varphi$ satisfies Assumption \ref{Ass_G} and let $G_{\alpha}:\Delta_{m}^{\circ} \to \bR$ be defined as in Equation \eqref{eq:G-function}. For $t_1>\dots>t_{k-1}>t_k>s_k$ we have
		\begin{gather}
			\begin{aligned}
				\label{Eq_HolderG}
				|G(t_1,\dots,t_{k-1},t_{k})&-G(t_1,\dots,t_{k-1},s_k)|
				\\
				&\lesssim mC_m 2^{m-k}T^{(m-k)(H+1/2)} \Bigl(n^{-2H}\vee |t_k-s_k|^{2H}\Bigr).
			\end{aligned}
		\end{gather}
		as well as 
		\begin{align}\label{Eq_HolderG2}
			|G_{\alpha}(t_1,\dots,t_{k-1},t_k)|\lesssim C_m 2^{m-k}T^{(m-k)(H+1/2)}.
		\end{align}
		Here the hidden constants do not depend on $t_1, \dots, t_{k-1}$ $m,\alpha$ or $\varphi$, but may depend on $H$ and $T$. 
	\end{lemma}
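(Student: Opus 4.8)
\textbf{Strategy and the bound \eqref{Eq_HolderG2}.} The plan is to prove \eqref{Eq_HolderG2} first, as a warm‑up, and then \eqref{Eq_HolderG}. Write $G$ for $G_\alpha$ and keep $\mathbf{t}'=(t_1,\dots,t_{k-1})$ fixed. Two elementary facts will be used repeatedly. First, since $\varphi$ is extended by $0$ outside $\Delta_m^\circ$, applying \eqref{Ass_G4} with the second argument pushed onto the boundary (e.g.\ by setting its last coordinate to $0$, which costs $\|\cdot\|\le T$) gives $\|\varphi\|_\infty\le C_m T^{2H}$, hence $\|\tilde\varphi\|_\infty\le C_m T^{2H}$. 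Second, for any $t'$ and $u\le T$ one has $\int_0^u \tilde K(t',r)\,dr=\int_0^{\eta(t')\wedge u}(\eta(t')-r)^{H-1/2}\,dr\le \tfrac1{H+1/2}T^{H+1/2}\le 2T^{H+1/2}$; integrating the product $\prod_{l=j+1}^m\tilde K(t_{\alpha(l)},t_l)$ successively in $t_m,t_{m-1},\dots,t_{j+1}$ and crudely bounding every chained free variable by $T$ at each stage (so that no Beta‑function constants appear) gives $\int_0^{t_j}\cdots\int_0^{t_{m-1}}\prod_{l=j+1}^m\tilde K(t_{\alpha(l)},t_l)\,dt_m\cdots dt_{j+1}\le 2^{m-j}T^{(m-j)(H+1/2)}$. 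Taking $j=k$ and multiplying by $\|\tilde\varphi\|_\infty$ proves \eqref{Eq_HolderG2}.

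\textbf{Decomposition and the term $A$.} For \eqref{Eq_HolderG} assume $t_k>s_k$ and split $G(\mathbf{t}',t_k)-G(\mathbf{t}',s_k)=A+B$, where $A=\int_{s_k}^{t_k}(\text{integrand at }t_k)\,dt_{k+1}$ is the contribution of enlarging the outermost integration domain (the value $t_k$ being kept in $\tilde\varphi$ and in all kernels $\tilde K(t_{\alpha(l)},t_l)$ with $\alpha(l)=k$), and $B=\int_0^{s_k}\big((\text{integrand at }t_k)-(\text{integrand at }s_k)\big)\,dt_{k+1}$. For $A$, the only $t_{k+1}$‑dependent kernel outside the inner $(m-k-1)$ integrations is $\tilde K(t_{\alpha(k+1)},t_{k+1})$, and a short case analysis — trivial if $\alpha(k+1)=0$; if $\alpha(k+1)=k$ the kernel is supported on $[s_k,\eta(t_k))$ and integrates to $\tfrac1{H+1/2}(\eta(t_k)-s_k)^{H+1/2}$; if $\alpha(k+1)<k$ one uses subadditivity of $x\mapsto x^{H+1/2}$ (valid since $H\le1/2$) — shows $\int_{s_k}^{t_k}\tilde K(t_{\alpha(k+1)},t_{k+1})\,dt_{k+1}\lesssim(t_k-s_k)^{H+1/2}\lesssim(t_k-s_k)^{2H}$, again by $H\le1/2$, up to an $H,T$‑dependent constant. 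Combining with $\|\tilde\varphi\|_\infty$ and the remaining $m-k-1$ integrations yields $|A|\lesssim C_m2^{m-k}T^{(m-k)(H+1/2)}(t_k-s_k)^{2H}$, with all hidden constants depending only on $H,T$.

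\textbf{The term $B$.} Estimate $B$ by telescoping in two stages. First replace $\tilde\varphi(\dots,\eta(t_k),\dots)$ by $\tilde\varphi(\dots,\eta(s_k),\dots)$: by \eqref{Ass_G4} the pointwise error is at most $C_m|\eta(t_k)-\eta(s_k)|^{2H}$, and since the grid projection is only Lipschitz up to the mesh, $|\eta(t_k)-\eta(s_k)|\le(t_k-s_k)+\tfrac1n$, this is $\lesssim C_m\big(n^{-2H}\vee(t_k-s_k)^{2H}\big)$ — precisely the source of the $n^{-2H}$ floor; multiplying by the iterated kernel integral over $[0,s_k]\times\cdots$ gives a contribution $\lesssim C_m2^{m-k}T^{(m-k)(H+1/2)}\big(n^{-2H}\vee(t_k-s_k)^{2H}\big)$. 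Then replace, one index at a time, each kernel $\tilde K(t_k,t_l)=K(\eta(t_k),t_l)$ by $K(\eta(s_k),t_l)$, over the (at most $m$) indices $l$ with $\alpha(l)=k$. Splitting the support $[0,\eta(t_k))$ of the integrand at $\eta(s_k)$ gives $\int_0^{t_{l-1}}|K(\eta(t_k),t_l)-K(\eta(s_k),t_l)|\,dt_l\le\tfrac2{H+1/2}\big(\eta(t_k)-\eta(s_k)\big)^{H+1/2}\lesssim n^{-2H}\vee(t_k-s_k)^{2H}$, while the inner integrations $t_{l+1},\dots,t_m$ and the outer integrations $t_{k+1},\dots,t_{l-1}$ contribute $\lesssim 2^{m-k-1}T^{(m-k-1)(H+1/2)}$ (again bounding chained variables by $T$), and $\tilde\varphi$ is controlled by $\|\tilde\varphi\|_\infty\lesssim C_mT^{2H}$. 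Hence each such replacement costs $\lesssim C_m2^{m-k}T^{(m-k)(H+1/2)}\big(n^{-2H}\vee(t_k-s_k)^{2H}\big)$. Adding $A$, the $\tilde\varphi$‑replacement, and the at most $m$ kernel replacements produces \eqref{Eq_HolderG} with the factor $mC_m$; throughout, the implicit constants arise only from $\tfrac1{H+1/2}\le2$ and from powers $T^{2H},T^{1/2-H}$, so they depend on $H$ and $T$ but not on $m$, $\alpha$ or $\varphi$.

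\textbf{Main obstacle.} The delicate point is the handling of the kernels $\tilde K(t_{\alpha(l)},t_l)$ with $\alpha(l)=k$: there the perturbed variable sits \emph{inside} a singular kernel, so one must estimate $\int|K(\eta(t_k),\cdot)-K(\eta(s_k),\cdot)|$ by splitting the domain according to the supports of the two kernels, and then run the entire $(m-k)$‑fold iterated integration in a way that only ever incurs crude $T^{H+1/2}$ factors and $\tfrac1{H+1/2}\le2$ factors — never an $m$‑dependent Beta‑function constant — which is exactly what preserves the claimed shape $mC_m2^{m-k}T^{(m-k)(H+1/2)}$. A secondary but essential point is tracking the "$1/n$ defect'' in the Lipschitz bound for $\eta$, which is what converts a naive $|t_k-s_k|^{2H}$ into $n^{-2H}\vee|t_k-s_k|^{2H}$.
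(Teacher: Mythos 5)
Your proposal is correct and follows essentially the same three-term decomposition as the paper's proof: your $A$ is the paper's domain-enlargement term $\fJ^3$, and your two-stage telescoping of $B$ (first replacing $\tilde\varphi$, then swapping the kernels $\tilde K(t_k,\cdot)$ one index at a time) reproduces the paper's $\fJ^1$ and $\fJ^2$. You fill in a few steps the paper leaves terse — the case analysis on $\alpha(k+1)$ for $A$ and the explicit split of $\int|\tilde K(t_k,\cdot)-\tilde K(s_k,\cdot)|$ at $\eta(s_k)$ together with subadditivity of $x\mapsto x^{H+1/2}$ — but the argument is the same.
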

	
	\begin{proof}
		Let 
		\begin{gather*}
			\Delta \tilde \varphi=\tilde \varphi(t_1,\dots,t_{k-1},t_k,t_{k+1},\dots,t_m)-\tilde\varphi(t_1,\dots,t_{k-1},s_k,t_{k+1},\dots,t_m).
		\end{gather*}  
		By Definition of $\tilde K$ and the triangle inequality it follows that
		\begin{align*}
			|G(t_1,&\dots,t_{k-1},t_{k})-G(t_1,\dots,t_{k-1},s_k)|
			\\
			\le&\int_0^{s_k} \dots \int_0^{t_{m-1}} |\Delta\tilde \varphi| \prod_{l=k+1}^m  \tilde K(t_{l-1},t_l) \, dt_m \dots \, dt_{k+1}
			\\
			&+\sum_{j=k+1}^m\int_0^{s_k} \dots\int_0^{t_{m-1}} \|\tilde \varphi\|_\infty |\tilde K(t_k,t_j)-\tilde K(s_k,t_j)|\prod_{\substack{l=k+1\\l \not=j}}^m  \tilde K(t_{l-1},t_l) \, dt_m \dots \, dt_{k+1}
			\\
			&+\int_{s_k}^{t_k}\int_0^{t_{k+1}} \dots \int_0^{t_{m-1}} \|\tilde \varphi\|_\infty \prod_{l=k+1}^m  \tilde K(t_{l-1},t_l) \, dt_m \dots \, dt_{k+1}
			\\
			\eqqcolon& \fJ^{1}+\fJ^2+\fJ^3.
		\end{align*}
		
		By \eqref{Ass_G4} in Assumption \ref{Ass_G} for $\varphi$ and the definition of $\tilde \varphi$ it follows that
		\begin{align*}
			\fJ^1 \lesssim C_m 2^{m-k}T^{(m-k)(H+1/2)}  \Bigl(n^{-2H}\vee |t_k-s_k|^{2H}\Bigr).
		\end{align*}
		
		We look at $\fJ^2$. By integrating from $t_m$ to $t_{j+1}$ the remaining integral is of the form 
		\begin{align*}
			\int_0^{s_k} \tilde K(s_k,r)-\tilde K(t_k,r) \, dr&=\frac 1 {H+1/2}\Bigl(\eta(s_k)^{H+1/2}+(\eta(t_k)-s_k)_+^{H+1/2} -\eta(t_k)^{H+1/2}\Bigr) 
			\\
			&\lesssim n^{-H-1/2}\vee |t_k-s_k|^{H+1/2}.
		\end{align*}
		Note that by \eqref{Ass_G4} in Assumption \ref{Ass_G} implies that $\|\tilde \varphi\|_\infty\lesssim C_m$.
		By integrating over the remaining variables and summing up we see that
		\begin{gather*}
			\fJ^2 \lesssim mC_m 2^{m-k}T^{(m-k)(H+1/2)} \Bigl(n^{-2H}\vee |t_k-s_k|^{2H}\Bigr).
		\end{gather*}
		A similar computation shows \ref{Eq_HolderG2}.
		Finally, it follows from integration that
		\begin{gather*}
			\fJ^3 \lesssim C_m 2^{m-k}T^{(m-k)(H+1/2)}  \Bigl(n^{-2H}\vee |t_k-s_k|^{2H}\Bigr).
		\end{gather*}
	\end{proof}
	
	\begin{lemma}
		\label{Lem_KernelEstim}
		We define
		\begin{equation}
			\label{eq:tildeK-K}
			\Delta K(t, s) = \tilde{K}\bigl( t, s\bigr) - K\bigl( t, s\bigr).
		\end{equation}
		
		It holds for any $0 \le \beta < 1/2 -H $ or $1/2-H<\beta$ that
		\begin{align}
			\sup_{u\in [t,T]} \int_0^t \bigl|\Delta K(u,s)\bigr| \cdot (t-s)^\beta \, ds \lesssim n^{-H-1/2-\beta}\vee n^{-1},\label{Eq_KernelEstim1}
			\\
			\sup_{t\in [0,T]} \biggl| \int_0^t \Delta K(t,s) \, ds \biggr| \lesssim n^{-H-1/2}\wedge t^{H-1/2}n^{-1}.\label{Eq_KernelEstim2}
		\end{align}
		Further, for $\beta = 1/2 -H $ it holds that
		\begin{align*}
			\sup_{u\in [t,T]} \int_0^t \bigl|\Delta K(u,s)\bigr| \cdot (t-s)^\beta \, ds \lesssim n^{-1}\log(n).
		\end{align*}
	\end{lemma}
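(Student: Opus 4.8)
The plan is to work with the explicit expression
\[
\Delta K(u,s)=\bigl(\eta(u)-s\bigr)_+^{H-1/2}-(u-s)^{H-1/2},
\]
exploiting throughout that $u-\eta(u)\le 1/n$. First I would record three elementary pointwise facts. For $s<\eta(u)$ one has the \emph{coarse} bound $|\Delta K(u,s)|\le(\eta(u)-s)^{H-1/2}$ (drop the smaller term, using $H<1/2$) and the \emph{fine} bound $|\Delta K(u,s)|\le(\tfrac12-H)\bigl(u-\eta(u)\bigr)(\eta(u)-s)^{H-3/2}\le(\tfrac12-H)n^{-1}(\eta(u)-s)^{H-3/2}$ (fundamental theorem of calculus for $v\mapsto(v-s)^{H-1/2}$ on $[\eta(u),u]$); for $\eta(u)\le s<u$ one has the exact identity $|\Delta K(u,s)|=(u-s)^{H-1/2}$, on an interval of length $<1/n$.

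For \eqref{Eq_KernelEstim1}, fix $t$ and $u\in[t,T]$. Since $\eta(u)\ge\eta(t)$ and both are multiples of $1/n$, either $\eta(u)>t$ — then $s<\eta(u)$ on the whole domain, $t-s\le\eta(u)-s$, and after $v=\eta(u)-s$ the integral is bounded, with $\theta:=u-\eta(u)\le 1/n$, by $\int_0^{\eta(u)}\bigl(v^{H-1/2}-(v+\theta)^{H-1/2}\bigr)v^\beta\,dv$, carrying no residual $t$-dependence — or $\eta(u)=\eta(t)\le t$ (the only alternative, as $\eta(u)>\eta(t)$ would give $\eta(u)\ge\eta(t)+1/n>t$), which forces $u<\eta(t)+1/n$ and one splits $\int_0^t=\int_0^{\eta(t)}+\int_{\eta(t)}^t$ with the second piece over an interval of length $<1/n$. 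In every case, using the coarse bound where $\eta(u)-s\lesssim 1/n$ (there also $(t-s)^\beta\lesssim n^{-\beta}$) and the fine bound where $\eta(u)-s\gtrsim 1/n$ (there $t-s\asymp\eta(u)-s$), one reduces to the two model integrals
\[
\int_0^{c/n}v^{H-1/2}\,n^{-\beta}\,dv\lesssim n^{-H-1/2-\beta},\qquad
n^{-1}\!\int_{c/n}^{T}v^{H-3/2+\beta}\,dv .
\]
The second equals $\lesssim n^{-H-1/2-\beta}$ for $\beta<\tfrac12-H$ (lower endpoint dominates, exponent $<-1$), $\lesssim n^{-1}$ for $\beta>\tfrac12-H$ (convergent, bounded by $T^{H-1/2+\beta}$), and $\lesssim n^{-1}\log n$ for $\beta=\tfrac12-H$ (logarithmic divergence, cut off at scale $T$). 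Because all bounds depend on $u$ only through $u-\eta(u)\le 1/n$, the supremum over $u\in[t,T]$ is harmless.

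For \eqref{Eq_KernelEstim2} there is no singularity and the signed integral telescopes:
\[
\int_0^t\Delta K(t,s)\,ds=\frac{\eta(t)^{H+1/2}-t^{H+1/2}}{H+1/2},
\]
so it remains to estimate $t^{H+1/2}-\eta(t)^{H+1/2}\ge 0$. If $t<1/n$ this equals $t^{H+1/2}\le n^{-H-1/2}$; if $t\ge 1/n$ then $0<t-\eta(t)<1/n$, $\eta(t)\asymp t$, and the mean value theorem give $\lesssim t^{H-1/2}n^{-1}$. Since $t^{H-1/2}n^{-1}\le n^{-H-1/2}$ precisely when $t\ge 1/n$, the two cases together yield exactly $n^{-H-1/2}\wedge t^{H-1/2}n^{-1}$.

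The main obstacle is the diagonal region $s\approx\eta(u)$: there the fine bound $n^{-1}(\eta(u)-s)^{H-3/2}$ is not integrable and one must switch to the coarse bound $(\eta(u)-s)^{H-1/2}$; the delicate points are to check that the crossover occurs exactly at scale $\eta(u)-s\asymp u-\eta(u)\le 1/n$ and that the weight $(t-s)^\beta$ is genuinely $\lesssim n^{-\beta}$ on that near-diagonal sliver, so that no spurious powers of $n$ leak in. Once this is arranged, each of the sub-cases ($\beta$ below, above, or at $\tfrac12-H$; $\eta(u)$ above or below $t$; near or far from the diagonal) is a routine Beta-type computation, and the bookkeeping is the real content.
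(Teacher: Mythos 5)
Your proof is correct and follows essentially the same route as the paper's for \eqref{Eq_KernelEstim1}: split the $s$-integral at a $\asymp 1/n$-neighbourhood of the diagonal, bound $|\Delta K|$ near the diagonal directly by the kernels themselves (absorbing $(t-s)^\beta\lesssim n^{-\beta}$), and away from it by the mean-value estimate $n^{-1}(\,\cdot\,)^{H-3/2}$, then reduce to the two Beta-type model integrals. The only difference is that the paper proves \eqref{Eq_KernelEstim1} alone and cites \cite[Lemma 2.1]{Gassiat2022Weak} for \eqref{Eq_KernelEstim2}, whereas you give a short self-contained derivation of \eqref{Eq_KernelEstim2} from the explicit antiderivative $\bigl(\eta(t)^{H+1/2}-t^{H+1/2}\bigr)/(H+1/2)$ and the mean-value theorem — a harmless addition.
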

	
	\begin{remark}
		We remind the reader that in the statement of Theorem \ref{Thm_Rate}, we have a separate rate of convergence when $H=1/6$. This is due to integrals of the form $\int_0^{t-2/n} (t-s)^{3H-3/2}\, ds$, because there a logarithmic term appears if $H=1/6$. These integrals appear in Lemma \ref{Lem_KernelEstim} which is used for Lemma \ref{Lem_DiffKernel} as well as in Equation  \eqref{Eq_H16calc}.
	\end{remark}
	
	\begin{proof}
		See \cite[Lemma 2.1]{Gassiat2022Weak}. We only prove \eqref{Eq_KernelEstim1}, which is a slight modification of \eqref{Eq_KernelEstim2} in \cite{Gassiat2022Weak}. We split the integral into two parts
		\begin{align*}
			\int_0^t \bigl|\Delta K(u,s)\bigr|(t-s)^\beta \, ds=\int_0^{t-2/n} \bigl|\Delta K(u,s)\bigr|(t-s)^\beta \, ds+\int_{t-2/n}^t \bigl|\Delta K(u,s)\bigr|(t-s)^\beta \, ds.
		\end{align*}
		For the second integral note that
		\begin{align*}
			\int_{t-2/n}^{t} \bigl|\Delta K(u,s)\bigr| \cdot (t-s)^\beta \, ds
			&\lesssim  n^{-\beta} 
			\int_{t-2/n}^{t} \bigl(K(u,s)+K(\eta(u),s)\bigr) \, ds
			\\
			&\lesssim n^{-\beta}  \int_0^{1/n}K(1/n,s) \, ds\lesssim n^{-H-1/2-\beta}.
		\end{align*}
		Note that for $s\le u\le v$ we have $|K(v,s)-K(u,s)|\lesssim (v-u) (u-s)^{H-3/2}$, therefore, for $\beta \not=1/2-H$
		\begin{align*}
			\int_{0}^{t-2/n} \bigl|\Delta K(u,s)\bigr| \cdot (t-s)^\beta \, ds &\lesssim n^{-1} \int_{0}^{t-2/n} \bigl(t-s)^{\beta} (u-s)^{H-3/2} \underbrace{\Bigl(\frac{\eta(u)-s}{u-s}\Bigr)^{H-3/2}}_{\lesssim 1}\, ds
			\\
			&\lesssim n^{-1} \int_{0}^{t-2/n} \bigl(t-s)^{\beta+H-3/2} \, ds \lesssim n^{-H-1/2-\beta}.
		\end{align*}
		If $\beta=1/2-H$, then the same calculation shows that
		\begin{align*}
			\int_{0}^{t-2/n} \bigl|\Delta K(u,s)\bigr| \cdot (t-s)^\beta \, ds &\lesssim n^{-1}  \log(n)
		\end{align*}
	\end{proof}
	
	\begin{lemma}
		\label{Lem_DiffKernel}
		Let $m \in \mathbb{N}$ and let $w\in \cW$ such that $|w| = m$. For the fixed $\textbf{l} \in \cL^w$, we define $\alpha_{\textbf{l}}$ according to Definition \ref{Def_Afcts}. For $2\le k \le m$, we define
		\begin{gather}\label{eq:Lem_DiffKernel-1}
			\begin{aligned}
				\Delta_k K_{\alpha}&(t_1 ,\dots,t_m)=
				\\
				&\Big( \prod_{l=2}^{k-1}  K(t_{\alpha(l)},t_{l}) \Big) \cdot  \Big( K(t_{\alpha(k)},t_{k})-\tilde K (t_{\alpha(k)},t_k) \Big) \cdot \Big( \prod_{l=k+1}^m  \tilde K(t_{\alpha(l)},t_l) \Big).
			\end{aligned}
		\end{gather}
		Let $\varphi:\Delta_m^{\circ} \to \bR$ and (recalling Equation \eqref{eq:TildePhi}) define
		\begin{equation}
			\label{eq:Lem_DiffKernel-2}
			\fL_{k,\alpha}=\int\limits_{\mathbf{t}\in \Delta_{m}^{\circ}} \tilde \varphi(\mathbf{t}) \cdot \Delta_k K_{\alpha}(\mathbf{t}) \, dt_m \, \dots \, dt_1.
		\end{equation}
		
		Assume that $\varphi$ satisfies Assumption \ref{Ass_G}. Then
		\begin{align*}
			\mathfrak{L}_{k, \alpha}\lesssim 
			\begin{cases}
				mC_m 2^m T^{m(H+1/2)}\Bigl( n^{-3H-1/2} \vee n^{-1}\Bigr) & \text{if } H \not = 1/6,
				\\
				mC_m 2^m T^{m(H+1/2)}  n^{-1}\log(n) & \text{if } H  = 1/6.
			\end{cases}
		\end{align*}
		Here the hidden constant does not depend on $m,n,\alpha$ or $\varphi$, but may depend on $H$ and $T$.
	\end{lemma}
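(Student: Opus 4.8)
The plan is to integrate out the tail variables $t_{k+1},\dots,t_m$, turning $\fL_{k,\alpha}$ into an integral of the function $G_\alpha$ from \eqref{eq:G-function} against the remaining ``leading'' kernels and the single kernel difference at position $k$, and then to run essentially the same near/far dissection as in the proof of Lemma \ref{Lem_VolFunc}, the increment $\Delta_k\varphi$ there being replaced here by $\Delta K(t_{\alpha(k)},t_k)$. Since $\alpha(l)<l$ for every $l$ by Definition \ref{Def_Afcts} (with the convention that pointer $0$ gives kernel $\equiv1$), the variable $t_k$ does not appear in $\prod_{l=2}^{k-1}K(t_{\alpha(l)},t_l)$, so Fubini rewrites \eqref{eq:Lem_DiffKernel-2} as
\[
\fL_{k,\alpha}=\int_{\Delta_k^{\circ}} G_\alpha(t_1,\dots,t_k)\Bigl(\textstyle\prod_{l=2}^{k-1}K(t_{\alpha(l)},t_l)\Bigr)\bigl(K(t_{\alpha(k)},t_k)-\tilde K(t_{\alpha(k)},t_k)\bigr)\,dt_k\cdots dt_1 .
\]
If $\alpha(k)=0$ (i.e.\ $w_k=\wJ$) the bracket vanishes and $\fL_{k,\alpha}=0$; otherwise set $p:=\alpha(k)\in\{1,\dots,k-1\}$, so that $t_p\ge t_{k-1}>t_k$ on the domain. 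Lemma \ref{Lem_GFunc} supplies $\|G_\alpha\|_\infty\lesssim C_m2^{m-k}T^{(m-k)(H+1/2)}$ together with the $2H$-H\"older-in-$t_k$ control \eqref{Eq_HolderG}.

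The heart of the proof is the inner integral $\int_0^{t_{k-1}}G_\alpha(t_1,\dots,t_{k-1},t_k)\,\Delta K(t_p,t_k)\,dt_k$, with $\Delta K=\tilde K-K$ as in \eqref{eq:tildeK-K}. I would split it according to whether $t_k$ lies near $0$, near $\eta(t_p)$ (for $p=k-1$: near $t_{k-1}$), or in the bulk $2/n<t_k<\eta(t_p)-2/n$, and on each piece replace $G_\alpha(\dots,t_k)$ by its value at a suitable reference point (a neighbouring grid point for $t_k$ near $0$, and $t_{k-1}$ otherwise). The ``increment'' remainders satisfy the $2H$-H\"older bound \eqref{Eq_HolderG}, so, paired against $|\Delta K(t_p,t_k)|\lesssim n^{-1}(t_p-t_k)^{H-3/2}$ on the bulk, they generate integrals of the shape $n^{-1}\int(t_p-t_k)^{3H-3/2}\,dt_k$; since $t_p-t_k\ge t_{k-1}-t_k$ this reduces to the case $t_p=t_{k-1}$, i.e.\ to the $\beta=2H$ instance of \eqref{Eq_KernelEstim1} and to the integral isolated in the remark following Lemma \ref{Lem_KernelEstim}: $\lesssim n^{-3H-1/2}\vee n^{-1}$ for $H\ne1/6$ and $\lesssim n^{-1}\log(n)$ for $H=1/6$ — the unique source of the phase transition. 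The ``reference-point'' contributions factor a globally cancelling quantity $\int_0^{t_{k-1}}\Delta K(t_p,t_k)\,dt_k$ (or its restriction to a single grid cell) out of the $t_k$-integral; this is $\lesssim n^{-1}(t_p-t_{k-1})^{H-1/2}\wedge n^{-H-1/2}$ by an estimate of the type \eqref{Eq_KernelEstim2}, and the extra factor $(t_p-t_{k-1})^{H-1/2}$ is absorbed into the subsequent $t_{k-1}$-integration, so these pieces cost only $n^{-1}$. Summing the $O(m)$ sub-terms produced by this decomposition and by the sum in \eqref{Eq_HolderG}, the inner integral is $\lesssim mC_m2^{m-k}T^{(m-k)(H+1/2)}\bigl(n^{-3H-1/2}\vee n^{-1}\bigr)$, resp.\ the same with $n^{-1}\log(n)$ at $H=1/6$.

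It then remains to integrate out $t_1,\dots,t_{k-1}$ against $\prod_{l=2}^{k-1}K(t_{\alpha(l)},t_l)$. Bounding $K(t_{\alpha(l)},t_l)\le K(t_{l-1},t_l)$ and using $\int_0^{t_{l-1}}(t_{l-1}-t_l)^{H-1/2}\,dt_l=\tfrac{2}{2H+1}t_{l-1}^{H+1/2}\le2T^{H+1/2}$ contributes a factor $(2T^{H+1/2})^{k-1}$, which together with the tail factor $2^{m-k}T^{(m-k)(H+1/2)}$ assembles to the claimed $2^mT^{m(H+1/2)}$; this gives $\fL_{k,\alpha}\lesssim mC_m2^mT^{m(H+1/2)}(n^{-3H-1/2}\vee n^{-1})$ for $H\ne1/6$ and $\lesssim mC_m2^mT^{m(H+1/2)}n^{-1}\log(n)$ for $H=1/6$, as claimed.

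The step I expect to be the real obstacle is extracting the rate $3H+1/2$ rather than the naive $H+1/2$: estimating $|G_\alpha|$ by $\|G_\alpha\|_\infty$ and $|\Delta K|$ by the crude bound $K+\tilde K$ only yields $n^{-H-1/2}$, so one must carefully pair the $2H$-H\"older modulus of $G_\alpha$ with the ``smooth'' part of $\Delta K$ and — exactly as flagged after \eqref{Eq_FormRate1} — keep track of the endpoint contributions near $t_k=0$ and near $t_k=\eta(t_p)$, which contribute at leading order and cannot be dropped. A secondary nuisance is the case $p=k-1$, where $\eta(t_p)<t_{k-1}$, the kernel $\tilde K(t_p,\cdot)$ vanishes for $t_k\in(\eta(t_p),t_{k-1})$, and the representation $\Delta K(t_p,\cdot)=(\tfrac12-H)\int_{\eta(t_p)}^{t_p}(u-\cdot)^{H-3/2}\,du$ that is used on the bulk develops a non-integrable singularity; there one evaluates $\int\Delta K(t_p,t_k)\,dt_k$ in closed form instead, in the spirit of the $\fJ^2$ estimate inside the proof of Lemma \ref{Lem_GFunc}.
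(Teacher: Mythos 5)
Your strategy coincides with the paper's: Fubini to reduce $\fL_{k,\alpha}$ to an integral of $G_\alpha$ against the single kernel difference $\Delta K(t_{\alpha(k)},t_k)$, a near/far dissection in $t_k$ with reference-point subtraction controlled by the $2H$-H\"older modulus \eqref{Eq_HolderG}, and Lemma \ref{Lem_KernelEstim}-type bounds on $\Delta K$ to extract $n^{-3H-1/2}\vee n^{-1}$ (with the $\log n$ correction at $H=1/6$ from the $\beta=1/2-H$ case). The paper packages the split as $\fL^{(1)}+\fL^{(2)}+\fL^{(3)}+\fL^{(4)}$ with reference points $0$ (for $t_k<2/n$) and $t_{k-1}$ (for $t_k>2/n$) rather than your three-region cut, and its treatment of the reference-point piece $\fL^{(3)}$ is noticeably more careful than your one-line appeal to an \eqref{Eq_KernelEstim2}-type bound --- it explicitly integrates $\Delta K$, splits into $t_{k-1}\lessgtr 4/n$ and into $t_{k-1}$ near $t_{k-2}$, and absorbs the resulting singular factors into the outer integration --- but the skeleton and all key lemmas are the same.
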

	
	\begin{proof}
		Recalling the notation of Equation \eqref{eq:G-function}, we have
		\begin{gather*}
			\fL_{k, \alpha}=\int_{\mathbf{t}\in \Delta_k^\circ}  G_{\alpha}(t_1,\dots,t_k) \cdot \Delta_k K(t_1,\dots,t_k) \, dt_k \, \dots \, dt_1.
		\end{gather*}    
		We split this up to
		\begin{align*}
			\fL_{k, \alpha}^{(1)}&=\int \limits_{\substack{\mathbf{t}\in \Delta_k^\circ
					\\ t_k<2/n}}  G\big( t_{1},\dots,t_{k-1},0 \big) \cdot  \Delta_k K(t_1,\dots,t_k) \, dt_k \, \dots \, dt_1,
			\\
			\fL_{k, \alpha}^{(2)}&=
			\int \limits_{\substack{\mathbf{t}\in \Delta_k^\circ
					\\ t_k<2/n}} \Bigl( G(t_{1},\dots,t_{k-1},t_{k})-G(t_{1},\dots,t_{k-1},0) \Bigr) \cdot \Delta_k K(t_1,\dots,t_k) \, dt_k \, \dots \, dt_1,
			\\
			\fL_{k, \alpha}^{(3)}&=
			\int \limits_{\substack{\mathbf{t}\in \Delta_k^\circ
					\\ t_k>2/n}} G\big( t_{1},\dots,t_{k-1},t_{k-1} \big) \cdot \Delta_k K(t_1,\dots,t_k) \, dt_k \, \dots \, dt_1,
			\\
			\fL_{k, \alpha}^{(4)}&=
			\int \limits_{\substack{\mathbf{t}\in \Delta_k^\circ
					\\ t_k>2/n}} \Bigl( G(t_{1},\dots,t_{k-1},t_{k})- G(t_{1},\dots,t_{k-1},t_{k-1}) \Bigr) \cdot \Delta_k K(t_1,\dots,t_k) \, dt_k \, \dots \, dt_1,
		\end{align*}
		For notational convenience let $\mathfrak{C}_{k,m}=C_m 2^{m-k}T^{(m-k)(H+1/2)}$.  It follows from the definition of $G_{\alpha}$ that $\fL_{k, \alpha}^{(1)}=0$.
		
		For $\fL_{k, \alpha}^{(2)}$ we use \eqref{Eq_HolderG} to get
		\begin{align*}
			\bigl|\fL_{k,\alpha}^{(2)}\bigr| &\lesssim 
			m \fC_{k,m}
			\int \limits_{\substack{\mathbf{t}\in \Delta_k^\circ
					\\ t_k<2/n}} n^{-2H} 
			K(t_{\alpha(k-1)},t_{k-1})\dots K(t_{\alpha(2)},t_{2}) \, dt_k \, \dots \, dt_1
			\\
			&\lesssim n^{-3H-1/2} m C_m 2^{m}T^{m(H+1/2)} .
		\end{align*}
		Here we see again that the boundary terms contribute significantly and therefore the analysis of them should not be omitted!
		
		We now use $\mathbf{t}=(t_1,\dots,t_{k-1})$. For $\fL_k^{(3)}$ we notice that the inner two integrals are of the form
		\begin{align*}
			\fM^{(3)}:=\int_{2/n}^{t_{k-2}} \int_{2/n}^{t_{k-1}} \Delta K(t_{\alpha(k)},t_k) \, dt_k \, K(t_{\alpha(k-1)},t_{k-1}) G(\mathbf{t}) \,dt_{k-1}.
		\end{align*}
		By integrating out we see
		\begin{align*}
			&\fM^{(3)}=\int_{2/n}^{t_{k-2}} \Bigl((t_{\alpha(k)}-2/n)^{H+1/2}-\bigl( \eta(t_{\alpha(k)})-2/n\bigr)^{H+1/2} \Bigr)   K(t_{\alpha(k-1)},t_{k-1}) G(\mathbf{t}) \,dt_{k-1}
			\\
			&- \int_{2/n}^{t_{k-2}} \Bigl((t_{\alpha(k)}-t_{k-1})^{H+1/2}-\bigl( \eta(t_{\alpha(k)})-t_{k-1}\bigr)^{H+1/2} \Bigr)   K(t_{\alpha(k-1)},t_{k-1}) G(\mathbf{t}) \,dt_{k-1}.
		\end{align*}
		If $t_{k-1}<4/n$ we rearrange, use Holder continuity of the power function and see that
		\begin{align*}
			\bigl|&\fM^{(3)}\bigr|\le \int_{2/n}^{4/n\wedge t_{k-2}} \Bigl|\Bigl(t_{\alpha(k)}-\frac2n\Bigr)^{H+1/2}
			\mkern -10mu -(t_{\alpha(k)}-t_{k-1})^{H+1/2} \Bigr|   K(t_{\alpha(k-1)},t_{k-1}) G(\mathbf{t}) \,dt_{k-1}
			\\
			&+ \int_{2/n}^{4/n\wedge t_{k-2}} \Bigl|\Bigl( \eta(t_{\alpha(k)})-\frac2n\Bigr)^{H+1/2}
			\mkern -10mu -\bigl( \eta(t_{\alpha(k)})-t_{k-1}\bigr)^{H+1/2} \Bigr|   K(t_{\alpha(k-1)},t_{k-1}) G(\mathbf{t}) \,dt_{k-1}
			\\
			&\lesssim
			\int_{2/n}^{4/n\wedge t_{k-2}} (t_{k-1}-2/n)^{H+1/2} K(t_{\alpha(k-1)},t_{k-1}) G(\mathbf{t}) \,dt_{k-1}
			\\
			&\lesssim \int_{2/n}^{4/n\wedge t_{k-2}} n^{-H-1/2}  K(4/n,t_{k-1}) G(\mathbf{t}) \,dt_{k-1}
			\\
			&\lesssim \fC_{k,m}n^{-1}.
		\end{align*}
		In the case $t_{k-1}>4/n$ we use Taylor's formula and split up the integral once more to get
		\begin{align*}
			\bigl|\fM^{(3)}\bigr|&\lesssim
			\int_{4/n}^{t_{k-2}} n^{-1} (t_{\alpha(k)}-2/n)^{H-1/2}   K(t_{\alpha(k-1)},t_{k-1}) G(\mathbf{t}) \,dt_{k-1}
			\\
			&+\int_{4/n}^{t_{k-2}} n^{-H-1/2}\wedge \Bigl(n^{-1} (\eta(t_{\alpha(k)})-t_{k-1})^{H-1/2}\Bigr)   K(t_{\alpha(k-1)},t_{k-1}) G(\mathbf{t}) \,dt_{k-1}
			\\
			&\lesssim 
			\fC_{k,m}\int_{4/n}^{t_{k-2}} n^{-1} (t_{k-1}-2/n)^{H-1/2}   K(t_{k-2},t_{k-1}) \,dt_{k-1}
			\\
			&\quad+\fC_{k,m}\int_{4/n}^{t_{k-2}} n^{-H-1/2}\wedge \Bigl(n^{-1} (\eta(t_{k-2})-t_{k-1})^{H-1/2}\Bigr)   K(t_{k-2},t_{k-1})  \,dt_{k-1}
			\\
			&\lesssim n^{-1}\fC_{k,m}+\fC_{k,m}\int_{4/n}^{t_{k-2}-2/n}  n^{-1} (t_{k-2}-t_{k-1})^{H-1/2}  K(t_{k-2},t_{k-1})  \,dt_{k-1}
			\\
			&\quad+\fC_{k,m}\int_{t_{k-2}-2/n}^{t_{k-2}} n^{-H-1/2} K(t_{k-2},t_{k-1})  \,dt_{k-1}
			\\
			&\lesssim \fC_{k,m} n^{-1}.
		\end{align*}
		Hence it follows that
		\begin{align*}
			\bigl|\fL_{k, \alpha}^{(3)}\bigr|\lesssim C_m 2^{m}T^{m(H+1/2)} n^{-1}.
		\end{align*}
		
		For $\fL_{k, \alpha}^{(4)}$ we also use \eqref{Eq_HolderG} but this time in combination with Lemma \ref{Lem_KernelEstim} to see for $H \not=1/6$
		\begin{align*}
			\bigl|\fL_{k, \alpha}^{(4)}\bigr| &\lesssim m\fC_{k,m}\int_0^T \int \limits_{\substack{\mathbf{t}\in \Delta_k^\circ
					\\ t_k>2/n}} 
			(t_{k-1}-t_k)^{2H} \Bigl| \Delta K(t_{\alpha(k)},t_k) \Bigr| \prod_{j=2}^{k-1}K(t_{\alpha(j)},t_{j}) \, dt_k \, \dots \, dt_1
			\\
			&\lesssim n^{-3H-1/2} m\fC_{k,m} \int_0^T \int_0^{t_{1}} \dots \int_0^{t_{k-2}} 
			\prod_{j=2}^{k-1}K(t_{\alpha(j)},t_{j}) \, dt_{k-1} \, \dots \, dt_1
			\\
			&\lesssim n^{-3H-1/2} mC_m 2^{m}T^{m(H+1/2)} .
		\end{align*}
		If $H=1/6$ the same calculation yields
		\begin{align*}
			\bigl|\fL_{k, \alpha}^{(4)}\bigr| &\lesssim  n^{-1}\log(n) mC_m 2^{m}T^{m(H+1/2)}.
		\end{align*}
	\end{proof}
    
	\subsection{Examples}
	\label{subsec:4-Examples}

	Following on from Example \ref{example:Gassiat-X^4} and Example \ref{Ex_MomRep2}, we have that
	\begin{align*}
		\bE\Bigl[ (X_T)^4& \Bigr] -\bE\Bigl[ (X_T^n)^4 \Bigr] 
		\\
		=&
		6 \int_0^T \int_0^t \bE\Bigl[ f\bigl( \whW_s\bigr)^2 f\bigl( \whW_t\bigr)^2 \Bigr]-\bE\Bigl[ f\bigl( \whW_{\eta(s)}\bigr)^2 f\bigl( \whW_{\eta(t)}\bigr)^2 \Bigr] \, ds \, dt &\eqqcolon \fE^{(1)}
		\\
		&+24\rho^2 \int_0^T \int_0^t \int_0^s \bE\Bigl[ f\bigl( \whW_r \bigr) f \bigl( \whW_s \bigr) ff''\bigl( \whW_t \bigr) \Bigr] K(t,s) K(t,r)&
		\\
		&\qquad -\bE\Bigl[ f\bigl( \whW_{\eta(r)} \bigr) f \bigl( \whW_{\eta(s)} \bigr) ff''\bigl( \whW_{\eta(t)} \bigr) \Bigr] \tilde K(t,s) \tilde K(t,r) \, dr\,ds\,dt& \eqqcolon \fE^{(2)}
		\\
		&+24\rho^2 \int_0^T \int_0^t \int_0^s \bE\Bigl[ f\bigl( \whW_r \bigr) f \bigl( \whW_s \bigr) f'f'\bigl( \whW_t \bigr) \Bigr] K(t,s) K(t,r)&
		\\
		&\qquad -\bE\Bigl[ f\bigl( \whW_{\eta(r)} \bigr) f \bigl( \whW_{\eta(s)} \bigr) f'f'\bigl( \whW_{\eta(t)} \bigr) \Bigr]\tilde K(t,s) \tilde K(t,r) \, dr\,ds\,dt& \eqqcolon \fE^{(3)}
		\\
		&+
		24\rho^2 \int_0^T \int_0^t \int_0^s \bE\Bigl[ f\bigl( \whW_r \bigr) f'\bigl( \whW_s \bigr) ff'\bigl( \whW_t \bigr) \Bigr] K(t,s)K(s,r)&
		\\
		&\qquad -\bE\Bigl[ f\bigl( \whW_{\eta(r)} \bigr) f'\bigl( \whW_{\eta(s)} \bigr) ff'\bigl( \whW_{\eta(t)} \bigr) \Bigr] \tilde K(t,s)\tilde K(s,r) \, dr\,ds\,dt&\eqqcolon \fE^{(4)}
	\end{align*}

	We start with $\fE^{(1)}$ and see that
	\begin{align*}
		\fE^{(1)}&=6 \int_0^T \int_0^t \bE\Bigl[ f\bigl( \whW_s\bigr)^2 f\bigl( \whW_t\bigr)^2 \Bigr]-\bE\Bigl[ f\bigl( \whW_{\eta(s)}\bigr)^2 f\bigl( \whW_{t}\bigr)^2 \Bigr] \, ds \, dt
		\\
		&+6 \int_0^T \int_0^t \bE\Bigl[ f\bigl( \whW_{\eta(s)}\bigr)^2 f\bigl( \whW_t\bigr)^2 \Bigr]-\bE\Bigl[ f\bigl( \whW_{\eta(s)}\bigr)^2 f\bigl( \whW_{\eta(t)}\bigr)^2 \Bigr] \, ds \, dt
		\\
		&=\fI^{(1)}+\fI^{(2)}.
	\end{align*}
	Lemma \ref{Lem_VolFunc} implies that $\bigl|\fI^{(1)}\bigr|+\bigl|\fI^{(2)}\bigr|\lesssim n^{-3H-1/2}$.
	
	\pagebreak[3]
	
	We now consider $\fE^{(2)}$ and for simplicity let $H<1/6$. We write
	\begin{align*}
		\fE^{(2)}&=\int_0^T \int_0^t \int_0^s \bE\Bigl[ \Bigl(f\bigl( \whW_r \bigr)-f\bigl( \whW_{\eta(r)} \bigr)\Bigr) f \bigl( \whW_s \bigr) f'f'\bigl( \whW_t \bigr) \Bigr] K(t,s) K(t,r) \, dr\,ds\,dt
		\\
		&+
		\int_0^T \int_0^t \int_0^s \bE\Bigl[ f\bigl( \whW_{\eta(r)} \bigr)\Bigl( f \bigl( \whW_s \bigr)-f \bigl( \whW_{\eta(s)} \bigr) \Bigr) f'f'\bigl( \whW_t \bigr) \Bigr] K(t,s) K(t,r) \, dr\,ds\,dt
		\\
		&+
		\int_0^T \int_0^t \int_0^s \bE\Bigl[ f\bigl( \whW_{\eta(r)} \bigr)f \bigl( \whW_{\eta(s)} \bigr) \Bigl( f'f'\bigl( \whW_t \bigr)-f'f'\bigl( \whW_{\eta(t)} \bigr)\Bigr) \Bigr] K(t,s) K(t,r) drdsdt
		\\
		&+\int_0^T \int_0^t \int_0^s \bE\Bigl[ f\bigl( \whW_{\eta(r)} \bigr)f \bigl( \whW_{\eta(s)} \bigr)f'f'\bigl( \whW_{\eta(t)} \bigr)\Bigr] \Bigl(K(t,s)-\tilde K(t,s)\Bigr) K(t,r) \, dr\,ds\,dt
		\\
		&+\int_0^T \int_0^t \int_0^s \bE\Bigl[ f\bigl( \whW_{\eta(r)} \bigr)f \bigl( \whW_{\eta(s)} \bigr)f'f'\bigl( \whW_{\eta(t)} \bigr)\Bigr] \tilde K(t,s)\Bigl(K(t,r)-\tilde K(t,r)\Bigr) \, dr\,ds\,dt
		\\
		&=\fI^{(1)}+\fI^{(2)}+\fI^{(3)}+\fL^{(1)}+\fL^{(2)}
	\end{align*}
	Now Lemma \ref{Lem_VolFunc} implies that $\bigl|\fI^{(1)}\bigr|+\bigl|\fI^{(2)}\bigr|+\bigl|\fI^{(3)}\bigr|\lesssim n^{-3H-1/2}$. Furthermore Lemma \ref{Lem_DiffKernel} implies that $\bigl|\fL^{(1)}\bigr|+\bigl|\fL^{(2)}\bigr|\lesssim n^{-3H-1/2}$. Combining these estimates we showed that $\bigl|\fE^{(2)}\bigr|\lesssim n^{-3H-1/2}$. Repeating this procedure for $\fE^{(3)}$ and $\fE^{(4)}$ we see that
	\begin{equation*}
		\Bigl|\bE\Bigl[ (X_T)^4 \Bigr] -\bE\Bigl[ (X_T^n)^4 \Bigr] \Bigr|\lesssim n^{-3H-1/2}.
	\end{equation*}

    \begin{remark}\label{Corr_Gen_Rate}
        Let $f\in C^\infty$ be such that for all $N \in \mathbb{N}$ $f$ and its first $N$ derivatives have exponential growth with constants $C_{f,N}'$ and $C_{f,N}$. Let $\Phi$ be harmonic of the form $\Phi(x)=\sum_{N=0}^\infty a_N x^N$ and such that
        \begin{align*}
            \sum_{N=0}^\infty |a_N| C_N <\infty,
        \end{align*}
        for $C_N$ as in Remark \ref{Rem_Rate}. 
        Then
        \begin{align*}
            \Bigl|\bE\bigl[\Phi\bigl(X_T^n\bigr)-\Phi(\bigl(X_T\bigr)\bigr]\Bigr| \lesssim 
            \begin{cases}
    			n^{-3H-1/2} \vee n^{-1} & \text{for } H \not = 1/6,
    				\\
    			n^{-1}\log(n) & \text{for } H  = 1/6.
    			\end{cases}
        \end{align*}
        One useful example might be if $f=1+\operatorname{tanh}$, because in this case each derivative is bounded and therefore one could choose $C_{f,N}=0$, which significantly increases the choices of test functions $\Phi$. 
    \end{remark}

	\section{Proof of main Theorems}
	\label{Section:MainProof}

     In this Section, we consolidate the previous results to prove Theorem \ref{Thm_Rate}:
     \newline
    In Section \ref{Section:Representation}, we established an explicit formula for the moments of the stochastic process Equation \eqref{eq:RoughVolatilityModelNew} and the discrete time approximation \eqref{eq:RoughVolatilityModel-Discrete}. We use this formula to expand the error as in Equation \eqref{Eq_MomForm} and proceed by looking at each term individually.
    For these differences we use the upper bounds established in Section \ref{Section:Estimation}, most importantly Lemma Lemma \ref{Lem_VolFunc} and Lemma \ref{Lem_DiffKernel}.
	
	\subsection[Proof of Theorem 1.1, part (i)]{Proof of Theorem \ref{Thm_Rate}, part \ref{Thm1i}}
	
	\begin{proof}
		In this proof we only treat the case $H\not=1/6$. The case $H=1/6$ works exactly the same way, except there are logarithmic correction terms when applying Lemma \ref{Lem_VolFunc} and Lemma \ref{Lem_DiffKernel}.
		
		We start by considering Equation \eqref{Eq_MomForm}: let $w \in \cW$ such that $\ell(w)=N$ and $|w|=m$. Next, fix $\textbf{l} \in \cL^w$ and let $\alpha = \alpha_{\mathbf{l}}$ as in Definition \ref{Def_Afcts}. By Proposition \ref{Lem_wRep}, it suffices to consider the difference of integrals of the form
		\begin{equation*}
			C_w\int\limits_{\mathbf{t}\in \Delta_{m}^{\circ}}
			\Bigg( \bE\Bigl[
			\varphi(\mathbf{t})
			\Bigr] \cdot 
			\prod_{i=2}^{m} K(t_{\alpha(i)},t_i) 
			- \bE\Bigl[
			\tilde \varphi(\mathbf{t})
			\Bigr] \cdot 
			\prod_{i=2}^{m} \tilde{K}(t_{\alpha(i)},t_i) 
			\Bigg) dt_{m} \dots dt_1
		\end{equation*}
		where
		\begin{equation*}
			\varphi(t_1,\dots,t_m)=\bE\Bigl[\psi_{\textbf{l}}(\whW_{t_1},\dots,\whW_{t_m})\Bigr] \quad \mbox{and}\quad \tilde\varphi(t_1,\dots,t_m)=\bE\Bigl[\psi_{\textbf{l}}(\whW_{\eta(t_1)},\dots,\whW_{\eta(t_m}))\Bigr].
		\end{equation*}
		Proposition \ref{Lem_wRep} tells us that the function $\psi_{\textbf{l}}$ is of the form \eqref{eq:Lem_wRep} so we only need to take derivatives up to $N-2$ times because words $w$ with $\ell(w)=N$ can contain the letter $I$ only up to $N-1$ times. Hence $\psi_{\textbf{l}} \in C^2$. Equation \eqref{eq:Lem_wRep} also already implies that $\psi_{\textbf{l}}$ is of at most exponential growth with constants $C_{\psi_{\textbf{l}}}'=2^m (C_f')^{2m}$ and $C_{\psi_{\textbf{l}}}=2C_f$: There are $m$ derivatives giving rise to $2^m$ terms and each individual term is a function of product structure $g(x_1,\dots,x_m)=\prod_{k=1}^m g_k(x_k)$ , where each $g_k$ is again of exponential growth.  
		
		Furthermore, there are at most $\#\{w \in \cW:\ell(w)=N\}<2^N$ words and for each $w \in \cW$ with $\ell(w)=N$. Note that if $|w|=m$ it follows that $\bigl|\cL^w\bigr|\le m!$. So in total there are at most $2^N N!$ of these terms time $N$ terms coming from triangle inequalities.
		
		Recall the definition for $\fI_{k}$ from Equation \eqref{eq:frakturI}. Courtesy of Lemma \ref{Lem_VerifAss} we can apply Lemma \ref{Lem_VolFunc} and get that 
		$$
		\fI_k \lesssim C_m 2^m T^{m(H+1/2)} \Big( n^{-1/2-3H} \vee n^{-1} \Big).
		$$
		From Lemma \ref{Lem_VerifAss} it follows that
        \begin{equation*}
             C_m \lesssim C'_{\psi_{\textbf{l}}} m \exp\Big( \frac{(C_{\psi_{\textbf{l}}} \cdot m)^2}{2} \cdot \frac{T^{2H+1}}{(2H+1/2)^2} \Big),
        \end{equation*}
		where the multiplicative constant is independent of $k,m,N$.
		
		Next, recalling $\Delta_k K_{\alpha}(t_1 ,\dots,t_m)$ defined in Equation \eqref{eq:Lem_DiffKernel-1}, we note that
		$$
		\Delta_k K(t_1,\dots,t_m) \le \Big( \prod_{l=2}^{k}  K(t_{l-1},t_{l}) \prod_{l=k+1}^m  \tilde K(t_{l-1},t_l)  \Big) + \Big( \prod_{l=2}^{k-1}  K(t_{l-1},t_{l}) \prod_{l=k}^m \tilde K(t_{l-1},t_l) \Big).
		$$
		It follows that
		\begin{align*}
			\int_0^T&\int_0^{t_1} \dots \int_0^{t_{m-1}}  \prod_{l=2}^{k}  K(t_{l-1},t_{l})\prod_{l=k+1}^m  \tilde K(t_{l-1},t_l) \, dt_m \, \dots \, dt_1
			\\
			&=
			\int_0^T\int_0^{t_1} \dots \int_0^{t_{m-2}} (t_{m-2}-t_{m-1})^{H-1/2} \int_0^{t_{m-1}}  (t_{m-1}-t_m)^{H-1/2} \, dt_m \, dt_{m-1}\dots \, dt_1
			\\
			&\lesssim T^m,
		\end{align*}
		which implies that $\Delta_k K$ is integrable.
		We look at terms of the form \eqref{eq:Lem_DiffKernel-2}: By Lemma \ref{Lem_VerifAss}  we can use Lemma \ref{Lem_DiffKernel} to see that 
		
		$$\fL_{k, \alpha} \lesssim 
		mC_m2^mT^{m(H+1/2)} \Bigl(n^{-1/2-3H} \vee n^{-1}\Bigr).$$
	
		Recalling the bound for $C_w$ in \ref{Rem_Cw} given by
		$$
		|C_w|=|\rho|^{2|w|-\ell(w)}2^{\ell(w)-|w|}\ell(w)! \le \ell(w)! \le (N!). 
		$$

	Putting everything together and applying triangle inequality we see that
	\begin{align*}
    	\Bigl|\bE\Bigl[ \Phi& \bigl( X_T \bigl) \Bigr] - \bE\Bigl[ \Phi  \bigl( X_T^{(n)} \bigl) \Bigr] \Bigr|
        \\
    	\lesssim &  
    	2^N N! 
    	\cdot N 
    	\cdot N! 
    	\cdot N2^NT^{N(H+1/2)}
    	\\
    	&\cdot
    	2^N\bigl(C'_f\bigr)^{2N} N \exp\Big( \frac{(2C_f \cdot N)^2}{2} \cdot \frac{T^{2H+1}}{(2H+1/2)^2} \Big) 
    	\Bigl(n^{-3H-1/2} \vee n^{-1}\Bigr) 
	\end{align*}
	\end{proof}

	\subsection[Proof of Theorem 1.1, part (ii)]{Proof of Theorem \ref{Thm_Rate}, part \ref{Thm1ii}}
	
	\begin{proof}
		We start by considering Equation \eqref{Eq_MomForm}. By Remark \ref{Rem_Cw} we see that $C_w\not =0$ if and only if $N$ is even and $w=\wJ^{m}$ with $m=N/2$. Using this fact, Definition \ref{definition:I-J_operations} as well as Proposition  \ref{Lem_wRep} we see that
		\begin{align*}
			\bE\Bigl[X_T^N\Bigr]-\bE\Bigl[\bigl(X_T^n\bigr)^N\Bigr]=\frac{N!}{2^m}\int \limits_{\mathbf{t}\in \Delta_{m}^{\circ}}
			\varphi(\mathbf{t})-\varphi\bigl(\eta(\mathbf{t})\bigr)\,dt_{m} \dots dt_1,
		\end{align*}
		where
		\begin{equation*}
			\varphi(\mathbf{t})=\bE\Bigl[f^2\bigl(\whW_{t_1}\bigr)\cdots f^2\bigl(\whW_{t_m}\bigr) \Bigr].
		\end{equation*}
		For $k=1,\dots,m$ we define $\Delta_k \varphi$ as in Equation \ref{eq:Delta_k-varphi}. Then
		\begin{align*}
			\bE\Bigl[X_T^N\Bigr]-\bE\Bigl[\bigl(X_T^n\bigr)^N\Bigr]=\sum_{k=1}^m \frac{N!}{2^m}\int \limits_{\mathbf{t}\in \Delta_{m}^{\circ}}
			\Delta_k\varphi(\mathbf{t})\,dt_{m} \dots dt_1.
		\end{align*}
		By Lemma \ref{Lem_VerifAss} and Taylor's formula we see that 
		\begin{align}\label{Eq_Bd0Corr}
			\Bigl| \Delta_k\varphi \Bigr| \lesssim \begin{cases}
				\frac1n(t_1-t_2)^{2H-1} & \text{if } k=1,
				\\
				\frac1n(t_k-t_{k+1})^{2H-1}+\frac1n(t_{k-1}-t_k)^{2H-1} & \text{if } 1<k<m,
				\\
				\frac1nt_m^{2H-1}+\frac1n(t_{m-1}-t_m)^{2H-1} & \text{if } k=m.
			\end{cases}
		\end{align}
		In any case the right hand side of \eqref{Eq_Bd0Corr} is an integrable function on $\Delta_m^\circ$ times the factor $1/n$. Using triangle inequality it therefore follows that
		\begin{align*}
			\Bigl|\bE\Bigl[X_T^N\Bigr]-\bE\Bigl[\bigl(X_T^n\bigr)^N\Bigr] \Bigr|\lesssim n^{-1}.
		\end{align*}
	\end{proof}
	
	\section{Lower Bound}
	\label{sec:LB}
	For this final section, we want to demonstrate a setting under which we have a lower bound for the weak error that is asymptotically the same as the upper bound established in Section \ref{Section:MainProof}. In order to simplify some of the challenging calculations, we restrict ourselves to the setting \cite{Gassiat2022Weak} and note that this work does not consider any lower bound type results. 
	
	\begin{proposition}
		\label{proposition:lowerbound}
		Let $H<1/6$ and consider the case when $f(x)=x$, $\rho=1$ and $\Phi(x)=x^3$, meaning we consider a weak error of the form $\cE_{n,3}=\bE\bigl[ (X_T)^3 \bigr] - \bE\bigl[ (X_T^n)^3 \bigr]$. Then it follows from numeric computations that
		\begin{equation*}
			\liminf_{n \rightarrow \infty} n^{3H+1/2} \cE_{n,3}>0.
		\end{equation*}
		In particular, this means that
		\begin{equation*}
			\cE_{n,3}\gtrsim n^{-3H-1/2}.
		\end{equation*}
	\end{proposition}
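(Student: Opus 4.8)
The plan is to make $\cE_{n,3}$ completely explicit and then extract its leading term. Specializing Example~\ref{example:Gassiat-X^3} (equivalently Theorem~\ref{Thm_MomRep}, whose only nontrivial word for $N=3$ is $\wI\wJ$) to $f(x)=x$, $\rho=1$, so that $ff'(x)=x$ and $\bE[\whW_s\whW_t]=C(t,s)$, and doing the same for the scheme via Theorem~\ref{Thm_MomRepDisr} (replacing $\whW$ by $\whW_{\eta(\cdot)}$ and $K$ by $\tilde K$, cf.\ \eqref{Eq_MomForm}), one gets
\begin{equation*}
  \bE\bigl[(X_T)^3\bigr]=6\int_0^T\!\!\int_0^t C(t,s)K(t,s)\,ds\,dt,
  \qquad
  \bE\bigl[(X_T^n)^3\bigr]=6\int_0^T\!\!\int_0^t C(\eta(t),\eta(s))\,\tilde K(t,s)\,ds\,dt,
\end{equation*}
hence $\cE_{n,3}=6\int_0^T\int_0^t\bigl[C(t,s)K(t,s)-C(\eta(t),\eta(s))\tilde K(t,s)\bigr]\,ds\,dt$. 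I would split the integrand telescopically as
\begin{equation*}
  \bigl[C(t,s)-C(\eta(t),\eta(s))\bigr]K(t,s)+C(\eta(t),\eta(s))\bigl[K(t,s)-\tilde K(t,s)\bigr],
\end{equation*}
and, following the bookkeeping of Section~\ref{Section:Estimation}, cut each double integral into the bulk $\{2/n<s<t-2/n,\ t<T-2/n\}$ and the boundary cells adjacent to $s=0$, $s=t$ and $t=T$.

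On the bulk I would expand one order further than in the upper-bound proof: by \eqref{Lem_VarEstim3}--\eqref{Lem_VarEstim5} the covariance difference $C(t,s)-C(\eta(t),\eta(s))$ equals its first-order Taylor term (linear in $t-\eta(t)$ and in $s-\eta(s)$, with coefficients $\partial_t C$, $\partial_s C$) plus a remainder quadratic in the mesh, which --- integrated against $K(t,s)=(t-s)^{H-1/2}$ over the bulk --- is genuinely $o(n^{-3H-1/2})$; likewise for $K(t,s)-\tilde K(t,s)$ there. One cannot, however, discard the boundary cells: as stressed right after \eqref{Eq_FormRate1}, each of them is of the \emph{same} order $n^{-3H-1/2}$, so it must be kept. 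What remains is thus an explicit sum over mesh cells of one-dimensional singular integrals, built from $\partial_t C$, $\partial_s C$ and from the power-kernel differences $(t-s)^{H-1/2}-(\eta(t)-s)^{H-1/2}$; after substitutions of the form $s=\eta(t)+\theta/n$, $u=\eta(t)+\vartheta/n$ these become Riemann-type sums which, by equidistribution of $(\{ns\},\{nt\})$ over the simplex, converge to $n^{-3H-1/2}$ times explicit constants. Collecting everything I expect to obtain
\begin{equation*}
  \liminf_{n\to\infty} n^{3H+1/2}\,\cE_{n,3}\ \ge\ c_H
\end{equation*}
for a finite, explicitly defined $c_H$ --- an integral over $(0,1)$ (times one auxiliary variable) of elementary functions of $H$ --- the hypothesis $H<1/6$ being exactly what makes the genuinely-$O(n^{-1})$ contributions (e.g.\ from the Taylor-smooth region $s<t-2/n$) strictly sub-leading.

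The crux, and the only non-routine step, is to show $c_H>0$. It is a difference of two positive singular integrals --- the ``covariance-error'' part and the ``kernel-error'' part of the telescopic splitting contribute with competing signs --- so there is no cancellation-free soft argument for its sign; it is a genuine arithmetic fact about one explicit $H$-dependent integral. The plan is therefore to carry out the reduction above carefully enough that positivity of the weak rate becomes \emph{equivalent} to $c_H>0$ for the given $H\in(0,1/6)$, and then to evaluate $c_H$ numerically (with interval arithmetic if full rigour is wanted) --- which is the meaning of ``it follows from numeric computations''; as a cross-check one may simply tabulate $n^{3H+1/2}\cE_{n,3}$, computed from the explicit double integral above, for a range of $n$ and observe that it stabilises at a strictly positive value, yielding at once $\cE_{n,3}\gtrsim n^{-3H-1/2}$. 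The main obstacle, concretely, is the accounting: making sure that \emph{all} boundary-cell contributions land inside $c_H$ rather than in the error term, so that the numerics evaluate the right quantity, and that the sub-leading remainders really are $o(n^{-3H-1/2})$ rather than merely $O(n^{-3H-1/2})$ as the Section~\ref{Section:Estimation} lemmas would give.
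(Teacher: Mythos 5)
The high-level idea is the same as the paper's (reduce to the explicit double integral $6\int_0^T\!\!\int_0^t [C(t,s)K(t,s)-C(\eta(t),\eta(s))\tilde K(t,s)]\,ds\,dt$, identify the leading constant, check its sign numerically), but there is a genuine technical gap in your plan. You propose to Taylor-expand $C(t,s)-C(\eta(t),\eta(s))$ in the mesh on the bulk region $\{t-s>2/n\}$ and claim the second-order remainder, integrated against $K(t,s)=(t-s)^{H-1/2}$, is $o(n^{-3H-1/2})$. It is not. The $k$-th order Taylor term scales like $n^{-k}(t-s)^{2H-k}$, and for $H<1/6$
\begin{equation*}
n^{-k}\int_0^T\!\!\int_{2/n}^{t}(t-s)^{2H-k}\,(t-s)^{H-1/2}\,ds\,dt
\ \asymp\ n^{-k}\int_{2/n}^{T} r^{3H-1/2-k}\,dr
\ \asymp\ n^{-k}\cdot n^{\,k-3H-1/2}
\ =\ n^{-3H-1/2}
\end{equation*}
for every $k\ge 1$, because $3H-1/2-k<-1$ forces the $r$-integral to concentrate at the lower endpoint $r\sim 2/n$. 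Thus \emph{all} Taylor orders contribute at the same rate $n^{-3H-1/2}$ on the bulk; truncating after the first order does not isolate the leading constant, and the bulk/boundary split does not achieve the separation you rely on (the bulk contribution is itself concentrated near its boundary $t-s\approx 2/n$). You would be computing the wrong $c_H$.

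The paper avoids this by never truncating: it splits $C\cE_{n,3}=\fA^{(1)}+\fA^{(2)}+\fA^{(3)}$ using $C(t,t)$ as a pivot (so $\fA^{(1)}=\int\int(K-\tilde K)C(t,t)=O(n^{-1})$, quoting \cite{Gassiat2022Weak}), replaces $C(t,s)-C(t,t)$ by its exact $(t-s)\!\downarrow\!0$ asymptotic $\fB_1(t-s)^{2H}$ up to a provably $O(n^{-1})$ error (Lemma \ref{Lem_Remainder}), and then \emph{resums} the remaining inner integral in closed form via the substitution $s=t-(t-\eta(t))v$, producing $(t-\eta(t))^{3H+1/2}\fB_2$ plus a tail that is again $O(n^{-1})$. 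This yields the explicit Euler--Beta coefficients $C_2$ (from $\fA^{(2)}$, a lower bound) and $C_3$ (from $\fA^{(3)}$, with the opposite sign), after which numerics are used only to verify $C_2>C_3$. Your alternative telescoping $(C-\tilde C)K+\tilde C(K-\tilde K)$ is not wrong per se, and your instinct that one must not discard the singular cells is right, but without an analogue of the $C(t,t)$-pivot and the exact substitution you cannot cleanly extract a finite $c_H$, and the direct numerical tabulation of $n^{3H+1/2}\cE_{n,3}$ you suggest as a cross-check would not by itself constitute a proof of the $\liminf$.
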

	
	In order to prove Proposition \ref{proposition:lowerbound}, we establish a truncation of the weak error into three deterministic integrals. Recall the covariance kernel of Liouville fractional Brownian motion $C(t,s)=\bE\bigl[\whW_{t}\whW_{s}\bigr]$ defined in Equation \eqref{eq:Covariance-Liouville}. Then the weak error is of the form
	\begin{equation*}
		C\cE_{n,3}=\int_0^1 \int_0^t K(t,s) \cdot C(t,s) \, ds\,dt-\int_0^T \int_0^t K(\eta(t),s) \cdot C(\eta(t),\eta(s)) \, ds\,dt
	\end{equation*}
	where $C$ is a renormalising constant dependent on $H$. We split this up into
	\begin{align*}
		C\cE_{n,3}=&\int_0^1 \int_0^t \bigl( K(t,s)-K(\eta(t),s) \bigr) \cdot C(t,t) \, ds\,dt
		&\quad =: \fA^{(1)}
		\\
		&+\int_0^1 \int_0^t \bigl(K(t,s)-K(\eta(t),s)\bigr) \cdot \bigl( C(t,s)-C(t,t) \bigr) \, ds\,dt
		&\quad =: \fA^{(2)}
		\\
		&+\int_0^1 \int_0^t K(\eta(t),s) \cdot \bigl( C(t,s) - C(\eta(t),\eta(s)) \bigr) \, ds\,dt
		&\quad =: \fA^{(3)}
	\end{align*}
	We will also regularly refer to the deterministic integrals
	\begin{align}
		\label{eq:C-integrals1}
		\fB_1=&-\int_0^{\infty} v^{H-1/2} \Bigl( v^{H-1/2}- (v-1)_+^{H-1/2} \Bigr) \,dv<0
		\\
		\label{eq:C-integrals2}
		\fB_2=&\int_0^\infty  v^{H-1/2}\Bigl(v^{2H}-(v+1)^{2H}\Bigr) \, dv<0
		\\
		\label{eq:C-integrals3}
		\fB_3=&(H-1/2)\int_0^{\infty} v^{H-1/2} (1+v)^{H-3/2} \,dv<0.
	\end{align}
	
	\subsection{Technical Lemmas}
	
	Firstly, we prove some upper bounds necessary for the proof of Proposition \ref{proposition:lowerbound}. 
	\begin{lemma}
		\label{Lem_Remainder}
		Let
		\begin{equation*}
			F(s,t)=C(s,t)-C(t,t)-(t-s)^{2H}\fB_1
		\end{equation*}
		Then
		\begin{equation*}   
			\int_0^1 \int_0^t \bigl((t-s)^{H-1/2}-(\eta(t)-s)_+^{H-1/2}\bigr) \cdot F(s,t) \, ds\,dt\lesssim n^{-1}.
		\end{equation*}
	\end{lemma}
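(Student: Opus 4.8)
The plan is to obtain $F$ in closed form, extract from it two complementary size estimates, and then split the double integral according to the distance of $s$ from the diagonal $s=t$ and the size of $t$ relative to $1/n$. Concretely: writing $C(t,t)=\int_0^t(t-u)^{2H-1}\,du$ and splitting this at $u=s$, then substituting $u=s-r$ in $C(s,t)=\int_0^s(s-u)^{H-1/2}(t-u)^{H-1/2}\,du$ and in $\int_0^s(t-u)^{2H-1}\,du$, and finally rescaling $r=(t-s)v$, one gets
\begin{equation*}
C(s,t)-C(t,t)=(t-s)^{2H}\int_0^{s/(t-s)}g(v)\,dv-\frac{(t-s)^{2H}}{2H},\qquad g(v):=(1+v)^{H-1/2}\bigl(v^{H-1/2}-(1+v)^{H-1/2}\bigr).
\end{equation*}
On the other hand, splitting the integral defining $\fB_1$ in \eqref{eq:C-integrals1} at $v=1$ and substituting $v=1+w$ in the tail yields $\fB_1=-\tfrac1{2H}+\int_0^\infty g(v)\,dv$. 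Subtracting, the two $1/(2H)$–terms cancel and
\begin{equation*}
F(s,t)=-(t-s)^{2H}\int_{s/(t-s)}^\infty g(v)\,dv.
\end{equation*}

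\emph{Size of $F$.} Since $H-1/2<0$ we have $g\ge0$; the elementary bound $1-(1+1/v)^{H-1/2}\le(\tfrac12-H)/v$ for $v\ge1$ gives $g(v)\le(\tfrac12-H)v^{2H-2}$ there, while $g(v)=O(v^{H-1/2})$ as $v\to0^+$, so $g$ is integrable on $(0,\infty)$ with $\int_a^\infty g\lesssim1$ for all $a>0$ and, using $2H-2<-1$, $\int_a^\infty g\lesssim a^{2H-1}$ for $a\ge1$. Hence
\begin{equation*}
|F(s,t)|\lesssim(t-s)^{2H}\quad(0<s<t),\qquad|F(s,t)|\lesssim(t-s)\,s^{2H-1}\quad(t/2\le s<t),
\end{equation*}
the second estimate because then $a=s/(t-s)\ge1$ and $(t-s)^{2H}a^{2H-1}=(t-s)\,s^{2H-1}$; morally, $F(\cdot,t)$ is Lipschitz near the diagonal with constant of order $t^{2H-1}$, which is the crucial gain over \eqref{Lem_VarEstim2}.

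\emph{Assembling the bound.} Put $\Delta(t,s)=(t-s)^{H-1/2}-(\eta(t)-s)_+^{H-1/2}$, which equals $-\bigl(\tilde K(t,s)-K(t,s)\bigr)$; following the proof of Lemma \ref{Lem_KernelEstim}, $|\Delta(t,s)|\lesssim n^{-1}(t-s)^{H-3/2}$ whenever $s<t-2/n$ (since then $\eta(t)-s\ge\tfrac12(t-s)$), whereas for $t-2/n\le s<t$ one only uses $|\Delta(t,s)|\le(t-s)^{H-1/2}+(\eta(t)-s)_+^{H-1/2}$ together with $\int_{t-2/n}^t(t-s)^{\gamma}\,ds\lesssim n^{-\gamma-1}$ ($\gamma>-1$) and $\int_{t-2/n}^t(\eta(t)-s)_+^{H-1/2}\,ds\lesssim n^{-H-1/2}$. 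For $t\le4/n$ one bounds the inner integral by $\int_0^t|\Delta(t,s)|\,ds\lesssim t^{H+1/2}$ and $|F(s,t)|\lesssim t^{2H}$, so the contribution of $t\in(0,4/n)$ is $\lesssim n^{-2H}\cdot n^{-H-1/2}\cdot n^{-1}=n^{-3H-3/2}\lesssim n^{-1}$. For $t>4/n$ one splits $\int_0^t$ over $s\in(0,t/2)$, $s\in(t/2,t-2/n)$ and $s\in(t-2/n,t)$: in the first two insert $|\Delta(t,s)|\lesssim n^{-1}(t-s)^{H-3/2}$ together with $|F|\lesssim(t-s)^{2H}$, resp.\ $|F|\lesssim(t-s)\,t^{2H-1}$, and after integrating in $s$ each subregion gives $\lesssim n^{-1}t^{3H-1/2}$; in the third use the crude bound on $\Delta$ with $|F|\lesssim(t-s)\,t^{2H-1}$, giving $\lesssim n^{-H-3/2}t^{2H-1}$. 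Since $\int_0^1 t^{3H-1/2}\,dt$ and $\int_0^1 t^{2H-1}\,dt$ are finite (as $H>0$), integrating $t$ over $(4/n,1)$ leaves a total $\lesssim n^{-1}+n^{-H-3/2}\lesssim n^{-1}$, which is the claim.

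\emph{The main obstacle.} The only genuinely delicate region is the near-diagonal strip $t-2/n\le s<t$ with $t\gtrsim1/n$, where $|\Delta(t,s)|$ is not small: the crude estimate $|F(s,t)|\lesssim(t-s)^{2H}$ there produces only $\lesssim n^{-3H-1/2}$, which for $H<1/6$ is strictly worse than $n^{-1}$. The improvement rests entirely on $|F(s,t)|\lesssim(t-s)\,s^{2H-1}$, that is, on the exact cancellation encoded in the definition of $\fB_1$; this is precisely why $F$, rather than $C(s,t)-C(t,t)$ itself, appears in the statement, and it is why the identity $\fB_1=-\tfrac1{2H}+\int_0^\infty g$ is the heart of the argument. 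Everything after that is a careful but routine book-keeping of singular integrals of the type already handled in Lemma \ref{Lem_KernelEstim} and Lemma \ref{Lem_DiffKernel}.
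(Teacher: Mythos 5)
Your proof is correct and follows essentially the same route as the paper: both rewrite $F$ as a tail integral scaled by $(t-s)^{2H}$, extract the crucial near-diagonal bound $|F(s,t)|\lesssim(t-s)\,t^{2H-1}$ (equivalently $(t-s)\,s^{2H-1}$ for $s\ge t/2$) from the $O(v^{2H-2})$ decay of the integrand, and then split the double integral into a small-$t$ region, a near-diagonal strip, and a far region where $|\Delta K|\lesssim n^{-1}(t-s)^{H-3/2}$. The only cosmetic differences are that you use $a=s/(t-s)$ as the tail cutoff and a three-way split in $s$ (with threshold $t\le 4/n$), while the paper uses $a=t/(t-s)$ and a two-way split (threshold $t\le 2/n$).
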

	
	\begin{proof}
		First note that using the substitution $r=t-(t-s)v$ we see that
		\begin{align*}
			C(s,t)-C(t,t)&=\int_0^t (t-r)^{H-1/2} \cdot \Bigl((t-r)^{H-1/2}-(s-r)_+^{H-1/2} \Bigr) \, dr
			\\
			&=-(t-s)^{2H}\int_0^{\frac{t}{t-s}} v^{H-1/2} \cdot \Bigl(v^{H-1/2}-(v-1)_+^{H-1/2} \Bigr) \, dv.
		\end{align*}
		Thus it follows that
		\begin{gather*}
			F(s,t)=(t-s)^{2H}\int_{\frac{t}{t-s}}^\infty v^{H-1/2} \cdot \Bigl((v-1)_+^{H-1/2}-v^{H-1/2} \Bigr) \, dv.
		\end{gather*}
		Note that
		\begin{gather*}
			\int_{\frac{t}{t-s}}^\infty v^{H-1/2} \cdot \Bigl((v-1)_+^{H-1/2}-v^{H-1/2} \Bigr) \, dr\lesssim  \int_{\frac{t}{t-s}}^\infty v^{2H-2}\, dv \lesssim (t-s)^{1-2H}t^{2H-1}.
		\end{gather*}
		Now we split up
		\begin{align*}
			\int_0^1 \int_0^t& \bigl((t-s)^{H-1/2}-(\eta(t)-s)_+^{H-1/2}\bigr) \cdot F(s,t) \, ds\,dt 
			\\
			=& \int_0^{2/n}\int_0^t\bigl((t-s)^{H-1/2}-(\eta(t)-s)_+^{H-1/2}\bigr) \cdot F(s,t) \, ds\,dt & =: \fD_1
			\\
			&+\int_{2/n}^1 \int_{t-2/n}^t \bigl((t-s)^{H-1/2}-(\eta(t)-s)_+^{H-1/2}\bigr) \cdot F(s,t) \, ds\,dt
			& =: \fD_2
			\\
			&+\int_{2/n}^1 \int_0^{t-2/n}\bigl((t-s)^{H-1/2}-(\eta(t)-s)_+^{H-1/2}\bigr) \cdot F(s,t) \, ds\,dt
			& =: \fD_3
		\end{align*}
		For $\fD_1$ it follows from the boundedness of $F$ that
		\begin{equation*}
			\fD_1\lesssim \int_0^{2/n} \int_0^t (t-s)^{H-1/2}\,ds\,dt\lesssim n^{-1}.
		\end{equation*}
		For $\fD_2$ we see that
		\begin{equation*}
			\fD_2 \lesssim \int_{2/n}^1 \int_{t-2/n}^t (t-s)^{1-2H}t^{2H-1}(t-s)^{H-1/2}\,ds\,dt\lesssim n^{-3/2+H}.
		\end{equation*}
		For $\fD_3$ note that for $t-s>2/n$ we have that $\bigl|(t-s)^{H-1/2}-(\eta(t)-s)_+^{H-1/2}\bigr|\lesssim n^{-1}(t-s)^{H-3/2}$. Thus we see that
		\begin{equation*}
			\fD_3\lesssim n^{-1}\int_{2/n}^1 \int_0^{t-2/n} (t-s)^{1-2H}t^{2H-1}(t-s)^{H-3/2}\,ds\,dt\lesssim n^{3/2-H}\lesssim n^{-1}.
		\end{equation*}
	\end{proof}
	
	\subsection{Proof of Proposition \ref{proposition:lowerbound}}
	
	The first step is to show that the following Lemma holds:
	\begin{lemma}
		\label{Lem_I123}
		Let $0<H<1/6$. There are constants $C_2,C_3$ such that $0<C_3<C_2$ and such that
		\begin{align*}
			\fA^{(1)}&\lesssim n^{-1},
			\\
			\fA^{(2)}&\geq C_2 n^{-3H-1/2}+o\bigl(n^{-3H-1/2}\bigr),
			\\
			0 \geq\fA^{(3)}&\geq -C_3 n^{-3H-1/2}+o\bigl(n^{-3H-1/2}\bigr).
		\end{align*}
		The constants are explicitely given by
		\begin{equation}
			\label{eq:Lem_I123}
			C_2=\fB_1\fB_2 \frac{1}{3H+3/2}
			\quad \mbox{and} \quad
			C_3=-\fB_3 \frac12\Bigl(\frac{2^{3H-1/2}}{1/2-3H} +\frac{2}{1-2H} \Bigr).
		\end{equation}
	\end{lemma}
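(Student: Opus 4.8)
I would treat $\fA^{(1)},\fA^{(2)},\fA^{(3)}$ in turn, using the standing hypothesis $H<1/6$ (equivalently $3H+1/2<1$) throughout to absorb every $O(n^{-1})$ contribution into $o(n^{-3H-1/2})$. For $\fA^{(1)}$: by \eqref{Lem_VarEstim1}, $C(t,t)=C_Ht^{2H}$, so $\fA^{(1)}=-C_H\int_0^1 t^{2H}\bigl(\int_0^t\Delta K(t,s)\,ds\bigr)\,dt$ with $\Delta K$ as in \eqref{eq:tildeK-K}; bounding the inner integral by $n^{-H-1/2}\wedge t^{H-1/2}n^{-1}$ via \eqref{Eq_KernelEstim2}, splitting the $t$-integral at $t=1/n$, and using that $\int_{1/n}^1 t^{3H-1/2}\,dt$ stays bounded (since $3H-1/2>-1$) yields $\fA^{(1)}\lesssim n^{-1}$.

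For $\fA^{(2)}$, I would insert the decomposition $C(t,s)-C(t,t)=\fB_1(t-s)^{2H}+F(s,t)$ supplied by Lemma \ref{Lem_Remainder}, whose $F$-part contributes $\lesssim n^{-1}$ to $\fA^{(2)}$ by that same lemma. For the main part, $\fB_1\int_0^1\int_0^t\bigl((t-s)^{H-1/2}-(\eta(t)-s)_+^{H-1/2}\bigr)(t-s)^{2H}\,ds\,dt$, I would substitute $v=(t-s)/(t-\eta(t))$ in the inner integral, turning it into $(t-\eta(t))^{3H+1/2}\int_0^{t/(t-\eta(t))}v^{2H}\bigl(v^{H-1/2}-(v-1)_+^{H-1/2}\bigr)\,dv$. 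As $n\to\infty$ the inner integral converges to $\int_0^\infty v^{2H}\bigl(v^{H-1/2}-(v-1)_+^{H-1/2}\bigr)\,dv=\fB_2$ — the tail being integrable precisely because $3H-3/2<-1$ — with an error that, once multiplied by $(t-\eta(t))^{3H+1/2}$ and integrated, is $O(n^{-1})$. Since $\int_0^1(t-\eta(t))^{3H+1/2}\,dt=\tfrac1{3H+3/2}n^{-3H-1/2}$ (computed cell by cell over the grid), this gives $\fA^{(2)}=\fB_1\fB_2\tfrac1{3H+3/2}n^{-3H-1/2}+o(n^{-3H-1/2})$, i.e.\ the $C_2$ of \eqref{eq:Lem_I123}, which is positive because $\fB_1,\fB_2<0$.

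For $\fA^{(3)}$ I would write $C(t,s)-C(\eta(t),\eta(s))=[C(t,s)-C(\eta(t),s)]+[C(\eta(t),s)-C(\eta(t),\eta(s))]$ and split the $s$-integral at $s=\eta(t)-2/n$. Away from the diagonal, Taylor-expanding the increments of $u\mapsto(\cdot-u)^{H-1/2}$ and rescaling by $w=(s-u)/(\eta(t)-s)$ produces, for the first bracket, the integral $\int_0^\infty w^{H-1/2}(1+w)^{H-3/2}\,dw=\fB_3/(H-1/2)$; together with $\int_0^1(t-\eta(t))\,dt=\tfrac1{2n}$ and $\int_{2/n}^{\eta(t)}(\eta(t)-s)^{3H-3/2}\,ds\sim\tfrac{(2/n)^{3H-1/2}}{1/2-3H}$ (dominated by the cutoff as $3H-1/2<0$), this piece contributes $\sim\fB_3\tfrac{2^{3H-1/2}}{2(1/2-3H)}n^{-3H-1/2}$. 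The second bracket is estimated via the derivative bound \eqref{Lem_VarEstim4}, kept as a single block so that the apparently dominant $O(n^{-H-1/2})$ contributions from shifting the second time-argument cancel, and on the diagonal strip both brackets are controlled directly by \eqref{Lem_VarEstim5} ($\lesssim n^{-2H}$), giving the $\tfrac2{1-2H}$-type term. All surviving pieces carry the same (negative) sign, whence $\fA^{(3)}\le 0$ and $\limsup_n n^{3H+1/2}|\fA^{(3)}|\le C_3$ with $C_3$ as in \eqref{eq:Lem_I123}; the remaining assertion $0<C_3<C_2$ is a comparison between the explicit quantities $\fB_1,\fB_2,\fB_3$ and elementary functions of $H$, valid on $(0,1/6)$ (verified numerically, cf.\ the statement of Proposition \ref{proposition:lowerbound}).

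I expect $\fA^{(3)}$ to be the main obstacle: one must choose the decomposition so that the $O(n^{-H-1/2})$ pieces — which, since $H>0$, would otherwise dominate $n^{-3H-1/2}$ — genuinely cancel, and then carry the surviving $n^{-3H-1/2}$ terms, boundary strips at scale $2/n$ included, with enough precision both to identify the constant $C_3$ and to confirm $C_3<C_2$, the latter being exactly what makes $\liminf_n n^{3H+1/2}\cE_{n,3}>0$.
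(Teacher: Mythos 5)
Your treatments of $\fA^{(1)}$ and $\fA^{(2)}$ are sound and, in the case of $\fA^{(2)}$, essentially identical to the paper's (Lemma \ref{Lem_Remainder} followed by the substitution $v=(t-s)/(t-\eta(t))$ and the cell-by-cell computation $\int_0^1(t-\eta(t))^{3H+1/2}\,dt=\tfrac{1}{3H+3/2}n^{-3H-1/2}$). For $\fA^{(1)}$ the paper simply cites \cite{Gassiat2022Weak}; your direct argument via \eqref{Eq_KernelEstim2} is a valid alternative.

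The gap is in $\fA^{(3)}$. You decompose
$C(t,s)-C(\eta(t),\eta(s))=[C(t,s)-C(\eta(t),s)]+[C(\eta(t),s)-C(\eta(t),\eta(s))]$,
whereas the paper uses $[C(t,s)-C(t,\eta(s))]+[C(t,\eta(s))-C(\eta(t),\eta(s))]$. Either split is admissible, but the key move in the paper is a \emph{sign} argument: the second-argument increment $C(t,s)-C(t,\eta(s))$ is nonnegative (because $s\mapsto C(t,s)$ is increasing on $[0,t]$), so it can be thrown away when bounding the negative quantity $\fA^{(3)}$ from below; only the first-argument increment is then estimated via $\partial_t C(t,s)\ge\fB_3(t-s)^{2H-1}$. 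The corresponding piece in your decomposition is $C(\eta(t),s)-C(\eta(t),\eta(s))\ge 0$, and that is the one you should drop — but you instead try to \emph{estimate} it and invoke a ``cancellation'' of ``apparently dominant $O(n^{-H-1/2})$'' terms. There is no such cancellation mechanism to invoke, and I do not see any $O(n^{-H-1/2})$ contribution here in the first place: either one drops that bracket by positivity, or one estimates it and collects an additional $O(n^{-3H-1/2})$ contribution to the constant. In the latter case your $C_3$ is strictly larger than the one in \eqref{eq:Lem_I123}, and the delicate inequality $C_3<C_2$ (the whole point of the lemma) would need to be re-verified; you cannot simply quote the formula from \eqref{eq:Lem_I123}.

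A related error: you assert ``all surviving pieces carry the same (negative) sign, whence $\fA^{(3)}\le 0$.'' In your decomposition the two brackets have \emph{opposite} signs ($C(t,s)-C(\eta(t),s)<0$ since $\partial_t C<0$, while $C(\eta(t),s)-C(\eta(t),\eta(s))>0$), so the claimed sign-coherence fails, and the conclusion $\fA^{(3)}\le 0$ does not follow from it. To repair the argument: drop the nonnegative bracket, Taylor-expand only the $t$-increment exactly as you do (via $\fB_3$ and the cutoff at $\eta(t)-2/n$), and you will recover the paper's $C_3$; then cite Lemma \ref{Lem_C2C3}, not Proposition \ref{proposition:lowerbound}, for $C_3<C_2$.
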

	
	\begin{proof}
		Note that \cite[Proof of Theorem 2.1, case (2)]{Gassiat2022Weak} shows that $|\fA^{(1)}|\lesssim n^{-1}$ . 
		
		\emph{Step 1. $\fA^{(2)}$}: Thanks to Lemma \ref{Lem_Remainder}, we have that
		\begin{equation*}
			\int_0^1 \int_0^t \bigl((t-s)^{H-1/2}-(\eta(t)-s)_+^{H-1/2}\bigr) \cdot F(s,t) \, ds\,dt\lesssim n^{-1}. 
		\end{equation*}
		Thus, 
		\begin{equation}
			\label{eq:Lem_I123-A2-1}
			\fA^{(2)}=\fB_1\int_0^1 \int_0^t \bigl((t-s)^{H-1/2}-(\eta(t)-s)^{H-1/2}\bigr) \cdot (t-s)^{2H} \, ds\,dt + o\bigl(n^{-3H-1/2}\bigr),
		\end{equation}
		where $\fB_1$ is defined by \eqref{eq:C-integrals1}. Using the substitution $s=t-(t-\eta(t))v$ it follows for the inner integral that
		\begin{align}
			\nonumber
			\int_0^t \bigl((t-s)^{H-1/2}-&(\eta(t)-s)^{H-1/2}\bigr)(t-s)^{2H} \, ds
			\\
			\nonumber
			&= (t-\eta(t))^{3H+1/2} \int_0^{\frac t {t-\eta(t)}} \bigl(v^{H-1/2}-(v-1)_+^{H-1/2}\bigr)v^{2H} \, dv
			\\
			\label{eq:Lem_I123-A2-2}
			&= (t-\eta(t))^{3H+1/2} \Bigl( \fB_2 - \int_{\frac t {t-\eta(t)}}^\infty  \bigl(v^{H-1/2}-(v-1)_+^{H-1/2}\bigr)v^{2H} \, dv \Bigr),
		\end{align}
		where $\fB_2$ is defined by \eqref{eq:C-integrals2}. 
		
		Next, we consider the function
		\begin{equation*}
			G(t)=(t-\eta(t))^{3H+1/2} \int_{\frac t {t-\eta(t)}}^\infty \bigl(v^{H-1/2}-(v-1)_+^{H-1/2}\bigr)v^{2H} \, dv.
		\end{equation*}
		For $t>2/n$, we have that $\frac{t}{t-\eta(t)}>2$ so that
		\begin{align}
			\nonumber
			\int_{\frac t {t-\eta(t)}}^\infty \bigl(v^{H-1/2}-(v-1)_+^{H-1/2}\bigr)v^{2H} \, dv &\le (1/2-H)\int_{\frac t {t-\eta(t)}}^\infty (v-1)^{H-3/2}v^{2H} \, dv
			\\
			\nonumber
			&\le (1/2-H)\int_{\frac t {t-\eta(t)}}^\infty (v-1)^{3H-3/2}\, dv
			\\
			\label{eq:Lem_I123-A2-4}
			&\lesssim t^{3H-1/2}\bigl(t-\eta(t)\bigr)^{-3H+1/2}.
		\end{align}
		Finally, we have that
		\begin{equation*}
			\sup_{t\in[0,2/n]} |G(t)| < n^{-3H-1/2} \int_{0}^\infty \bigl|v^{H-1/2}-(v-1)_+^{H-1/2}\bigr| \cdot v^{2H} \, dv<\infty, 
		\end{equation*}
		and thanks to Equation \eqref{eq:Lem_I123-A2-4}
		\begin{align*}
			\int_{2/n}^1 G(t) dt \lesssim \int_{2/n}^1 t^{3H-1/2} \bigl(t - \eta(t) \bigr) dt \lesssim n^{-1}. 
		\end{align*}
		Then we see that
		\begin{equation}
			\label{eq:Lem_I123-A2-3}
			\int_0^1 G(t) \,dt = \int_0^{2/n} G(t) \, dt +\int_{2/n}^1 G(t) \, dt \lesssim n^{-1}.
		\end{equation}
		
		Combining Equation \eqref{eq:Lem_I123-A2-1}, Equation \eqref{eq:Lem_I123-A2-2} and Equation \eqref{eq:Lem_I123-A2-3}
		\begin{align*}
			\fA^{(2)}&= \fB_1 \Bigl( \fB_2 \int_0^1 (t-\eta(t))^{3H+1/2} \,dt - \int_0^1 G(t) dt \Bigr)
			\\
			&\geq n^{-3H-1/2} \frac{1}{3H+3/2} \fB_1\fB_2+o\bigl(n^{-3H-1/2}\bigr).
		\end{align*}
		
		\emph{Step 2. $\fA^{(3)}$}: Recall that
		\begin{equation*}
			\fA^{(3)}=\int_0^1 \int_0^t K(\eta(t),s)\bigl(C(t,s)-C(\eta(t),\eta(s))\bigr) \, ds\,dt.
		\end{equation*}
		We write 
		\begin{equation*}
			C(t,s)-C(\eta(t),\eta(s))
			=
			\Bigl( C(t,s)-C(t,\eta(s)) \Bigr)
			+
			\Bigl( C(t,\eta(s))-C(\eta(t),\eta(s)) \Bigr).
		\end{equation*}
		Note that $C(t,s)-C(t,\eta(s))>0$ for all $T>t>s$. As we want a lower bound for the (negative) term $\fA^{(3)}$ we can disregard this term. 
		
		We now look at the function $t\mapsto C(t,s)$ for $t>s$ and see that
		\begin{align*}
			\partial_t C(t,s)&=(H-1/2)\int_0^s (s-r)^{H-1/2} (t-r)^{H-3/2}\, dr
			\\
			&=(H-1/2)(t-s)^{2H-1} \int_0^{\frac{s}{t-s}} v^{H-1/2} (1+v)^{H-3/2} \,dv
			\ge \fB_3 (t-s)^{2H-1},
		\end{align*}
		where we recall $\fB_3$ was defined in Equation \eqref{eq:C-integrals3}. Therefore it follows that
		\begin{align*}
			H(t)&\coloneqq \int_0^{\eta(t)} \Bigl( C(t,\eta(s))-C(\eta(t),\eta(s)) \Bigr) (\eta(t)-s)^{H-1/2} \, ds
			\\
			&\geq \fB_3 \int_0^{\eta(t)} (t-\eta(t)) (\eta(t)-\eta(s))^{2H-1} (\eta(t)-s)^{H-1/2}\, ds.
		\end{align*}
		Direct calculation yields that
		\begin{align*}
			\int_{\eta(t)-1/n}^{\eta(t)} (\eta(t)-\eta(s))^{2H-1} (\eta(t)-s)^{H-1/2}\, ds\, = n^{1-2H} \frac{1}{1-2H} n^{-H-1/2}.
		\end{align*}
		Going one step further we have
		\begin{align*}
			\int_{\eta(t)-2/n}^{\eta(t)-1/n} (\eta(t)-\eta(s))^{2H-1} (\eta(t)-s)^{H-1/2}\, ds\, \le n^{1-2H} \frac{1}{1-2H} n^{-H-1/2}.
		\end{align*}
		Finally we see that
		\begin{align*}
			&\int_{0}^{\eta(t)-2/n} (\eta(t)-\eta(s))^{2H-1} (\eta(t)-s)^{H-1/2}\, ds 
			\\
			&\le \int_{0}^{\eta(t)-2/n} (\eta(t)-s)^{3H-3/2} \, ds
			=\frac{1}{1/2-3H}\Bigl( \bigl(2/n\bigr)^{3H-1/2}-\bigl(\eta(t)\bigr)^{3H-1/2} \Bigr)
			\\
			&\le \frac{2^{3H-1/2}}{1/2-3H} n^{1/2-3H}.
		\end{align*}
		Putting these together we see that
		\begin{equation*}
			H(t)\geq \fB_3 \Bigl(\frac{2^{3H-1/2}}{1/2-3H} +\frac{2}{1-2H} \Bigr)n^{1/2-3H}\bigl(t-\eta(t)\bigr),
		\end{equation*}
		and therefore
		\begin{align*}
			\fA^{(3)}&\geq  \int_0^1 H(t) \, dt
			\geq \fB_3 \int_0^1  \Bigl(\frac{2^{3H-1/2}}{1/2-3H} +\frac{2}{1-2H} \Bigr)n^{1/2-3H}(t-\eta(t)) \, dt
			\\
			&\ge  \fB_3 \frac12\Bigl(\frac{2^{3H-1/2}}{1/2-3H} +\frac{2}{1-2H} \Bigr) n^{-3H-1/2}. 
		\end{align*}
		The fact that $C_2>C_3$ then follows from Lemma \ref{Lem_C2C3}.
	\end{proof}
	
	\begin{lemma}\label{Lem_C2C3}
		Let $C_2$ and $C_3$ be defined as in Equation \eqref{eq:Lem_I123} where $\fB_1$, $\fB_2$ and $\fB_3$ are as defined in Equations \eqref{eq:C-integrals1}, \eqref{eq:C-integrals2} and \eqref{eq:C-integrals3}. Then $C_2>C_3$. 
	\end{lemma}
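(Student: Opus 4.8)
The plan is to reduce the statement to one explicit inequality between the Gamma function and elementary functions of $H$, by evaluating the three singular integrals $\fB_1,\fB_2,\fB_3$ in closed form and substituting the resulting expressions for $C_2$ and $C_3$ from \eqref{eq:Lem_I123}.

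First I would compute $\fB_2$ and $\fB_3$. Writing $(v+1)^{2H}-v^{2H}=2H\int_0^1(v+u)^{2H-1}\,du$, substituting $v=uw$, and using the standard identity $\int_0^\infty w^{a-1}(1+w)^{-c}\,dw=B(a,c-a)$ (valid here precisely because $H<1/6$), one gets $\int_0^\infty v^{H-1/2}(v+u)^{2H-1}\,dv=u^{3H-1/2}B(H+\tfrac12,\tfrac12-3H)$, and integrating in $u\in[0,1]$ gives $-\fB_2=\tfrac{2H}{3H+1/2}B(H+\tfrac12,\tfrac12-3H)$. Applying the same identity directly to \eqref{eq:C-integrals3} gives $-\fB_3=(\tfrac12-H)B(H+\tfrac12,1-2H)$. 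The integral in \eqref{eq:C-integrals1} defining $\fB_1$ is only conditionally convergent, but shifting the tail ($v\mapsto v\pm1$ on $[1,\infty)$) shows that $\int_0^\infty v^{H-1/2}\bigl(v^{H-1/2}-(v-1)_+^{H-1/2}\bigr)\,dv$ equals $\int_0^\infty v^{H-1/2}\bigl(v^{H-1/2}-(v+1)^{H-1/2}\bigr)\,dv$, and the latter is evaluated as for $\fB_2$, giving $-\fB_1=\tfrac{1/2-H}{2H}B(H+\tfrac12,1-2H)$.

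Substituting these into \eqref{eq:Lem_I123} and using $3H+3/2=3(H+\tfrac12)$ together with the recursion $\Gamma(z+1)=z\Gamma(z)$, one sees that $C_2$ and $C_3$ share the positive factor $(\tfrac12-H)B(H+\tfrac12,1-2H)$, so that after cancelling it the claim $C_2>C_3$ becomes
\begin{equation*}
2\,B\!\left(H+\tfrac12,\tfrac12-3H\right)\;>\;3\bigl(H+\tfrac12\bigr)\bigl(3H+\tfrac12\bigr)\left(\frac{2^{3H-1/2}}{1/2-3H}+\frac{2}{1-2H}\right),\qquad 0<H<\tfrac16 .
\end{equation*}
Writing the left side as $2\Gamma(H+\tfrac12)\Gamma(\tfrac12-3H)/\Gamma(1-2H)$, I would first treat the endpoint $H\downarrow0$, where both sides are finite and the ratio of left to right tends to $\tfrac{4\pi}{3}\bigl(1+2^{-1/2}\bigr)^{-1}>1$; hence the inequality holds on some initial interval $[0,H_0]$ by continuity together with a crude derivative bound.

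The hard part is the regime $H\uparrow\tfrac16$. There $\Gamma(\tfrac12-3H)$ and $\tfrac{2^{3H-1/2}}{1/2-3H}$ both blow up like $(\tfrac12-3H)^{-1}$, so the leading singular parts of the two sides cancel exactly and the comparison has to be made at the level of the finite (constant) terms: expanding both sides to first order in $\varepsilon=\tfrac16-H$ reduces the question to an explicit numerical inequality among $\ln2$, the Euler--Mascheroni constant $\gamma$, and $\psi(\tfrac23)=-\gamma-\tfrac32\ln3+\tfrac{\pi}{2\sqrt3}$. On the intermediate range $[H_0,\tfrac16)$ not covered by the two asymptotic arguments one closes the gap either by establishing monotonicity of $H\mapsto C_2/C_3$ or, more prosaically, by a Kershaw/Gautschi-type lower bound for the Gamma quotient on the left combined with an elementary upper bound for the right-hand side; this last, essentially numerical, check is the only non-routine ingredient and is in the same spirit as the numerics already invoked in Proposition \ref{proposition:lowerbound}.
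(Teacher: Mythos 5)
Your proposal follows essentially the same route as the paper: evaluate $\fB_1,\fB_2,\fB_3$ in closed form via the Euler Beta function, substitute into $C_2-C_3$, and fall back on a numerical verification for the inequality that remains. The paper's own proof stops at the substitution step, observes that $H\mapsto C_2-C_3$ is a locally holomorphic function, and invokes a numerical plot to conclude $C_2>C_3$ on $1/1000<H<1/8$ (which is in fact a restriction of the stated claim). Your version adds two genuine improvements: first, you notice that the positive factor $(\tfrac12-H)\,B(H+\tfrac12,1-2H)$ is common to both $C_2$ and $C_3$ (this is clearer once $\fB_1$ is written as $-\tfrac{1/2-H}{2H}B(H+\tfrac12,1-2H)$, equivalent to the paper's $2^{-1-2H}B(H+\tfrac12,-H)$ by the Legendre duplication formula), so the inequality reduces to the single scalar inequality
\begin{equation*}
2\,B\!\left(H+\tfrac12,\tfrac12-3H\right)>3\bigl(H+\tfrac12\bigr)\bigl(3H+\tfrac12\bigr)\left(\frac{2^{3H-1/2}}{1/2-3H}+\frac{2}{1-2H}\right),\qquad 0<H<\tfrac16;
\end{equation*}
second, you treat the endpoints $H\downarrow 0$ and $H\uparrow\tfrac16$ asymptotically (the $H=0$ ratio $\tfrac{4\pi}{3}(1+2^{-1/2})^{-1}\approx 2.45$ is correct, and at $H=1/6$ the $(\tfrac12-3H)^{-1}$ poles on both sides do indeed cancel). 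This makes the final numerical check on a compact subinterval both smaller in scope and conceptually cleaner than the paper's. That said, you are honest that the intermediate-range verification is still ``essentially numerical,'' so the overall standard of rigor is the same as the paper's; neither argument is a complete analytic proof.
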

	
	\begin{proof}
		Recall the Euler-Beta function
		\begin{equation*}
			B(z_1, z_2) = \int_0^1 v^{z_1 - 1} \cdot (1-v)^{z_2-1} dv
		\end{equation*}
		where $z_1, z_2 \in \bC$ such that $Re(z_1), Re(z_2)>0$. Further, this function can be extended holomorphically to the complex plane. Then the Euler-Beta function satisfies the well known identity
		\begin{equation*}
			B(z_1, z_2) = \frac{ \Gamma(z_1) \cdot \Gamma(z_2)}{\Gamma(z_1 + z_2) }
		\end{equation*}
		Using techniques from complex analysis one can show that    
		\begin{align*}
			\eqref{eq:C-integrals1} =& \fB_1 = 2^{-1-2H}B\Bigl(H+1/2,-H\Bigr)
			\\
			\eqref{eq:C-integrals2} =& \fB_2 = -B\Bigl(-3H- \tfrac{1}{2}, \tfrac{1}{2}+H\Bigr)
			\\
			\eqref{eq:C-integrals3} =& \fB_3 = \bigl( H - \tfrac{1}{2} \bigr) \cdot B\Bigl( 1-2H, H + \tfrac{1}{2} \Bigr)
		\end{align*}
		
		Hence
		\begin{align*}
			C_2 - C_3&= \fB_1\fB_2 \frac{1}{3H+3/2} + \fB_3 \Bigl(\frac{2^{3H-3/2}}{1/2-3H} +\frac{1}{1-2H} \Bigr)
			\\
			&=-2^{-1-2H}B\Bigl(H+1/2,-H\Bigr)B\Bigl(-3H-\tfrac{1}{2}, \tfrac{1}{2}+H\Bigr) \frac{1}{3H+3/2} 
			\\
			&\quad+ \Bigl(H-\frac12\Bigr)B\Bigl( 1-2H, H + \tfrac{1}{2} \Bigr) \Bigl(\frac{2^{3H-3/2}}{1/2-3H} +\frac{1}{1-2H} \Bigr). 
		\end{align*}
		
		\begin{figure}[!ht]
			\begin{centering}
				\includegraphics[width=0.9\columnwidth]{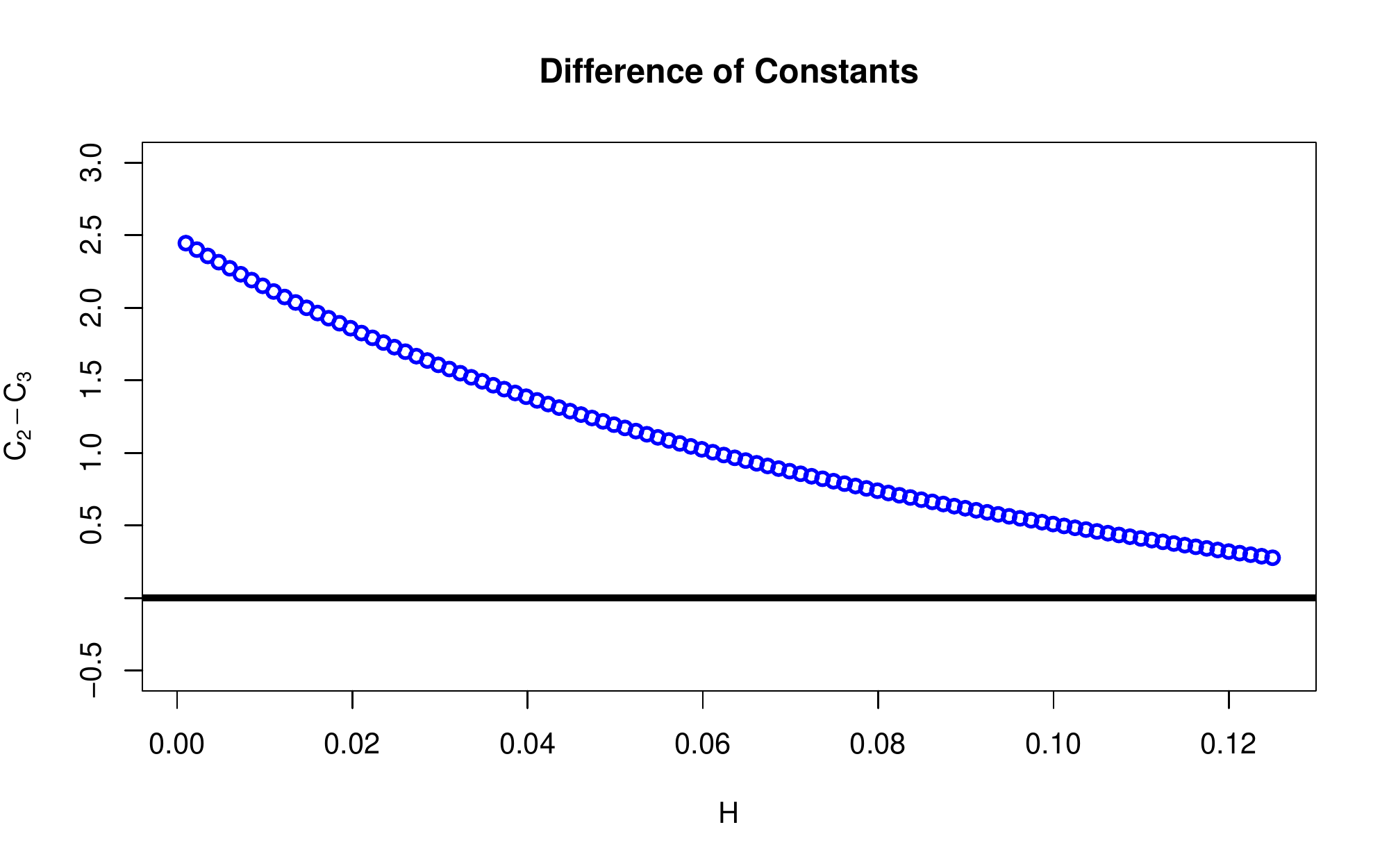}
				\caption{Numerical Computation of the difference $C_2-C_3$}
				\label{Fig_1}
			\end{centering}
		\end{figure}
		In particular, $H\mapsto C_2-C_3$ is a (locally) holomorphic map. A numerical simulation, see Figure \ref{Fig_1}, now shows that at least for $1/1000<H<1/8$ it holds that $C_2-C_3>0$.
	\end{proof}

	\begin{proof}[Proof of Proposition \ref{proposition:lowerbound}]
		We use the decomposition
		\begin{equation*}
			C\mathcal{E}_{n,3}=\fA^{(1)}+\fA^{(2)}+\fA^{(3)}.
		\end{equation*}
		By Lemma \ref{Lem_I123} it follows that
		\begin{align*}
			C\mathcal{E}_{n,3}&\ge C_2n^{-3H-1/2}-C_3n^{-3H-1/2}+o\bigl(n^{-3H-1/2}\bigr)
			\\
			&=(C_2-C_3)n^{-3H-1/2}+o\bigl(n^{-3H-1/2}\bigr).
		\end{align*}
		Thus it follows that
		\begin{gather*}
			\liminf_{n \rightarrow \infty}n^{3H+1/2}\mathcal{E}_{n,3}\ge \frac{C_2-C_3}{C}>0.
		\end{gather*}
	\end{proof}

\bibliographystyle{abbrvurl}
\bibliography{Bibliography.bib}

\end{document}